\def\mono{\ar @{ >->}}
\def\epi{\ar @{->>}}
\def\monoInline#1#2#3{\xymatrixcolsep{1.3pc}\xymatrix@1{#1:#2\mono[0,1]&#3}}
\theoremstyle{definition}
\newcommand{\img}{\mathsf{img}}
\begin{document}
\title{Moss' logic for ordered coalgebras}
\author{Marta B\'{\i}lkov\'{a}\lmcsorcid{0000-0002-3490-2083}}[a]
\address{The Czech Academy of Sciences, Institute of Computer Science, and Faculty of Arts, Charles University, Prague, Czech Republic}
\email{bilkova@cs.cas.cz}
\author{Mat\v{e}j Dost\'{a}l\lmcsorcid{0000-0002-4373-0471}}[b]
\address{Faculty of Electrical Engineering, Czech Technical University in Prague, Czech Republic}
\email{dostamat@fel.cvut.cz}
\thanks{Marta B\'{\i}lkov\'{a} acknowledges the support by the grant No.~P202/11/P304 of the Czech Science Foundation, and (for the revised version of the paper) by the grant From Shared Evidence to Group Attitudes of Czech Science Foundation and DFG, no.\ 16-07954J, and RVO: 67985807.
M.~Dost\'{a}l acknowledges the support by the Grant Agency of the Czech Republic under the grant 19-0092S
	and the grant 22-02964S
}
\keywords{}
\subjclass{}

\begin{abstract}
We present a finitary version of Moss' coalgebraic logic for $T$-coalgebras, where $T$ is a locally monotone endofunctor of the category of posets and monotone maps. The logic uses a single cover modality whose arity is given by the least finitary subfunctor of the dual of the coalgebra functor $T^\partial_\omega$, and the semantics of the modality is given by relation lifting. For the semantics to work, $T$ is required to preserve exact squares. 
For the finitary setting to work, $T^\partial_\omega$ is required to preserve finite intersections. We develop a notion of a base for subobjects of $T_\omega X$. This in particular allows us to talk about the finite poset of subformulas for a given formula. 
The notion of a base is introduced generally for a category equipped with a suitable factorisation system.

We prove that the resulting logic has the Hennessy-Milner property for the notion of similarity based on the notion of relation lifting. We define a sequent proof system for the logic, and prove its completeness.
\end{abstract}

\maketitle

\EnableBpAbbreviations

\section{Introduction}
\label{sec:intro}

\changenotsign

Shortly after the theory of coalgebras emerged as a useful conceptual tool for a uniform study of various kinds of dynamic systems, there has been an interest in finding logics expressive enough to describe the behaviour of the coalgebras up to bisimilarity.

For the case of coalgebras for endofunctors $T: \Set \to \Set$ preserving weak pullbacks, this goal has been achieved by Larry Moss in his pioneering paper~\cite{moss:cl}, where he introduces a coalgebraic logic expressive for bisimulation, uniformly in the choice of the coalgebra functor $T$. The language of the logic contains a single cover modality $\nabla_T$, whose arity is given by the coalgebra functor $T$, and whose semantics is given by relation lifting. 
The original Moss' logic is infinitary: not only it allows for infinite conjunctions and disjunctions, but also modal formulas $\nabla\alpha$ for $\alpha \in T\Lang$ can be infinitary, depending on the functor $T$. 

After that, attention turned to the question how to obtain expressive coalgebraic languages based on more standard modalities~\cite{Jac01,Kurz01}, the current mainstream approach being based on modalities described by predicate liftings~\cite{Patt03,Sch08}.  

Moss' logic itself has soon become a lively field of study. Using similar ideas as Moss, Baltag in~\cite{baltag00} describes an infinitary coalgebraic logic capturing both the notion of bisimulation and simulation for $\Set$ coalgebras. Neither Moss nor Baltag provide a complete axiomatization for their respective infinitary logics, as completeness might not be available in general, depending again on the coalgebra functor.

A finitary version of Moss' logic has consequently been explored by various authors, resulting in an almost full picture, including (among others)
its axiomatization and completeness proof parametric in the coalgebra functor~\cite{kkv12}, a structural Gentzen-style proof
theory~\cite{bpv13}, applications to automata
theory~\cite{kupk:clos05,kupk:coal08,kissvenema}, and applications
to fixpoint logics~\cite{santocanale+venema,venemafp}. The two formalisms, languages based on predicate liftings and Moss' finitary languages, have been compared in detail in~\cite{KL12}.


Much of the theory that has been built around Moss' idea uses heavily that endofunctors of $\Set$ are well understood.
This is not quite the case for the endofunctors on the categories of preorders and posets. One encounters them naturally when interested in similarity alone, rather then bisimilarity~\cite{kkvpos:12,levy}. For this reason (among others), poset extensions and liftings of set functors were investigated in~\cite{balan+kurz,balan+kurz+velebil}. Also a coalgebraic logic for poset coalgebras, based on monotone predicate liftings, has been considered and proved to be expressive in~\cite{kkvpos:12}.

It therefore seems natural to investigate the possibilities of extending the above mentioned techniques and results on finitary Moss' logic for set coalgebras beyond the category of sets. This paper is a step in this direction, developing Moss' logic in the the category $\Pos$ of posets and monotone maps,
building mainly on results obtained in~\cite{bkpv11}. Namely, we use the
existence of a functorial relation lifting for functors preserving exact squares
in the category $\Pos$. This provides us, as we
show below, with the technical background for the development of Moss' logic for poset coalgebras. 

Since the results about relation lifting
in~\cite{bkpv11} were further generalised in~\cite{bkpv12} to the enriched
case of $\V$-categories, where $\V$ is a commutative quantale, it is also possible to define Moss' logic in this level of generality. However, the commitment to the enriched setting in particular means that everything, including the language, forms a $\V$-category and if we are to be truly general, we deal with a highly unusual syntax. We feel that restricting to the case of preorderes ($\V$-categories where $\V = \bTwo$) or posets is the right level of enrichment for presentation of Moss' coalgebraic logic in full detail at the moment.

The commitment to the enriched setting of the category $\Pos$ has similar consequences worth mentioning out front. First, every object we work with is going to be a poset, including a poset of atomic propostions or the poset of formulas. Arities of connectives and of the modality will be given by finitary poset functors. This alone opens possibilities to a rather unusual definition of syntax. Second, the usual notion of finite objects will be replaced by that of finitely generated objects. This affects notions of formulas, sequents and proofs.

\paragraph{Contribution of this paper.} We present a {\em finitary\/}
Moss' coalgebraic language for coalgebras for a locally monotone functor $T:\Pos\to\Pos$ that preserves exact squares.
The logic is based on the language of logic of distributive
lattices equipped with a single cover modality $\nabla$ (and for the sake of proof theory later also with its dual modality $\Delta$). The arity of both cover modalities is the least finitary subfunctor $T^\partial_\omega$ of 
the dual of the coalgebra functor $T$, and the semantics of the cover modalities essentially uses the notion of relation lifting, proven to exist for our choice of coalgebra functors in~\cite{bkpv11}.

For the finitary setting to work, we need to adopt an appropriate notion of a \emph{base}. 
In the ``classical'' case of a finitary coalgebraic functor $T:\Set\to\Set$, the base is
a natural transformation from $T$ to the finitary powerset functor. In the context of Moss' coalgebraic language, it in particular allows us to consider subformulas of objects in $T\Lang$. 
Bases of endofunctors, in a more general categorial setting, have recently been studied in the context of reachability in automata (see e.g.\ the approach taken in~\cite{BKR19,reachability19}).
We have independently developed an appropriate notion of base of a {\em locally monotone\/}
finitary functor $T_\omega:\Pos\to\Pos$. Some additional assumptions on the category
$\Pos$ and the functor $T_\omega$ seem inevitable. First of all, one has to choose
a suitable {\em factorisation system\/} $(\EE,\MM)$ on the category $\Pos$
and the functor $T$ should behave well with regard to the factorisation system. 
We prove that if we equip $\Pos$ with the factorisation system $(\EE,\MM)$ of monotone surjections and order embeddings,
then a base can be computed for every finitary functor $T_\omega$
that preserves order embeddings and their finite intersections.
In particular, bases can be used to produce, for each subobject of the poset $T_\omega X$,
a {\em finite\/} poset of its ``generators'', e.g., a poset of subformulas of a formula of
arity $T_\omega$, or a poset of successors of a state in a coalgebra for $T_\omega$.

The resulting finitary Moss' logic has the {\em Hennessy-Milner property\/} --- it is
expressive for a notion of simulation and similarity based on the relation lifting. This notion of similarity coincides with the notion of similarity given, e.g., in~\cite{worr,hjsim:04,balan+kurz,levy}.
The result matches the similar result for Moss' logic for coalgebras on the category of sets and bisimilarity.
It can also be seen as a counterpart to the result proved in~\cite{kkvpos:12} for positive coalgebraic logics
in the category of posets, stating that the logic of all monotone predicate liftings is expressive
for any endofunctor of posets that satisfies our conditions.

We present the resulting logic in a form of a cut-free two-sided sequent proof calculus, and we prove its completeness. To be able to define such a proof system, we essentially use a dual cover modality delta and its semantical relation to the nabla modality. This part of the paper very closely follows previous work of the first author~\cite{bpv13} on proof theory of classical Moss' logic in $\Set$.

\paragraph{Comparison to related work.}
The work we present in this paper is firmly rooted in previous work on various Moss' logics. From the original work~\cite{moss:cl} of Moss it takes the idea of a modality whose arity is given by the coalgebra functor and whose semantics by the relation lifting. There are differences worth mentioning: Moss' language does not use propositional variables, and it is inherently infinitary. Moss does not provide any proof calculus for the logic. In this sense, our work is much closer to the finitary version of Moss' logic studied in detail in~\cite{kkv12}, and can be understood as the poset-based version of this logic.~\cite{kkv12} addressed an important open problem at the time and provided a sound and complete derivation system for the finitary Moss' logic in terms of a Hilbert-style calculus parametric in the coalgebra functor, and proven completeness in a one-step manner.  We differ from this approach and opt to present the logic in terms of a cut-free sequent calculus. In this we closely follow previous work by the first author on Gentzen style proof theory for Moss' logic of Kripke frames~\cite{bpv08}, and consequently proof theory for general Moss' logic parametric in the coalgebra functor~\cite{bpv13}. One difference is that~\cite{bpv13} assumes the coalgebra functor to be finitary, while we, following~\cite{kkv12}, do not do so and use finitary functors on the syntactic side of matters only. In Section~\ref{sec:conclusion} we briefly describe how our approach relates to positive fragments of logics covered in~\cite{bpv13}.

One possible motivation to move from $\Set$ to $\Pos$ is to study similarity. This can be done in $\Set$~\cite{baltag00} in case one is interested in studying bisimilarity and similarity together (first one is represented by equality on the final coalgebra, while the other by a preorder on the final coalgebra). However, as argued by Levy~\cite{levy}, if we are exclusively interested in similarity, we would want the universe of final coalgebra to be a poset: if two nodes are mutually similar, they should be equal.  We employ the notion of similarity given by relation lifting of monotone relation between posets studied in~\cite{bkpv11}. This coincides with the notion of similarity given elsewhere in literature, in particular in~\cite{worr,hjsim:04,balan+kurz,levy}.

Comparing with the logic for similarity developed in Baltag's paper~\cite{baltag00}, we can, not surprisingly, find striking resemblances. 
The functorial relation lifting (called strong relator in~\cite{baltag00}) is weakened to capture similarity. For example, while the strong relator extending the powerset funstor yields the lifting pattern of the Egli-Milner lifting,
the relator $(\PP,\subseteq)$ extending the powerset functor yields the same lifting pattern as the $\LLL$-relation lifting by the lowerset functor developed in~\cite{bkpv11} and used in this paper, namely, the first half of Egli-Milner lifting. Similarly, the relator $(\PP,\supseteq)$ yields the same lifting pattern as the $\UU$-relation lifting by the upperset functor, namely, the second half of Egli-Milner lifting. Similarly, the cover operator in~\cite{baltag00} corresponds the the modality $\nabla_\PP$ of Moss, and can be somewhat related the the modality $\nabla_{\PP^c}$ for the convex powerset functor of this paper,
the box operator in~\cite{baltag00} can be related to the modality $\nabla_{\LLL}$ of this paper, and the diamond operator can be related to the modality $\nabla_\UU$ of this paper. The similarities are not surprising because the $\Set$ powerset functor $\PP$ and the $\Pos$ functors $\LLL,\UU, \PP^c$ can meaningfully be related~\cite{balan+kurz}: while the finitary convex powerset functor is the canonical extension (the posetification) of the powerset functor, the functors  $\LLL,\UU$ can be obtained as quotients of certain liftings of the powerset functor to the category of preorders and monotone maps. 
However, the obvious differences between~\cite{baltag00} and this paper are that the logic of this paper is finitary, genuinely developed in $\Pos$, and equipped with a sound and complete derivation system.

As for the techniques applied in this paper, we rely on: (1) the results on relation lifting of monotone relation between posets developed in~\cite{bkpv11}, and do not provide new results in this respect, (2) an appropriate notion of base of a locally monotone
finitary functor $T_\omega:\Pos\to\Pos$ developed in this paper in a slightly more general categorial setting. Bases of endofunctors  have recently been studied in the context of reachability in automata (see e.g.\ the approach taken in~\cite{BKR19,reachability19}), and we relate to their work in detail throughout the Section~\ref{sec:EM-basis}.

\subsection*{Organisation of the paper}
\begin{itemize}
\item We start by listing examples of poset endofunctors and examples of ordered coalgebras in Section~\ref{sec:functors}.
\item In Section~\ref{sec:EM-basis} we study the notion of a base in a general category equipped with a factorisation system. We describe the conditions under which an endofunctor of a category $\K$ admits a base, and we show that a large class of $\Pos$-endofunctors admits a base.
\item Section~\ref{sec:lifting} introduces monotone relations and the notion of relation lifting of monotone relations. It states the main result of~\cite{bkpv11}, that is, the characterisation of locally monotone $\Pos$-endofunctors that admit a relation lifting.
\item We define the syntax and semantics of Moss' logic for ordered coalgebras in Section~\ref{sec:logic}. The syntax and semantics of this logic is parametric in the type of coalgebras involved. We prove that the logic has Hennessy-Milner property for the notion of a simulation of coalgebras defined by relation lifting.
\item In Section~\ref{sec:proof-system} we develop a sequent calculus for Moss' logic that is again parametric in the type of coalgebras involved. We give a proof of soundness and completeness of the calculus.
\item Section~\ref{sec:conclusion} concludes with possible topics for future study, and clarifies how the current setting can capture positive fragments of finitary Moss' logic in $\Set$ studied in~\cite{bpv13}.
\end{itemize}

\subsection*{Acknowledgements}
We would like to thank Ji\v{r}\'{i} Velebil for his substantial help. His insights were crucial in the development of the paper. We are also thankful for the helpful comments of the three anonymous referees.

\section{Functors and coalgebras in $\Pos$}
\label{sec:functors}

In this section we fix the basic notation and introduce the running examples that are used throughout the paper.
Firstly we introduce an important class of endofunctors of posets, namely the locally monotone Kripke polynomial functors. In the second part of this section we introduce the notion of a coalgebra and show several examples of coalgebras for Kripke polynomial functors.

\subsection{Basic notions and Kripke polynomial functors}

We denote by $\Pos$ the category of all posets and all monotone maps, i.e.\ maps $f$ satisfying $x\leq y$ implies $f(x) \leq f(y)$. A monotone map $f$ is an \emph{order embedding} if moreover $f(x) \leq f(y)$ implies $x\leq y$.
For a poset $(X,\leq)$, we denote by $X^\op$ the opposite poset $(X,\geq )$.
For every pair $X$ and $Y$ of posets, the hom-set $\Pos(X,Y)$ of monotone
maps from $X$ to $Y$ carries a natural partial order: given two monotone maps $f$ and $g$ in $\Pos(X,Y)$, we define that $f \leq g$ holds if and only if
$f(x) \leq g(x)$ holds for every $x \in X$ (i.e.\ we introduce a pointwise order). The category $\Pos$ therefore can be seen as
{\em enriched\/} in posets. In cases where we need to emphasise this extra structure
of $\Pos$, we will speak of $\Pos$ as of a 2-category.

A functor $T: \Pos \to \Pos$ is then called a {\em locally monotone functor\/} (or a {\em 2-functor\/}), if it preserves the additional structure present in $\Pos$:
that is, if for any pair $f$, $g$ of comparable morphisms in $\Pos$ the inequality $f \leq g$ implies that the inequality $T(f) \leq T(g)$ holds.

The Kripke polynomial endofunctors of $\Pos$ are those defined by
the following grammar:
\begin{equation}
\label{eq:kripke-polynomial}
T::=
E
\mid
\Id
\mid
T+T
\mid
T\times T
\mid
T^E
\mid
T^\partial
\mid
\LLL T.
\end{equation}
We give an explanation of the building blocks of the grammar.
\begin{enumerate}
\item Let $E$ be an arbitrary poset. Slightly abusing the notation, we denote by $E$ the constant-at-$E$ functor. The functor $\Id$ is the identity functor.
\item Given two Kripke polynomial functors, their product and coproduct (in the category of poset endofunctors) is again a Kripke polynomial functor.
\item Given a poset $E$, we define that $T^E(X)=(TX)^E$ is the poset
of all monotone maps from $E$ to $TX$ with the pointwise ordering. The obvious action on morphisms makes $T^E$ into a functor.
\item Given a functor $T$, the functor $T^\partial$ is the {\em dual\/} of $T$, defined by putting
\[
T^\partial X=(T X^\op)^\op
\]
and again extending to the obvious action on morphisms. (Recall that $X^\op$ denotes the \emph{opposite} poset with the same underlying set satisfying $x \leq y$ in $X^\op$ precisely when $y \leq x$ in $X$.)
\item We denote by $\LLL$ the lowerset functor, with
$\LLL X=[X^\op,\bTwo]$ being the poset of all lowersets
on the poset $X$, and the order is given by inclusion. (By $\bTwo$ we denote the two-element chain.)
The lowerset functor $\LLL$ acts on morphisms
as the direct image followed by closure: given $f: X \to Y$, a lowerset
$l \in \LLL X$ is mapped by
$\LLL f$ to ${\downarrow} f[l]$, i.e., to the lowerset
generated by the image $f[l]$ of $l$.
\end{enumerate}
All Kripke polynomial endofunctors of $\Pos$ are locally monotone, as seen by a routine induction.

Observe that the dual of $\LLL$ is the upperset
functor $\UU$: for any poset $X$ we have that $\LLL^\partial X=[X,\bTwo]^\op$,
and the latter is the set of all uppersets on $X$,
ordered by reversed inclusion.
Therefore, the upperset functor $\UU$ is Kripke polynomial as well.
For a monotone map $f: X \to Y$, the map $\UU f: \UU X \to \UU Y$ sends an upperset $u \in \UU X$ to the upperset
${\uparrow} f[u]$
generated by the image $f[u]$ of $u$.

Apart from Kripke polynomial functors, we will list an additional functor we will briefly need at the very end of the paper: the convex powerset functor $\PP^c$. For a poset $X$, $\PP^cX$ is the poset of convex subsets of $X$ (i.e.\ subsets $w$ such that if $x,y\in w$ and $x\leq z\leq y$, then $z\in w$) partially ordered by the Egli-Milner preorder: $w \leq^{EM} w'$ iff
\[
\forall x\in w \ \exists x'\in w' (x\leq x') \mbox{ and } \forall x'\in w' \ \exists x\in w (x\leq x').
\]
Its action on morphisms $\PP^cf$ is the direct image map. Observe that this functor is self-dual: $(\PP^c)^\partial = \PP^c$.
This functor comes from the powerset functor $\PP: \Pre \to \Pre$ in the category of preorders and monotone maps, via the $Q \dashv I$ adjunction (of the quotient and inclusion functors) as $Q\PP I$~\cite{bkpv11}. 

We will systematically use \emph{finitary functors} on the syntactic side of the logics, i.e.\ functors that are completely determined by their action on finite posets. Formally, finitary functors are those preserving filtered colimits (see~\cite{kelly:structures}), but we will use the following elementary definition in this paper:

For a functor $T$, let $T_\omega$ be its least finitary subfunctor, with  $\monoInline{\nu^T}{T_\omega}{T}$ being the natural inclusion. Categorically, $T_\omega$ is the finitary coreflection of the functor $T$.
It can be defined in an elementary manner as follows: let $m$ range through finite subobjects 
$\monoInline{m}{Z}{X}$ of $X$ (i.e.\ with $Z$ being a finite poset) and put 
\[
T_\omega X := \bigcup\limits_{\monoInline{m}{Z}{X}} \img(Tm). 
\]
For a map $f: X \to Y$, $T_\omega f$ is the restriction of $Tf$ to $T_\omega Y$. 
\[
\xymatrixcolsep{1pc}
\xymatrix{
T_\omega X
\ar@{>->}[0,2]^{\nu_{X}^{T}}
\ar@{>->}[1,0]_{T_\omega f}
&
&
TX
\ar@{>->}[1,0]^{Tf}
\\
T_\omega Y
\ar@{>->}[0,2]^{\nu_{Y}^{T}}
&
&
T Y
}
\]
The functors $T$ and $T_\omega$ agree on finite posets (i.e., the corresponding natural inclusion is the identity on finite posets). We say that $T$ is \emph{finitary} if $T$ and $T_\omega$ coincide.

We will often explicitly use the finitary lowerset functor $\LLL_\omega$. For a poset $X$, the poset $\LLL_\omega X$ consists of
{\em finitely generated\/} lowersets on $X$ ordered by inclusion. Analogously, the finitary upperset functor $\UU_\omega$ assigns to a poset $X$ the poset of its finitely generated uppersets, ordered by reversed inclusion. Both functors are finitary subfunctors of the usual lowerset and upperset functors. We will use the notation $g(l)$ for the minimal discrete finite poset (i.e.\ finite set) of generators of the lowerset $l \in \LLL_\omega X$, and in the same way $g(u)$ will denote the minimal finite set of generators for the upperset $u \in \UU_\omega X$. When a lowerset is generated by a single element, we call it a \emph{principal} lowerset and denote such principal loweset generated by $x$ by $x{\downarrow}$. Similarly, principal upperset generated by $x$ is denoted by $x{\uparrow}$.

Similarly, the finitary convex powerset functor $\PP^c_\omega$ assigns, to a poset $(X,\leq)$, the poset of the finitely generated convex subsets of $X$ ordered by the Egli-Milner partial order $\leq^{EM}$. The functor $\PP^c_\omega$ is the natural counterpart --- the posetification --- of the finitary powerset functor $\PP_\omega: \Set \to \Set$ (see~\cite{balan+kurz}).

\subsection{Coalgebras for a functor}
Given a (not necessarily locally monotone) functor $T:\Pos\to\Pos$, we define coalgebras and their homomorphisms in
the usual manner.

Explicitly, a {\em coalgebra\/} for $T$ is a monotone map $c: X\to TX$,
and a monotone map
$h:X\to Y$ is a homomorphism from $c:X\to TX$ to $d:Y\to TY$ if the following
square
\[
\xymatrix{
X
\ar[0,1]^-{c}
\ar[1,0]_{h}
&
TX
\ar[1,0]^{Th}
\\
Y
\ar[0,1]_-{d}
&
TY
}
\]
commutes.

There are various interesting structures from computer science and logic that can be modelled as coalgebras for a suitable Kripke polynomial functor. We show some of the examples below.

\begin{exa}
\label{ex:coalgebras}
Examples of ordered coalgebraic structures.
\begin{enumerate}
\item
Consider the functor $T = A \times \Id$ for some fixed poset $A$. Coalgebras for the functor $T$ are the monotone maps $c = \langle {\mathit{out}}, {\mathit{next}}\rangle: X \to A \times X$ that can be seen as a particularly trivial kind of automata with the set of states $X$ and an (ordered) output alphabet $A$. Hence the monotone map $\mathit{out}:X\to A$ produces an output in $A$ for every state in $X$ and the monotone map $\mathit{next}:X\to X$ produces the next state of the automaton.

\item
A \emph{complete deterministic ordered automaton} (as considered in Chapter~2 of~\cite{pin04}) 
is a B\"{u}chi automaton $(X,A,E,I,F)$ where $X$ is equipped with a partial order $\leq$, $A$ is the alphabet, $E$ the set of $A$-labelled (deterministic) transitions on $X$ of the form $x\cdot a = y$, and $F$ is an order-ideal of final states. 
For each $x,y,y'\in X$ and $a\in A$, $x\leq y$ and $y\cdot a$ is defined, implies $x\cdot a$ is defined and $x\cdot a\leq y\cdot a$.

In a B\"{u}chi automaton, the words in the semigroup $A^+$ can bee seen to generate a morphism to the semigroup of binary relations on $X$, mapping a word $w$ to the accessibility relation labelled by (the letters) of $w$. We can think of a state $x$ as accepting $w$ if there is a $w$-labelled path from $x$ to a final state. In this sense, the partial order on the states of automata from this example corresponds to the reverse inclusion of the accepted languages: if $x\leq y$ and $y$ accepts $w$, then $x$ accepts $w$. The order on states in a precise technical sense generates a congruence  order on the semigroup $A^+$.

A complete deterministic ordered automaton can be seen as a coalgebra of the form $c:X\to X^A\times\bTwo$ where $X$ is a partial order of states, $A$ the (discrete) input alphabet, the left-hand component of the monotone map $c$ is a monotone transition map $x\mapsto (a\mapsto x\cdot a)$ and its right-hand component is a monotone map from $X$ to the two-element chain $\bTwo$, determining the accepting states. (In the poset setting, there would be no problem in considering the more general case of ordered automata with an \emph{ordered} input alphabet; we however stick to the definition already used in literature for this example.)

Ordered automata have been considered as a counterpart of ordered semigroups in~\cite{pin04}, to deal with some natural classes of recognizable sets of words closed under finite unions and intersection, but not under complement. 

\item
Consider the functor $T = \LLL_\omega^A \times \bTwo$
(with $A$ being discrete). A coalgebra
$c: X \to (\LLL_\omega X)^A \times \bTwo$
is a nondeterministic automaton consisting of an upperset
$F \subseteq X$ of accepting states, and a set $C$ of transitions
of the form $(x,a,y)$ with $x,y \in X$ and $a \in A$, satisfying
two confluence properties:
\begin{enumerate}
\item
Whenever $(x,a,y) \in C$ and $x \leq x'$ in $X$, it follows
that $(x',a,y) \in C$.
\item
Whenever $(x,a,y) \in C$ and $y \leq y'$ in $X$, it follows
that $(x,a,y') \in C$.
\end{enumerate}
Let moreover $d: Z \to (\LLL_\omega Z)^A \times \bTwo$ be
an automaton with an upperset $G \subseteq Z$ of accepting states
and a set $D$ of transitions. A monotone map $h: X \to Z$
is a homomorphism between $c$ and $d$ if the following three
conditions are satisfied:
\begin{enumerate}
\item
$h[F] = G$ and $h[X \setminus F] \cap G = \emptyset$
($h$ preserves the set of accepting states).
\item
For every transition $(x,a,y) \in C$ there is a transition
$(h(x),a,h(y)) \in D$.
\item
For every transition $(h(x),a,z) \in D$ there is a
transition $(x,a,y) \in C$ with $z \leq h(y)$.
\end{enumerate}
We will call coalgebras for the functor $\LLL_\omega^A \times \bTwo$ \emph{lowerset automata}.

\item \emph{Frames for distributive substructural logics}~\cite{rest:intr00} are frames
consisting of a poset $X$ of states, together with a ternary relation on $X$,
satisfying the following monotonicity condition:
\[
R(x,y,z)
\mbox{ and }
x'\leq x
\mbox{ and }
y'\leq y
\mbox{ and }
z\leq z'
\mbox{ implies }
R(x',y',z').
\]
In the notation of monotone relations of
Section~\ref{sec:lifting} below, $R$ is a monotone relation
of the form
$
\xymatrix@1{
R:
X
\ar[0,1]|-{\object @{/}}
&
X\times X
}
$.

Given a set $\At$ of atomic formulas and a monotone valuation
$X \times \At \to \bTwo$ of the atomic formulas,
we can interpret the fusion and
implication connectives as follows:
\begin{eqnarray*}
x\Vdash a_0\otimes a_1
& \mbox{iff} &
\ (\exists x_0)(\exists x_1)
(R(x_0,x_1,x)\mbox{ and } x_0\Vdash a_0\mbox{ and }x_1\Vdash a_1)
\\
x\Vdash a\rightarrow b
& \mbox{iff}
&\ (\forall y)(\forall z)\
((R(x,y,z)\mbox{ and } y\Vdash a) \mbox{ implies } z\Vdash b)
\end{eqnarray*}
It has been shown in~\cite{bhv12} that such frames can be treated as coalgebras in a natural way so that the coalgebraic morphisms coincide with the frame morphisms.
In particular, $R$ generates a natural coalgebraic structure $c_\otimes:X\to\LLL(X\times X)$ for a locally monotone functor $\LLL(\Id\times\Id)$, when we understand $R$ as interpreting the fusion connective. We call $R$ finitary when the generated coalgebraic structure is in fact of the form $c_\otimes:X\to\LLL_\omega(X\times X)$, i.e.\ when we can use the finitary lowerset functor.
\item
Consider a poset $X$ equipped with a monotone relation
$
\xymatrix@1{
R:
X
\ar[0,1]|-{\object @{/}}
&
X
}
$.
This can be seen as a frame for modal logic with adjoint modalities.
More in detail, such frames with a monotone valuation enable
one in principle to interpret conjunction, disjunction, and two monotone
modalities: a forward-looking $\Box$ and
a backward-looking $\blacklozenge$. We put
\begin{eqnarray*}
x\Vdash \Box a
& \mbox{iff} &
\ (\forall y)\ (R(x,y) \mbox{ implies } y\Vdash a)
\\
x\Vdash \blacklozenge a
& \mbox{iff} &
\ (\exists y)\ (R(y,x)\mbox{ and } y\Vdash a)
\end{eqnarray*}
The modalities $\Box$ and $\blacklozenge$ are adjoint in a sense that will be explained in~Remark~\ref{rem:semantic-preorder}.
The relation $R$ generates two coalgebras
\[
c: X \to \UU X \ \ \mbox{and}\ \ d: X \to \LLL X
\]
defined by $c(x)=\{y\ |\ R(x,y)\}$ and $d(x)=\{y\ |\ R(y,x)\}$. We say that the frame $(X,R)$ is \emph{finitary} if $R$ in fact generates coalgebras
\[
c: X \to \UU_\omega X \ \ \mbox{and}\ \ d: X \to \LLL_\omega X
\]
for the finitary functors $\UU_\omega$ and $\LLL_\omega$.

While the $\UU$-coalgebras can be seen as models of the box fragment of positive modal logic (i.e.\ the logic one obtains by extending the logic of distributve lattices with a box operator), the $\LLL$-coalgebras can be seen as models of the logic of distributive lattices extended with a backward looking diamond operator (which is the common fragment of substructural epistemic logics considered in~\cite{bmp16}).

\item Consider a $\PP^c$-coalgebra $c: X \to \PP^c$, assigning to a state $x$ a convex subset $c(x)$ so that whenever $x\leq y$, we have $c(x) \leq^{EM} c(y)$. Equipped with  a monotone valuation
$X \times \At \to \bTwo$ assigning uppersets to the atomic formulas, it enables us to interpret (conjunction and disjunction and) the box and diamond modalities:
\begin{eqnarray*}
x\Vdash \Box a
& \mbox{iff} &
\ (\forall y)\ (R(x,y) \mbox{ implies } y\Vdash a)
\\
x\Vdash \Diamond a
& \mbox{iff} &
\ (\exists y)\ (R(x,y)\mbox{ and } y\Vdash a),
\end{eqnarray*}
creating the following operations on $\UU X$: 
\begin{align*}
\Box_R(u) &:= \{x \ |\ c(x)\subseteq u \}\\
\Diamond_R(u) &:=  \{x \ |\ c(x)\cap u \neq \emptyset \}
\end{align*}
When seen as a relational assignment on the underlying set given by $xRy := y\in c(x)$, the above monotonicity condition does not make $R$ monotone per se, but translates to the following two conditions on $R$, familiar from the theory of partially ordered frames $(X,\leq)$ with valuations taking values in $\UU X$ (see e.g.~\cite{palmigiano04}):
\[
\leq \cdot R \subseteq R \cdot \leq \ \mbox{ and }\ \geq \cdot R \subseteq R \cdot \geq.
\]
The first condition corresponds to $\UU X$ being closed under $\Box_R$, the second condition corresponds to $\UU X$ being closed under $\Diamond_R$, thus ensuring the persistence of the semantics. 

It is not hard to work out that this is an alternative semantics to the \emph{positive modal logic}, i.e.\ the positive fragment of modal logic K. 
\item \emph{Semantics for description logic EL.} The small description logic EL is a lite description logic whose language allows for conjunction, existential restrictions, and the top-concept (see e.g.~\cite{Baader03}). While its semantics itself is not based on a poset, the syntax employs concept and role subsumptions in the form of inclusions. 

In a simple case, and when formulated in the poset setting, the syntax works as follows: we fix a poset $N^c$ of concept names ordered by concept subsumptions of the form $A\sqsubseteq B$, a poset $N^r$ of role names ordered by role subsumptions of the form $r\sqsubseteq s$, and generate the following grammar of concepts:
\[
C:=\ A\ |\ \top \ |\ C\sqcap D\ |\ \exists r.C
\]
The semantics is based on a discrete poset (i.e.\ a set) $\Delta$ together with: (ii) interpretations $C^I$ of concept names by lowersets of $\Delta$ (i.e.\ subsets of $\Delta$) respecting their order: $C\sqsubseteq D$ implies $C^I\subseteq D^I$, (ii) interpretations $r^I$ of role names by binary relations on $\Delta$, again respecting their order: $r\sqsubseteq s$ implies $r^I\subseteq s^I$. We can see each structure of this kind as a coalgebra
\[
c: \Delta \to (\LLL \Delta)^{N^r} \times \bTwo^{N^c}
\]
which, for each $d\in\Delta$, assigns: for each role name $r$ the lowerset (i.e.\ subset) $\{e\ |\ (d,e)\in r^I\}$, and for each concept name $C$ the value $1$ if $d\in C^I$, and the value $0$ otherwise. Both assignments are monotone. 
%
\end{enumerate}
\end{exa}

We shall use some of the above coalgebras as running
examples to illuminate concretely the general theory.

\section{Base in a category $\K$ with a factorisation system}
\label{sec:EM-basis}

When studying finitary coalgebraic logics for finitary endofunctors
of $\Set$, the notion of a {\em base\/} is of central
importance. On the level of models, the base allows us
to produce a finite set of successors of a state of the model.
On the level of the logical syntax, the base yields a finite set
of subformulas of a given formula.

Technically, given a functor $T:\Set\to\Set$, a base is a natural
transformation $\base^T_X: TX \to \PP_\omega X$, assigning to
each element $\alpha$ of $TX$ a finite subset of $X$,
called the {\em base\/} of $\alpha$. The idea is to define
$\base^T_X(\alpha)$ as the smallest finite $Z\subseteq X$
such that $\alpha\in TZ$ holds. For this idea to work,
$T$ has to satisfy additional conditions:  it has been shown
in~\cite{gumm05} that when $T$ preserves intersections and
weakly preserves inverse images, then the base exists and
it is natural. For various properties of bases for endofunctors
of $\Set$ we refer to, e.g.,~\cite{kkv12}.

In this section we explain how to define and compute a
{\em base for an endofunctor\/} $T:\K\to\K$ of an arbitrary
category $\K$ that is equipped with a factorisation
system $(\EE,\MM)$ for morphisms. The factorisation system
is used in order to be able to imitate the powerset:
the set of all subsets is
going to be replaced by the poset of $\MM$-subobjects.
\begin{rem}[Comparison to related work]
In~\cite{reachability19,BKR19}, the notion of base has been used to relate reachability within a coalgebra to a monotone operator on the (complete) lattice of subobjects of the carrier of
the coalgebra. 
\cite{BKR19} assumes the category $\K$ to be complete (which may limit applications), while~\cite{reachability19} (and us) do not do so.
Both~\cite{reachability19,BKR19} however define base for any $f: X \to TY$, while we do so only for $\MM$-subobjects $\monoInline{\alpha}{Z}{TX}$.

Both~\cite{reachability19,BKR19} developed the notion independently, almost at the same time: the relation is described in Remark 5.17 in~\cite{reachability19}. We have developed the notion independently of the two (a preliminary version of the current paper has been available on arXiv since 2019), and only got aware of their related work recently. Credit for our contribution should go mainly to J.~Velebil, who closely cooperated on a previous version of this paper.
\end{rem}
By applying our definition of a base to the epi-mono
factorisation system on the category of sets and mappings,
we then obtain the above notion of a base, see
Example~\ref{ex:base-in-sets}. By applying
our definition of a base to the category of all posets
and all monotone maps, equipped with a suitable
factorisation system, we will be able to speak of bases
in the ordered case as well. See Examples~\ref{ex:(E,M)-cats}
and~\ref{ex:bounded-cats} below.

\subsection{Factorisation systems}
For more details on the general theory of factorisation
systems we refer to the book~\cite{ahs}.

\begin{defi}
Suppose $\K$ is a category and let $\EE$ and $\MM$ be classes
of morphisms in $\K$. We say that $(\EE,\MM)$ is a
{\em factorisation system on $\K$\/}
(and that $\K$ is an {\em $(\EE,\MM)$-category\/}),
provided the following conditions are satisfied:
\begin{enumerate}
\item
The classes $\EE$ and $\MM$ are closed under composition with isomorphisms.
\item
Every morphism in $\K$
can be factorised as a morphism in $\EE$ followed by a morphism in $\MM$.
\item
$\K$ has the $(\EE,\MM)$-diagonalisation property, i.e., for every
commutative square
\[
\xymatrix{
A
\ar[0,1]^-{e}
\ar[1,0]_{u}
&
B
\ar[1,0]^{v}
\\
X
\ar[0,1]_-{m}
&
Y
}
\]
where $e$ is in $\EE$ and $m$ is in $\MM$, there exists a unique
diagonal $d:B\to X$, making both triangles
in the diagram
\[
\xymatrix{
A
\ar[0,1]^-{e}
\ar[1,0]_{u}
&
B
\ar[1,0]^{v}
\ar[1,-1]_{d}
\\
X
\ar[0,1]_-{m}
&
Y
}
\]
commutative.
\end{enumerate}
A factorisation system $(\EE,\MM)$ is called {\em proper\/}
if members of $\EE$ are (some) epimorphisms and members of $\MM$
are (some) monomorphisms.
\end{defi}

\begin{rem}
Any factorisation system $(\EE,\MM)$ on any category $\K$ satisfies the following (see Proposition~14.6 of~\cite{ahs}):
\begin{enumerate}
    \item The classes $\EE$ and $\MM$ are closed under composition.
    \item The intersection $\EE \cap \MM$ is the class of all isomorphisms in $\K$.
\end{enumerate}
\end{rem}

\begin{exa}
\label{ex:(E,M)-cats}
The following are examples of $(\EE,\MM)$-categories $\K$:
\begin{enumerate}
\item
$\K$ is the category $\Set$ of all sets and mappings, $\EE$ is the class
of all epis (=surjections) and $\MM$ is the class of all monos
(=injections).
\item
$\K$ is the category $\Pos$ of all posets and monotone mappings,
$\EE$ is the class of all monotone surjections and
$\MM$ is the class of all monotone maps that reflect the order.
\item
If $\K$ is the category $\Pre$ of all preorders and monotone mappings,
one can choose from at least one of the following three prominent
factorisation systems:
\begin{enumerate}
\item
$\EE$ is the class of all monotone maps that are bijective
on the level of elements and $\MM$ is the class of all monotone
maps that reflect the order.
\item
$\EE$ is the class of all monotone surjections and
$\MM$ is the class of all
monotone injections that reflect the order.
\item
$\MM$ is the class of monotone injections. The class
$\MM$ is the class of all monomorphisms, hence the
corresponding $\EE$ necessarily coincides with the
class of all strong epimorphisms, see~\cite{ahs}.
\end{enumerate}
\end{enumerate}
All of the above cases, except for (3a), are
proper factorisation systems.
\begin{enumerate}
\setcounter{enumi}{3}
\item
If $\K$ is an $(\EE,\MM)$-category, then $\K^\op$
is an $(\MM,\EE)$-category. If the factorisation
system $(\EE,\MM)$ on $\K$ is proper, so is
the factorisation system $(\MM,\EE)$ on $\K^\op$.
\end{enumerate}
\end{exa}

For a factorisation system $(\EE,\MM)$ on $\K$, the morphisms in $\MM$ will serve as ``subobjects''. It is however more useful to define subobjects as {\em equivalence classes}. More precisely: we fix an object $X$ and we denote
a morphism in $\MM$ into $X$ by $\monoInline{m}{Z}{X}$.
We define a preorder
on such morphisms by putting $m \subseteq m'$ if there
is a factorisation
\[
\xymatrixcolsep{1pc}
\xymatrix{
Z
\mono[0,2]
\mono[1,1]_{m}
&
&
Z'
\mono[1,-1]^{m'}
\\
&
X
&
}
\]
The preorder defines an equivalence $m\sim m'$
iff $m \subseteq m'$ and $m' \subseteq m$. We denote the
equivalence classes by $[m]$.
If the category $\K$ has a final object $\bOne$, we denote the situation
\[
\xymatrixcolsep{1pc}
\xymatrix{
\bOne
\mono[0,2]
\mono[1,1]_{x}
&
&
Z
\mono[1,-1]^{m}
\\
&
X
&
}
\]
by writing that $x \in m$ holds.

\begin{defi}
The equivalence class $[m]$ of $\monoInline{m}{Z}{X}$
is called
an {\em $\MM$-subobject\/} of $X$.

The class of all subobjects of $X$ is denoted by
\[
\Sub_\MM(X)
\]
and we consider it ordered by $[m] \subseteq [m']$ if
$m \subseteq m'$.
\end{defi}

\begin{defi}
We say that an $(\EE,\MM)$-category
$\K$ is {\em $\MM$-wellpowered},
if every $\Sub_\MM(X)$ has a set of elements.
And $\K$ is {\em $\EE$-cowellpowered}, if
$\K^\op$ (equipped with the factorisation system
$(\MM,\EE)$, see Example~\ref{ex:(E,M)-cats})
is $\EE$-wellpowered.
\end{defi}

\begin{rem}
\label{rem:sublattice}
\mbox{}\hfill
\begin{enumerate}
\item
We will abuse the notation and write $m$ instead of $[m]$, and $m \subseteq m'$ instead of $[m] \subseteq [m']$.
\item
An $(\EE,\MM)$-category need not be $\MM$-wellpowered,
even for a proper factorisation system $(\EE,\MM)$.
Indeed, suppose $\K$ is the class of all ordinals with the reversed
order (so that the ordinal $0$ becomes the top element) and consider $\K$
as a category. Put $\EE$ to be the class of identity morphisms,
$\MM$ to be the class of all morphisms. The factorisation
system $(\EE,\MM)$ on $\K$ is proper, since every morphism in
$\K$ is both a monomorphism and an epimorphism.
Then it follows that $\Sub_\MM(0)$ is a proper class.
\item
If $\Sub_\MM(X)$ has a set of elements,
then it is a complete lattice whenever $\K$ has enough limits.
In fact, it suffices to establish
the existence of infima in $\Sub_\MM(X)$. The infima can be computed
as limits of diagrams of the form
\[
\xymatrixrowsep{.1pc}
\xymatrix{
\vdots
&
&
\\
Z_i
\mono[1,1]^{m_i}
&
&
\\
\vdots
&
X
&
(i,j\in I)
\\
Z_j
\mono[-1,1]_{m_j}
&
&
\\
\vdots
&
&
}
\]
that are called {\em wide pullbacks of $\MM$-morphisms\/}
(or, {\em intersections of $\MM$-subobjects\/}).

We denote by $\monoInline{p_i}{Z}{Z_i}$ the limit
cone. Every $p_i$ belongs to $\MM$ by the general
properties of (proper) factorisation systems,
see~\cite{ahs}, Proposition~14.15. The composite
$\monoInline{m_i \cdot p_i}{Z}{X}$
represents the
$\MM$-subobject that is the intersection of $m_i$'s.
We denote the intersection by
\[
\xymatrix{
\bigcap_{i\in I} m_i:Z\mono[0,1] & X
}
\]
The existence of suprema in $\Sub_\MM(X)$ then
follows from the existence of infima by the usual
argument.
\item
When, moreover, {\em coproducts\/} exist in $\K$, suprema
in $\Sub_\MM(X)$ can be computed by an explicit formula.
Namely, the supremum $\bigcup_{i\in I} m_i$
(called the {\em union\/} of $m_i$'s) can be computed
as the $\MM$-part
\[
\xymatrix@1{\bigcup_{i\in I} m_i : C\mono[0,1] & X}
\]
of the $(\EE,\MM)$-factorisation of the unique cotupling morphism
$m:\coprod_{i\in I} Z_i \to X$ as follows:
\[
\xymatrixrowsep{.5pc}
\xymatrix{
\vdots
&
&
&
\\
Z_i
\ar[1,1]^-(.6){\inj_i}
\ar@{ >->} `r[rrrrd]^-{m_i} [rrrrd]
&
&
&
\\
\vdots
&
\coprod\limits_{i\in I}Z_i
\epi[0,1]
&
C
\mono[0,2]^-{\bigcup_{i\in I}m_i}
&
&
X
\\
Z_j
\ar[-1,1]_(.6){\inj_j}
\ar@{ >->} `r[rrrru]_-{m_j} [rrrru]
&
&
&
&
\\
\vdots
&
&
&
}
\]
\item When dealing with a pair $\monoInline{m}{Z}{X}$, $\monoInline{n}{Z}{X}$ of subobjects of $X$, we denote their intersection and union as $m \cap n$ and $m \cup n$, respectively.
\end{enumerate}
\end{rem}

\subsection{General bases}
\label{subsec:general-bases}
\begin{asm}
\label{ass:1}
For the rest of this section we assume the following
three conditions:
\begin{enumerate}
\item
$\K$ is an $(\EE,\MM)$-category
for a proper factorisation system $(\EE,\MM)$.
\item
$\K$
is $\MM$-wellpowered (i.e., we assume that every
$\Sub_\MM(X)$ has a set of elements).
\item
$T:\K\to\K$ is a functor which preserves $\MM$-morphisms.
\end{enumerate}
\end{asm}
The reader might compare our assumption~\ref{ass:1} (and what we have said in the item (6) of Remark~\ref{rem:sublattice} above) with Assumption~5.1 of~\cite{reachability19}, where $\K$ is assumed to be complete, wellpowered, with arbitrary (small) coproducts, equipped with a factorisation system $(\EE,\MM)$ where $\MM$ is a class of monomorphisms.

Also, compare with Assumption of Proposition 12 in~\cite{BKR19}, where $\K$ is assumed to be complete and well-powered.
\begin{defi}
\label{def:base}
An $\MM$-subobject $\monoInline{\base_X(\alpha)}{\wt{Z}}{X}$
is called a {\em base\/} of an
$\MM$-subobject $\monoInline{\alpha}{Z}{TX}$
provided that
the following holds for every $\monoInline{m}{Z'}{X}$

\begin{center}
$\base_X(\alpha) \subseteq m$ holds in $\Sub_\MM(X)$

 \quad
iff \quad

$\alpha \subseteq Tm$ holds in $\Sub_\MM(TX)$.
\end{center}
\end{defi}
Again, the reader might compare this with Definition~5.7 of~\cite{reachability19}, where $T$ is assumed to preserve monos (like we do here), but base is defined for every arrow $f: X\to TY$, while we do so only for subobjects. For the class of all monomorphisms such notion was first introduced by Alwin Block~\cite{Block12} and called a base.

Compare also with~\cite[Proposition 12]{BKR19}, where for any endofunctor $T$ which preserves (wide) intersections, base of every arrow $f: X\to TY$ is shown to exist.

\begin{rem}
\label{rem:3.7.bis}
Since $\Sub_\MM(X)$ is a poset, the base $\base_X(\alpha)$
is determined uniquely, whenever it exists. The base satisfies the \emph{unit} property
\[
\alpha \subseteq T\base_X(\alpha)
\]
for every $\alpha$ in $\Sub_\MM(TX)$,
and satisfies the \emph{counit} property
\[
\base_X(Tm) \subseteq m
\]
for every $m$ in $\Sub_\MM(X)$.
These properties are immediate consequences of the definition of base.
\end{rem}

We shall characterise functors admitting a base in Proposition~\ref{prop:preserves-intersections=>base}. Before we state the result, let us show that some functors do not admit a base.

\begin{exa}
Consider the category $\Pos$ equipped with the factorisation system of Example~\ref{ex:(E,M)-cats} (2), and the lowerset functor $\LLL:\Pos\to\Pos$.
We claim that
\[
\monoInline{\alpha}{\bOne}{\LLL \mathbb{Z}},
\quad
*\mapsto \mathbb{Z},
\]
where $\mathbb{Z}$ is the poset of integers with
the usual order, does not admit a base.

Indeed, for any $\monoInline{m}{Z'}{\mathbb{Z}}$,
the inequality $\alpha \subseteq \LLL m$ states that
(the image of) $Z'$ is a cofinal subset of $\mathbb{Z}$.
Had $\monoInline{\base(\alpha)}{\wt{\bOne}}{\mathbb{Z}}$ existed,
its image would yield the least cofinal
subset of $\mathbb{Z}$ --- a contradiction.
\end{exa}

Recall that meet-preserving maps between complete lattices have left adjoints. The proof of the following characterisation is a simple application of this fact (cf. Proposition~5.9 of Wissmann et al~\cite{reachability19}).

\begin{prop}
\label{prop:preserves-intersections=>base}
Suppose $\K$ has intersections of $\MM$-subobjects.
For $T:\K\to\K$, the following are equivalent:
\begin{enumerate}
\item
For every $X$ and every $\monoInline{\alpha}{Z}{TX}$, the base
$\monoInline{\base_X(\alpha)}{\wt{Z}}{X}$ exists.
\item
$T$ preserves intersections of $\MM$-subobjects.
\end{enumerate}
Moreover, under any of the above conditions,
$\base_X(\alpha)$ can be computed as the intersection
\[
\bigcap_{\alpha \subseteq Tm} m
\]
of all $\MM$-subobjects $\monoInline{m}{Z'}{X}$ such
that $\alpha \subseteq Tm$ holds in $\Sub_\MM(TX)$.
\end{prop}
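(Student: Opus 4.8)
The plan is to recognise the defining property of the base in Definition~\ref{def:base} as a \emph{Galois connection} (an adjunction between posets), and then to invoke the standard fact recalled just above the statement: a meet-preserving map between complete lattices has a left adjoint.

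First I would record the ambient structure. By $\MM$-wellpoweredness together with the assumed existence of intersections of $\MM$-subobjects, each $\Sub_\MM(X)$ is a complete lattice (Remark~\ref{rem:sublattice}(3)), its meets being the intersections $\bigcap_i m_i$. Next, because $T$ preserves $\MM$-morphisms and is functorial, the assignment $m \mapsto Tm$ respects the preorder $\subseteq$ on $\MM$-subobjects and so descends to a monotone map $\overline{T}\colon \Sub_\MM(X) \to \Sub_\MM(TX)$. With this notation the base property reads, for a fixed $\alpha$ in $\Sub_\MM(TX)$,
\[
\base_X(\alpha) \subseteq m \quad\text{iff}\quad \alpha \subseteq \overline{T}(m),
\]
which is precisely the assertion that $\base_X(\alpha)$ is the value at $\alpha$ of a left adjoint to $\overline{T}$. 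Hence ``$\base_X(\alpha)$ exists for every $\alpha$'' is equivalent to ``$\overline{T}$ admits a left adjoint $\base_X$''.

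Then the equivalence of (1) and (2) is a direct consequence. For (1)$\Rightarrow$(2), if the left adjoint $\base_X$ exists for every $X$, then its right adjoint $\overline{T}$ must preserve all meets, which is exactly to say that $T$ preserves intersections of $\MM$-subobjects. For (2)$\Rightarrow$(1), if $T$ preserves intersections then $\overline{T}$ preserves all meets; since its domain $\Sub_\MM(X)$ is a complete lattice, the canonical construction yields a left adjoint, whose value at $\alpha$ is by the displayed equivalence the base. The explicit formula is just the usual left-adjoint-via-meets expression: $\base_X(\alpha) = \bigcap\{\, m : \alpha \subseteq \overline{T}(m)\,\} = \bigcap_{\alpha \subseteq Tm} m$, where meet-preservation of $\overline{T}$ guarantees that this intersection itself satisfies $\alpha \subseteq T(\base_X(\alpha))$ and is therefore the least $m$ with $\alpha \subseteq Tm$.

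Since the hint already isolates the decisive fact, I do not expect a genuine obstacle; the matter is essentially bookkeeping. The two points I would be careful about are the \emph{direction} of the adjunction --- $\base_X$ is the \emph{left} adjoint and $\overline{T}$ the \emph{right} adjoint, so it is $\overline{T}$ that must preserve \emph{meets} (intersections), not joins --- and the verification that $m \mapsto Tm$ is well defined and monotone on equivalence classes of $\MM$-subobjects, which is where preservation of $\MM$-morphisms (Assumption~\ref{ass:1}) is used.
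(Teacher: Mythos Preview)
Your proposal is correct and follows essentially the same approach as the paper: both recognise that $m\mapsto Tm$ gives a monotone map $\Sub_\MM(X)\to\Sub_\MM(TX)$ between complete lattices, identify the base condition as saying $\base_X$ is its left adjoint, and then invoke the equivalence between existence of a left adjoint and preservation of all meets by the right adjoint. The paper's proof is slightly more terse but the argument is the same, including the derivation of the explicit formula.
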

\begin{proof}
Observe that the assignment $f\mapsto Tf$ induces a monotone
map
\[
T_X:\Sub_\MM(X)\to\Sub_\MM(TX)
\]
since $T$ is assumed to preserve $\MM$-morphisms.
The assignment $\alpha\mapsto\base_X(\alpha)$ is then simply
the value of a left adjoint
\[
\base_X:\Sub_\MM(TX)\to\Sub_\MM(X)
\]
to $T_X$ at a point $\alpha$. Moreover, $\Sub_\MM(X)$ and $\Sub_\MM(TX)$
are complete lattices, since $\K$ is assumed to have
intersections of $\MM$-subobjects.

Then~(1) is equivalent to~(2), since~(1) asserts that $\base_X\dashv T_X$
holds for any $X$ and~(2) asserts that $T_X$ preserves infima for any $X$.

The final assertion is true since the value of the left adjoint
$\base_X$ can be computed by the formula
\[
\base_X(\alpha)
=
\bigcap_{\alpha\subseteq Tm} m. \qedhere
\]
\end{proof}

In practical applications it may be the case that one is interested
in the existence of $\base_X(\alpha)$ only for particular $\alpha$'s.
For example, in~\cite{venemafp} the base is computed only
for finite subsets of $TX$ for functors $T:\Set\to\Set$.

We imitate this approach here. Namely, besides our standing
assumptions, we assume further that a full
subcategory $\K_\lambda$ of $\K$ is given, where $\lambda$
is a regular cardinal. We want to think of the
objects of $\K_\lambda$ as being ``smaller than $\lambda$''.
Our goal is to construct the value of a left adjoint
$\monoInline{\base_X(\alpha)}{\wt{Z}}{X}$ only for
those $\MM$-subobjects $\monoInline{\alpha}{Z}{TX}$
with $Z$ in $\K_\lambda$, and such that $\wt{Z}$ is in $\K_\lambda$
as well.

For every $X$, denote by
\[
I_X:\Sub_{\MM,\lambda}(X)\to\Sub_{\MM}(X)
\]
the inclusion of the subposet
spanned by $\MM$-subobjects having a domain in $\K_\lambda$.

One can refine the existence of the base by applying
the ideas of Freyd's Adjoint Functor
Theorem~\cite{freyd}. In the formulation
we use the terminology {\em $\lambda$-small\/}
to abbreviate ``of cardinality less than $\lambda$''.
For example, a $\lambda$-small set is one of less than $\lambda$
elements, a $\lambda$-small intersection is one indexed by a
$\lambda$-small set, etc.

\begin{prop}
\label{prop:lambda-small}
Suppose that for every $X$, the poset $\Sub_\MM(X)$ has $\lambda$-small infima, and suppose that $\Sub_{\MM,\lambda}(X)$ is a subset in $\Sub_\MM(X)$ that is closed under $\lambda$-small infima.

Suppose further that the following two conditions are satisfied:
\begin{enumerate}
\item
$T$ preserves $\lambda$-small intersections.
\item
For any $\monoInline{\alpha}{Z}{TX}$ with $Z$ in $\K_\lambda$
there exists a $\lambda$-small set
\[
F = \{f_i \mid i \in I, \; \alpha \subseteq Tf_i \} \subseteq \Sub_{\MM,\lambda}(X)
\]
such that for every $m$ in $\Sub_{\MM}(X)$ with $\alpha \subseteq Tm$ there exists $f \in F$ with $Tf \subseteq Tm$.
\end{enumerate}
Then $\base_X(\alpha)$ exists for every
$\monoInline{\alpha}{Z}{TX}$
with $Z$
in $\K_\lambda$, it is an element of $\Sub_{\MM,\lambda}(X)$,
and it is computed by the formula
\[
\base_X(\alpha)
=
\bigcap F.
\]
\end{prop}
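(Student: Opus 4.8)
The plan is to verify directly that $B := \bigcap F$ satisfies the universal property of a base from Definition~\ref{def:base}, i.e.\ that for every $\monoInline{m}{Z'}{X}$ one has $B \subseteq m$ iff $\alpha \subseteq Tm$; equivalently, that $B$ is the least element of the set $S = \{\, m \in \Sub_\MM(X) \mid \alpha \subseteq Tm \,\}$. Since $F \subseteq \Sub_{\MM,\lambda}(X)$ is $\lambda$-small and $\Sub_{\MM,\lambda}(X)$ is closed under $\lambda$-small infima, the intersection $B$ already lies in $\Sub_{\MM,\lambda}(X)$, which settles the membership claim as soon as $B$ is identified as the base. Uniqueness of the base (Remark~\ref{rem:3.7.bis}) then forces the formula $\base_X(\alpha) = \bigcap F$, so the whole task reduces to checking the two halves of the biconditional. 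The overall shape is the poset form of Freyd's Adjoint Functor Theorem, with $F$ serving as the solution set for the value of the left adjoint $\base_X \dashv T_X$ at $\alpha$.

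First I would establish the easy half, namely $B \in S$ together with the implication $B \subseteq m \Rightarrow \alpha \subseteq Tm$. This is exactly where hypothesis~(1) enters: because $F$ is $\lambda$-small and $T$ preserves $\lambda$-small intersections, $TB = \bigcap_{f \in F} Tf$ in $\Sub_\MM(TX)$. As each $f \in F$ satisfies $\alpha \subseteq Tf$ by the very definition of $F$, the meet on the right dominates $\alpha$, so $\alpha \subseteq TB$; this is the \emph{unit} inequality and says precisely $B \in S$. Monotonicity of $T_X$ then yields the stated implication, for if $B \subseteq m$ then $\alpha \subseteq TB \subseteq Tm$.

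The substantial half is to show that $B$ is a lower bound for all of $S$, i.e.\ $\alpha \subseteq Tm \Rightarrow B \subseteq m$ for an \emph{arbitrary}, not necessarily $\lambda$-small, $m$. Here I would feed such an $m$ into the solution-set condition~(2) to obtain some $f \in F$ with $Tf \subseteq Tm$, and then transport this inequality down from $\Sub_\MM(TX)$ to $\Sub_\MM(X)$. The tool for the transport is again preservation of intersections, now of the binary (hence $\lambda$-small) intersection $f \cap m$: preservation gives $T(f \cap m) = Tf \cap Tm = Tf$, so the $\MM$-comparison $f \cap m \subseteq f$ becomes an equality after applying $T$. Concluding from this that already $f \cap m = f$, i.e.\ that $f \subseteq m$, is the crux; once it is in hand, $B \subseteq f \subseteq m$ finishes the argument, and combined with $B \in S$ we obtain $B = \min S = \base_X(\alpha)$.

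I expect this last conversion --- upgrading the inequality $Tf \subseteq Tm$ living in $\Sub_\MM(TX)$ to the inequality $f \subseteq m$ living in $\Sub_\MM(X)$ --- to be the main obstacle, since it is precisely the point at which the weaker, $T$-shifted form of the solution-set condition in~(2) has to be reconciled with the genuinely order-theoretic statement that $B$ is least in $S$. It is the interaction of the proper factorisation system (which makes the $\MM$-subobjects a well-behaved poset and forces the comparison maps themselves to lie in $\MM$) with preservation of binary intersections that I would lean on to close this gap; everything surrounding it is routine, being the standard verification that a solution set plus limit preservation produces the value of a left adjoint.
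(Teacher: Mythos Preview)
Your overall strategy is the same as the paper's: treat $F$ as a solution set and use preservation of $\lambda$-small intersections to show that $B=\bigcap F$ realises the value of the left adjoint $\base_X\dashv T_X$ at $\alpha$. The paper compresses this into the single assertion that condition~(2) forces $\bigcap_{\alpha\subseteq Tm} m=\bigcap F$, after which preservation of the $\lambda$-small meet finishes the job; you unfold this into the two halves of the biconditional, which is fine.

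The place where your argument stalls is exactly the step you flag as the crux, and the tools you propose do not close it. From $T(f\cap m)=Tf\cap Tm=Tf$ you cannot conclude $f\cap m=f$: the map $T_X:\Sub_\MM(X)\to\Sub_\MM(TX)$ is merely monotone, not order-reflecting, and nothing about the proper factorisation system changes that. A constant functor $T\equiv\bOne$ already shows the problem: it preserves all intersections, every $Tf\subseteq Tm$ holds trivially, so \emph{any} nonempty $\lambda$-small $F\subseteq\Sub_{\MM,\lambda}(X)$ satisfies condition~(2) as printed, yet $\bigcap F$ need not be the bottom subobject, which is what $\base_X(\alpha)$ must be in this case.

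The paper's own proof does not spell this step out either; it simply asserts the equality of the two infima. The resolution is that condition~(2) is meant to be the genuine cofinality statement ``there exists $f\in F$ with $f\subseteq m$'' rather than ``$Tf\subseteq Tm$''. Indeed, this stronger form is precisely what Example~\ref{ex:base-in-sets} verifies (one takes $f=m_{i_0}\cap m$, which visibly sits below $m$), and it is what is used again in Proposition~\ref{prop:3.10.bis}. With that reading your ``crux'' evaporates: from $f\subseteq m$ one gets $B\subseteq f\subseteq m$ immediately, and the rest of your write-up goes through unchanged.
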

\begin{proof}
By hypothesis, $\Sub_{\MM,\lambda}(X)$ has $\lambda$-small infima, and the map $T_{X}$ preserves them by condition~(1).
Condition~(2) now ensures that
\[
\bigcap_{\alpha \subseteq Tm} m =
\bigcap F
\]
holds. The latter infimum is $\lambda$-small and $T_{X}$
is assumed to preserve such infima. The result now follows.
\end{proof}

We show how Proposition~\ref{prop:lambda-small} is applied
in the classical case when $\K=\Set$. Recall that a
category $\D$ is {\em $\lambda$-filtered\/}
if every $\lambda$-small diagram in $\D$ has a cocone in $\D$.
A colimit is {\em $\lambda$-filtered\/} if its scheme
is a $\lambda$-filtered category.

\begin{exa}
\label{ex:base-in-sets}
Let $\K$ be the category $\Set$ of all sets and mappings
with the factorisation system of surjections and injections.
Let $T:\Set\to\Set$
be a finitary functor that preserves injections and
finite intersections.
Then, for every $\monoInline{\alpha}{Z}{TX}$
with the set $Z$ finite,
$\monoInline{\base_X(\alpha)}{\wt{Z}}{X}$
exists and the set $\wt{Z}$ is finite.

It suffices to verify conditions~(1) and~(2) of
Proposition~\ref{prop:lambda-small}
(where $\lambda=\aleph_0$ and $\Set_{\aleph_0}$ are
the finite sets). Condition~(1)
is well-known to hold for $\Set$\footnote{In fact, one has to be slightly careful here: every endofunctor of $\Set$ preserves all finite (thus even \emph{empty}) intersections provided it is \emph{sound}, see~\cite{adamek+gumm+trnkova} for an explanation or~\cite{trnkova} for the original result. Any non-sound endofunctor of $\Set$ can easily be ``repaired'' to be sound~\cite{adamek+gumm+trnkova}.}. Condition~(2) is verified as
follows.
Given $\monoInline{\alpha}{Z}{TX}$ with $Z$
a finite set, express $X$ as a filtered colimit
$\colim_{i\in I} X_i$ of subobjects
$\monoInline{m_i}{X_i}{X}$ with all $X_i$ finite.
Since $T$ is finitary and is assumed
to preserve injections, $TX$ is a filtered colimit
$\colim_{i\in I} TX_i$ of injections.
Since $Z$ is finite, there exists
$i_0$ such that the triangle
\[
\xymatrix{
Z
\mono[0,2]^-{\alpha}
\mono[1,2]
&
&
TX
\\
&
&
TX_{i_0}
\mono[-1,0]_{Tm_{i_0}}
}
\]
commutes. Denote by $F$ the set
\[
\{ f\mid f \subseteq m_{i_0} \mbox{ and } \alpha \subseteq Tf\}
\]
of elements of $\Sub_\MM(X)$.

Observe first that $F$ is nonempty and finite. Secondly,
if $\alpha \subseteq Tm$, then
\[
\alpha \subseteq Tm_{i_0} \cap Tm = T(m_{i_0} \cap m)
\]
since $T$ preserves finite intersections.
Since the subobject $m_{i_0}\cap m$ is in $F$,
condition~(2) has been verified.
\end{exa}

More generally, Proposition~\ref{prop:lambda-small} and
the technique of Example~\ref{ex:base-in-sets}
can be applied to the case when $\K$
is a {\em locally $\lambda$-bounded category\/}
for some regular cardinal $\lambda$ and a proper factorisation system
$(\EE,\MM)$, see~\cite{kelly:book}.
In particular, we will use the results for $\K=\Pos$
equipped with the factorisation system of monotone
surjections and order-embeddings, see Example~\ref{ex:bounded-cats}.

\begin{defi}
A category $\K$ is {\em locally $\lambda$-bounded\/}
w.r.t.\ a proper factorisation system $(\EE,\MM)$,
if the following two conditions hold:
\begin{enumerate}
\item
There is an essentially small full subcategory $\K_\lambda$
of $\K$ such that:
\begin{enumerate}
\item
Every object $Z$ in $\K_\lambda$ is {\em $\lambda$-bounded},
i.e., the functor $\K(Z,{-})$ preserves colimits of $\lambda$-filtered
diagrams of $\MM$-subobjects.
\item
A morphism $\monoInline{m}{X}{Y}$ is an isomorphism iff the map $\K(Z,m):\K(Z,X)\to\K(Z,Y)$ is a bijection for every $\lambda$-bounded $Z$. (I.e., $\K_\lambda$ is an \emph{$(\MM,\EE)$-generator} of $\K$.)
\end{enumerate}
\item
The category $\K$ is cocomplete and has all
{\em $\EE$-cointersections}.
This means that the opposite category $\K^\op$
(that is equipped with $(\MM,\EE)$
as a factorisation system) has all intersections of
$\EE$-subobjects.
\end{enumerate}
A category $\K$ is called {\em $\lambda$-ranked}, if it is locally
$\lambda$-bounded and $\EE$-cowellpowered.
\end{defi}

\begin{exa}
\label{ex:bounded-cats}
Examples of locally $\lambda$-bounded categories w.r.t.\ a factorisation
system $(\EE,\MM)$ include the following:
\begin{enumerate}
\item
All locally $\lambda$-presentable categories when we take
$\EE$ to be strong epimorphisms and $\MM$ to be monomorphisms,
see~\cite{kelly:book}.

In particular, the categories $\Pos$ and $\Pre$ are locally
$\aleph_0$-bounded, if we take $\MM$ to consist of all
monotone injections (not necessarily reflecting order).
A preorder or a poset $X$ is $\aleph_0$-bounded iff it is finite.
\item
The category $\Pos$ of posets and monotone maps is locally finitely presentable, and thus it is locally
$\aleph_0$-bounded
for $\EE$ consisting of surjective monotone maps
and $\MM$ of monotone maps reflecting the order, see e.g.\ Proposition~1.61 in~\cite{adamek+rosicky}.
A poset is $\aleph_0$-bounded iff it is finite.
\item
The category $\Pre$ of preorders and monotone maps
can be equipped with a variety of factorisation systems,
see Example~\ref{ex:(E,M)-cats}. The category $\Pre$ is locally
$\aleph_0$-bounded w.r.r.\ the factorisation
system where
$\EE$ consists of all monotone surjections and $\MM$
consists of all monotone injections that reflect the order,
see Theorem~5.6 of~\cite{kelly+lack}.
A preorder $X$ is $\aleph_0$-bounded iff it is finite.
\end{enumerate}
\end{exa}

\begin{defi}
Let $\K$ be a $\lambda$-bounded category w.r.t.\
a factorisation system $(\EE,\MM)$. We say that
$T:\K\to\K$ {\em admits $\lambda$-bounded bases},
provided that $\monoInline{\base^T_X}{\wt{Z}}{X}$ exists for any
$\monoInline{\alpha}{Z}{TX}$ with $Z$ in $\K_\lambda$
and, moreover, $\wt{Z}$ is in $\K_\lambda$.
\end{defi}

\begin{prop}
\label{prop:3.10.bis}
Suppose that $\K$ is locally $\lambda$-bounded w.r.t.\
a factorisation system $(\EE,\MM)$. Then
$T:\K\to\K$ admits $\lambda$-bounded bases,
whenever the following four conditions are satisfied:
\begin{enumerate}
\item
Every subposet $\Sub_{\MM,\lambda}(X)$ in $\Sub_\MM(X)$ is closed under $\lambda$-small infima.
\item
The principal lowersets
in every poset $\Sub_{\MM,\lambda}(X)$ are $\lambda$-small.
\item
$T$ preserves colimits
of $\lambda$-filtered diagrams of $\MM$-subobjects.
\item
$T$ preserves $\lambda$-small intersections of $\MM$-subobjects.
\end{enumerate}
\end{prop}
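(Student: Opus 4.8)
The plan is to deduce the statement from Proposition~\ref{prop:lambda-small} by checking its hypotheses for the given $\lambda$ and $\K_\lambda$. Two of those hypotheses come for free. First, each $\Sub_\MM(X)$ must have $\lambda$-small infima: since $\K$ is locally $\lambda$-bounded it is in particular cocomplete, and together with $\MM$-wellpoweredness (Assumption~\ref{ass:1}) this makes $\Sub_\MM(X)$ a complete lattice. Indeed, it has a set of elements by wellpoweredness, unions (suprema) exist because coproducts do (Remark~\ref{rem:sublattice}(4)), and a poset with a top element $[\id_X]$ and all suprema has all infima; so in particular $\lambda$-small infima exist. Closure of $\Sub_{\MM,\lambda}(X)$ under $\lambda$-small infima is precisely hypothesis~(1), and preservation of $\lambda$-small intersections by $T$ (condition~(1) of Proposition~\ref{prop:lambda-small}) is precisely hypothesis~(4). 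Thus only the covering condition~(2) of Proposition~\ref{prop:lambda-small} remains.

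For this I would mimic the filtered-colimit argument of Example~\ref{ex:base-in-sets}. Fix an $\MM$-subobject $\alpha$ of $TX$ whose domain $Z$ lies in $\K_\lambda$. Local $\lambda$-boundedness lets me write $X$ as the $\lambda$-filtered colimit $\colim_i X_i$ of the diagram of its $\MM$-subobjects $m_i\colon X_i \to X$ with all $X_i$ in $\K_\lambda$. Applying $T$ and using hypothesis~(3), $TX = \colim_i TX_i$ is again a $\lambda$-filtered colimit; its connecting and colimit maps lie in $\MM$, since $T$ preserves $\MM$-morphisms and $\MM$ has the left-cancellation property. As $Z$ is $\lambda$-bounded, $\K(Z,-)$ preserves this colimit of $\MM$-subobjects, so the morphism $\alpha$ factors through some $Tm_{i_0}$; read in $\Sub_\MM(TX)$ this says $\alpha \subseteq Tm_{i_0}$.

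Now set $F = \{\, f \in \Sub_{\MM,\lambda}(X) \mid f \subseteq m_{i_0} \text{ and } \alpha \subseteq Tf \,\}$. Because $F$ is contained in the principal lowerset of $m_{i_0}$ in $\Sub_{\MM,\lambda}(X)$, hypothesis~(2) makes $F$ a $\lambda$-small subset of $\Sub_{\MM,\lambda}(X)$, and $\alpha \subseteq Tf$ holds by construction for every $f \in F$. For the covering property, take any $m \in \Sub_\MM(X)$ with $\alpha \subseteq Tm$ and put $f = m_{i_0} \cap m$. Hypothesis~(4) gives $T(m_{i_0}\cap m) = Tm_{i_0} \cap Tm$, so $\alpha \subseteq Tm_{i_0}$ and $\alpha \subseteq Tm$ together yield $\alpha \subseteq Tf$; moreover $f \subseteq m_{i_0}$, whence its domain is an $\MM$-subobject of $X_{i_0} \in \K_\lambda$ and therefore again lies in $\K_\lambda$. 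Thus $f \in F$, and $Tf = Tm_{i_0}\cap Tm \subseteq Tm$ (indeed $f \subseteq m$ already) witnesses condition~(2). Proposition~\ref{prop:lambda-small} then yields that $\base_X(\alpha)$ exists, equals $\bigcap F$, and lies in $\Sub_{\MM,\lambda}(X)$; as this holds for every $\alpha$ with domain in $\K_\lambda$, the functor $T$ admits $\lambda$-bounded bases.

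The main obstacle is the colimit input and its interaction with $\K_\lambda$. One must genuinely know that $X$ is the $\lambda$-filtered colimit of its $\K_\lambda$-valued $\MM$-subobjects, so that hypothesis~(3) together with $\lambda$-boundedness of $Z$ delivers the factorisation $\alpha \subseteq Tm_{i_0}$, and one must know that the intersection $m_{i_0}\cap m$ again has domain in $\K_\lambda$, i.e.\ that $\K_\lambda$ is closed under passing to $\MM$-subobjects (via the left-cancellation property of $\MM$). Both are structural features of locally $\lambda$-bounded categories; in the case of interest, $\Pos$ or $\Pre$ with $\lambda = \aleph_0$ and $\K_\lambda$ the finite posets, they hold because an order-embedded subobject of a finite poset is finite. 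Once these are in place, the remaining steps are the routine bookkeeping carried out above.
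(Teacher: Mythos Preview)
Your proposal is correct and follows essentially the same route as the paper: reduce to Proposition~\ref{prop:lambda-small} by running the filtered-colimit argument of Example~\ref{ex:base-in-sets}, using that $X$ is the $\lambda$-filtered colimit of its $\K_\lambda$-valued $\MM$-subobjects and that $T$ preserves such colimits. The paper's own proof is terser (it cites~\cite{kelly+lack} for completeness of $\K$ and then simply invokes Example~\ref{ex:base-in-sets}), whereas you spell out the verification of condition~(2) of Proposition~\ref{prop:lambda-small} and, commendably, flag the two structural facts about locally $\lambda$-bounded categories (the colimit presentation of $X$ and closure of $\K_\lambda$ under $\MM$-subobjects) that both proofs tacitly rely on.
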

\begin{proof}
It is proved in~\cite{kelly+lack} that
every locally $\lambda$-bounded category is necessarily complete. Hence every $\Sub_\MM(X)$ is a complete lattice. Every object $X$ of $\K$ can be represented as a filtered colimit of its $\lambda$-subobjects; consider the canonical filtered diagram $D: \D \to \K$ of $X$. Its colimit $\colim D$ yields a unique comparison morphism $\colim D \to X$ by the couniversal property of the colimit. This comparison is an isomorphism since $\K_\lambda$ is an $(\MM,\EE)$-generator.
We can thus use Proposition~\ref{prop:lambda-small} and the same argument as in Example~\ref{ex:base-in-sets} to conclude the result.
\end{proof}

\subsection{Bases in posets}

We will apply the general theory of
Subsection~\ref{subsec:general-bases}
to the particular case of the category $\Pos$.
In more detail, we consider $\Pos$ as an $(\EE,\MM)$-category
for $\EE$ = monotone surjections and $\MM$ = order-embeddings (i.e.\ monotone maps $f$ satisfying $x\leq y$ iff $f(x)\leq f(y)$).
Then the following conditions hold:
\begin{enumerate}
\item
$(\EE,\MM)$ is a proper factorisation system, see
Example~\ref{ex:(E,M)-cats}.
\item
$\Pos$ is $\MM$-wellpowered.
\item
$\Pos$ is locally $\aleph_0$-bounded, see
Example~\ref{ex:bounded-cats}. The category
$\Pos_{\aleph_0}$ consists of finite posets.
\end{enumerate}
Hence, by Proposition~\ref{prop:3.10.bis}, a functor $T:\Pos\to\Pos$ admits $\aleph_0$-bounded (=finite) bases, whenever $T$ preserves order-embeddings
and their finite intersections. Next we survey which functors preserve those properties.
From now on, all bases will be considered finite. 
We start with a negative example:
\begin{exa}
Not all locally monotone endofunctors $T: \Pos \to \Pos$ preserve order-embeddings. See e.g.\ Example~6.1 of~\cite{bkpv12}.

Moreover, a locally monotone endofunctor of $\Pos$ that preserves
order-embeddings does not necessarily preserve finite intersections. The reasoning is the same as in the case of $\Set$-endofunctors, as empty intersections are not necessarily preserved. Suppose $T:\Pos\to\Pos$ assigns the two-element chain $\bTwo$ to every nonempty poset and it assigns the one-element poset $\bOne$ to the empty poset. On morphisms $T$ sends the unique morphism $!_X:\emptyset\to X$ to $\ulcorner 1\urcorner:\bOne\to\bTwo$, all other morphisms are mapped to the identity morphism. The intersection
\[
\xymatrix{
\emptyset
\ar[0,1]^-{!_\bOne}
\ar[1,0]_{!_\bOne}
&
\bOne
\ar[1,0]^{\ulcorner 1\urcorner}
\\
\bOne
\ar[0,1]_-{\ulcorner 0\urcorner}
&
\bTwo
}
\]
is clearly not preserved by $T$.
\end{exa}

\paragraph{Preservation of order embeddings.} We now list examples of endofunctors of the category
$\Pos$ that admit finite bases. We start with the preservation of order-embeddings.

\begin{exa}
All Kripke-polynomial endofunctors~\eqref{eq:kripke-polynomial}
of the category $\Pos$ preserve order-embeddings.
This is essentially proved in~\cite{bkpv12}, Examples~5.3 and~6.3,
for the case of Kripke polynomial endofunctors of $\Pre$.
One only needs to notice that order-embeddings in $\Pos$ are
precisely the order-reflecting monotone injective
maps of the underlying preorders.
\end{exa}
In fact, the class of endofunctors of $\Pos$
that preserve order-embeddings is quite large.
It includes all functors that preserve certain lax diagrams
called  \emph{exact squares}, see~\cite{guitart} or~\cite{bkpv11}.

\begin{defi}
\label{def:BCC}
A lax square
\[
\xymatrix{
P
\ar[0,1]^-{p_1}
\ar[1,0]_{p_0}
&
B
\ar[1,0]^{g}
\\
A
\ar[0,1]_{f}
\ar@{}[-1,1]|{\nearrow}
&
C
}
\]
in $\Pos$ is called exact, if $fa \leq_C gb$ entails
the existence of $w$ in $P$ such that both $a \leq_A p_0 w$
and $p_1 w \leq_B b$ hold.

A functor $T$ which preserves exact squares will be also called
a functor satisfying the {\em Beck-Chevalley Condition}, or
BCC for short.
\end{defi}

\begin{exa}
\label{ex:3.15.bis}
All endofunctors of $\Pos$ satisfying BCC preserve
order-embeddings. In particular, all Kripke-polynomial endofunctors of $\Pos$ satisfy BCC, as does the convex powerset functor $\PP^c$. See~\cite{bkpv11}.

For an example of an endofunctor of $\Pos$ not satisfying BCC, consider the \emph{connected components functor} $C: \Pos \to \Pos$ that takes a poset $P$ to the discrete poset consisting of connected components of $P$. The functor $C$ does not preserve (e.g.) the order-embedding from the discrete poset on $\{a,b\}$ to the poset on $\{a,b,c\}$ with the ordering $a < c$, $b < c$.
(This is Example~6.7 of~\cite{bkpv11}.)
\end{exa}

\paragraph{Preservation of finite intersections.}
The following series of results deals with
preservation of finite intersections.
As the next example shows, not all Kripke polynomial
functors preserve finite intersections.

\begin{exa}
\label{ex:3.16}
The lowerset functor $\LLL$ does not preserve intersections
of order-embeddings.
Consider the poset $\ZZ$ of integers and its subposets
$\monoInline{m}{E}{\ZZ}$ and
$\monoInline{n}{O}{\ZZ}$ of even and odd integers,
respectively.
The intersection of $m$ and $n$ is \emph{empty}, as it is shown
in the diagram on the left below:
\[
\xymatrix{
\emptyset \mono[0,1] \mono[1,0] & O \mono[1,0]^{n} \\
E \mono[0,1]_{m} & \ZZ
}
\qquad
\xymatrix{
\LLL \emptyset \mono[0,1] \mono[1,0] & \LLL O \mono[1,0]^{\LLL n} \\
\LLL E \mono[0,1]_{\LLL m} & \LLL \ZZ
}
\]
Then the one-element poset $\LLL\emptyset$
in the diagram on the right above
is not the intersection of $\LLL m$ and $\LLL n$.
Namely, the intersection of $\LLL m$ and $\LLL n$
contains at least two elements, since
$\LLL m (\emptyset) = \emptyset = \LLL n (\emptyset)$ and
$\LLL m (E) = \ZZ = \LLL n (O)$.

The functor $\LLL$ does not, in general, preserve \emph{non-empty} intersections of order-embeddings as well. Consider again the above example, and take as $m: E \to \ZZ$ the poset of even integers and $n: O \to \ZZ$ of odd integers and zero. Then the domain of $m \cap n$ is the one-element set $\{ 0 \}$ and $\LLL \{ 0 \}$ has two elements, while the domain of $\LLL m \cap \LLL n$ contains $\emptyset$, ${\downarrow} 0$, and $\ZZ$. 
\end{exa}

\begin{prop}
All {\em finitary\/} Kripke-polynomial endofunctors of
$\Pos$ preserve finite
intersections of order-embeddings.
\end{prop}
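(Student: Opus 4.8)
The plan is to proceed by structural induction on the grammar defining the finitary Kripke-polynomial functors, that is, the grammar \eqref{eq:kripke-polynomial} with $\LLL$ replaced by its finitary coreflection $\LLL_\omega$ (the dual case will absorb $\UU_\omega$). Throughout I work with $\MM$ the order-embeddings, so that the intersection of a finite family $m_i\colon Z_i\hookrightarrow X$ is carried by the subposet $\bigcap_i\mathrm{im}(m_i)$ of $X$ with the induced order. Since every Kripke-polynomial functor preserves order-embeddings (Example~\ref{ex:3.15.bis}), each $Tm_i$ is again an order-embedding, and ``$T$ preserves the intersection'' amounts to the equality of images $\mathrm{im}(Tm)=\bigcap_i\mathrm{im}(Tm_i)$ inside $TX$, where $m=\bigcap_i m_i$. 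It suffices to treat binary intersections, the nullary case being automatic because $T$ carries the top subobject $\Id_X$ to the top subobject $\Id_{TX}$.

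For the easy cases I would argue componentwise. The constant functor $E$ and the identity functor $\Id$ are immediate. For $T_1\times T_2$ a point $(s,t)$ of $T_1X\times T_2X$ lies in $\mathrm{im}((T_1m_i)\times(T_2m_i))$ iff $s\in\mathrm{im}(T_1m_i)$ and $t\in\mathrm{im}(T_2m_i)$; intersecting over $i$ and invoking the two inductive hypotheses gives $\mathrm{im}(T_1m)\times\mathrm{im}(T_2m)$, as required, and $T_1+T_2$ is handled by the same bookkeeping on the two summands. For $T^E$ I would note that $(-)^E$ sends order-embeddings to order-embeddings, and that a monotone map $\varphi\colon E\to TX$ factors through an order-embedded subposet precisely when all its values land there; hence $\varphi\in\bigcap_i\mathrm{im}((Tm_i)^E)$ iff every value $\varphi(e)$ lies in $\bigcap_i\mathrm{im}(Tm_i)=\mathrm{im}(Tm)$, which says exactly $\varphi\in\mathrm{im}((Tm)^E)$. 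Finally $T^\partial$ reduces to $T$, since $T^\partial=(-)^\op\circ T\circ(-)^\op$ and the self-inverse isomorphism $(-)^\op$ preserves both order-embeddings and all intersections.

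The main case, and the only place where finiteness is used, is $\LLL_\omega T$; recall from Example~\ref{ex:3.16} that $\LLL$ itself fails here, the obstruction being lowersets such as $\ZZ\subseteq\ZZ$ with no maximal elements. For $\LLL_\omega$ I would exploit that every finitely generated lowerset $\ell$ has a unique minimal generating set $g(\ell)$, namely its finite antichain of maximal elements. Writing $n_i=Tm_i\colon TZ_i\hookrightarrow TX$ and $n=Tm$, the key observation is that, because each $n_i$ is an order-embedding,
\[
\ell\in\mathrm{im}(\LLL_\omega n_i)\quad\Longleftrightarrow\quad g(\ell)\subseteq\mathrm{im}(n_i),
\]
since $\ell={\downarrow}g(\ell)$ and a finitely generated lowerset arises as ${\downarrow}n_i[l]$ exactly when its maximal elements already lie in $\mathrm{im}(n_i)$. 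Intersecting over $i$ and using the inductive hypothesis $\bigcap_i\mathrm{im}(n_i)=\mathrm{im}(n)$ yields $g(\ell)\subseteq\mathrm{im}(n)$, i.e.\ $\ell\in\mathrm{im}(\LLL_\omega n)$; thus $\bigcap_i\mathrm{im}(\LLL_\omega n_i)=\mathrm{im}(\LLL_\omega n)$, and since $\LLL_\omega$ preserves order-embeddings this equality of images is an equality of subobjects, which is the desired conclusion. The hard part is precisely the displayed equivalence: it hinges on the existence of the finite antichain $g(\ell)$ of maximal generators, which can fail for arbitrary lowersets and is exactly what breaks the non-finitary functor $\LLL$. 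Everything else is routine verification organised by the structural induction.
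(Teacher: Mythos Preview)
Your proof is correct and follows essentially the same structural-induction approach as the paper, with the key case $\LLL_\omega$ handled via the finite antichain $g(\ell)$ of maximal generators. The only cosmetic differences are that the paper dispatches $T^E$ abstractly (noting that $(-)^E$ is a right adjoint by cartesian closedness of $\Pos$, hence preserves all limits) where you argue pointwise, and the paper phrases the $\LLL_\omega$ step as verifying a pullback square rather than as your image-equality $g(\ell)\subseteq\mathrm{im}(n_i)\Leftrightarrow\ell\in\mathrm{im}(\LLL_\omega n_i)$; the underlying idea is identical.
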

\begin{proof}
The proof is trivial for all the formation
steps~\eqref{eq:kripke-polynomial}
except for $T^E$ and $\LLL_\omega T$.
For the induction hypothesis,
suppose $T$ preserves finite intersections of
order-embeddings.
\begin{enumerate}
\item
Since $T^E X=(TX)^E$, it suffices to observe
that the diagram
\[
\xymatrix{
(TL)^E
\mono[0,1]^-{(Ti)^E}
\mono[1,0]_{(Tj)^E}
&
(TB)^E
\mono[1,0]^{(Tn)^E}
\\
(TA)^E
\mono[0,1]_{(Tm)^E}
&
(TX)^E
}
\]
is a pullback whenever the following diagram is:
\[
\xymatrix{
TL
\mono[0,1]^-{Ti}
\mono[1,0]_{Tj}
&
TB
\mono[1,0]^{Tn}
\\
TA
\mono[0,1]_{Tm}
&
TX.
}
\]
This is true because the functor $(-)^E:\Pos\to\Pos$ preserves all limits: it is a right adjoint by cartesian closedness of $\Pos$ (see Example~27.3 of~\cite{ahs}).
\item
It suffices to prove that
$\LLL_\omega$ preserves finite intersections of
order-embeddings.
\[
\xymatrix{
L \mono[0,1] \mono[1,0] & B \mono[1,0]^{n} \\
A \mono[0,1]_{m} & X
}
\qquad
\xymatrix{
\LLL_\omega L \mono[0,1] \mono[1,0] & \LLL_\omega B
\mono[1,0]^{\LLL_\omega n} \\
\LLL_\omega A \mono[0,1]_{\LLL_\omega m} & \LLL_\omega X
}
\qquad
\xymatrix{
\lambda \ar@{|->}[0,1] \ar@{|->}[1,0]
& \beta \ar@{|->}[1,0]^{\LLL_\omega n} \\
\alpha \ar@{|->}[0,1]_{\LLL_\omega m} & \chi
}
\]

Given an intersection in the diagram on the left above,
we show that the diagram in the center above is
again an intersection of the morphisms
$\LLL_\omega m$ and $\LLL_\omega n$. For any two lowersets $\alpha$ in $\LLL_\omega A$ and $\beta$ in $\LLL_\omega B$ satisfying
$\LLL_\omega m (\alpha) = \chi = \LLL_\omega n (\beta)$
it is enough to find a lowerset $\lambda$ in $\LLL_\omega L$
as shown in the diagram above on the right.
All of the depicted lowersets are determined
by their minimal sets of generators.
Recall that the minimal set of generators of the lowerset $\alpha$ is denoted by
$g(\alpha)$. Since $m$ and $n$ are inclusions, we see that
$g(\alpha) = g(\chi) = g(\beta)$. The equality of sets of
generators implies that $g(\chi) \subseteq L$.
Defining $\lambda$ in $\LLL_\omega L$
to be the lowerset generated by $g(\chi)$ yields
the unique witness in $\LLL_\omega L$.

Observe that the preceding argument works even in
the case of $g(\chi)$ being empty. \qedhere
\end{enumerate}
\end{proof}

By the above, all finitary Kripke-polynomial functors
$T$ admit finite bases.

\begin{exa}
We will show how the bases for Kripke-polynomial functors can be computed.
\begin{enumerate}
\item Constant functor $E$: For an arbitrary subobject $\monoInline{\alpha}{Z}{E}$, we have that its base is of the form $\monoInline{\base(\alpha)}{\emptyset}{X}$, with $\base(\alpha)$ being the empty mapping.
\item Identity functor $\Id:$ Given a subobject $\monoInline{\alpha}{Z}{X}$, its base $\monoInline{\base(\alpha)}{\wt{Z}}{X}$ is the mapping $\alpha$ itself.
\item Sum and product of functors: Fix two functors $T_1$ and $T_2$. Given a subobject
\[
\monoInline{\alpha}{Z}{T_1 X + T_2 X},
\]
it is equally well a subobject $\monoInline{\alpha_1 + \alpha_2}{Z_1 + Z_2}{T_1 X + T_2 X}$ for some choice of posets $Z_1$ and $Z_2$. Then $\base(\alpha) = \base(\alpha_1) \cup \base(\alpha_2)$. For the case of the product of functors, given a subobject $\monoInline{\alpha}{Z}{T_1 X \times T_2 X}$, denote by $p_i: T_1 X \times T_2 X \to T_i X$ the $i$-th projection from the product ($i \in \{1,2\}$). Then $\base(\alpha) = \base(p_1 \cdot \alpha) \cup \base(p_2 \cdot \alpha)$.
\item Power of a functor: Given a functor $T^E$ and a subobject $\monoInline{\alpha}{Z}{(TX)^E}$, we denote by $p_e: TX^E \to TX$ the obvious projection (with $e \in E$). Then $\base(\alpha) = \bigcup_{e \in E} \base(p_e \cdot \alpha)$.
\item The dual of a functor: A straightforward computation yields that the base of
\[
\monoInline{\alpha}{Z}{T^\partial X}
\]
is $(\base(\alpha^\op))^\op$.
\item The lowerset functor: Given a lowerset $\monoInline{\alpha}{\bOne}{\LLL_\omega X}$, the base
$\monoInline{\base(\alpha)}{\wt{\bOne}}{X}$ is the (discrete) finite poset of
generators of $\alpha$.
More generally, the base of  $\monoInline{\alpha}{Z}{\LLL_\omega X}$ is $\bigcup_{z \in Z} \base(\alpha \cdot z)$.
\end{enumerate}
\end{exa}

\section{Monotone relations and their lifting}
\label{sec:lifting}

In the current section, we summarise the notation and the necessary facts that concern monotone relations between posets. We will use these facts and the facts about liftings of monotone relations in Section~\ref{sec:logic} to introduce the semantics of a coalgebraic logic over posets. For a more detailed treatment of the theory of relation liftings in the categories $\Pre$ and $\Pos$ we refer
to~\cite{bkpv11}.

The category $\Rel$ of monotone relations over $\Pos$ has the same
objects as the category $\Pos$, and has \emph{monotone relations}
as arrows. A monotone relation $R$ from $X$ to $Y$ will be denoted
by
$
\xymatrix@1{
R:
X
\ar[0,1]|-{\object @{/}}
&
Y
}
$
and it will be identified with a monotone map
$R:Y^\op\times X\to\bTwo$.
We shall often write $R(y,x)$ or $y \mathrel{R} x$ to denote that $R$ relates $x$ to $y$.
Using this notation, a relation $R$ is monotone if it
satisfies the following monotonicity condition:
\[
R(y,x)
\mbox{ and }
y'\leq_Y y
\mbox{ and }
x\leq_X x'
\mbox{ implies }
R(y',x').
\]
The composition in $\Rel$ is computed in the usual manner:
$(S\cdot R)(z,x)$ holds if and only if $S(z,y)$ and $R(y,x)$ hold for some $y$.
The identity morphism on a poset $X$ is the order $\leq_X$ of $X$,
considered as the monotone map ${\leq_X}:X^\op\times X\to\bTwo$.
For any two posets $X$ and $Y$, the hom-set $\Rel(X,Y)$ of all monotone relations from $X$ to $Y$ carries a poset structure given by the inclusion of relations.

\begin{exa}
We shall often use two special kinds of relations: graph relations and membership relations.
\begin{enumerate}
\item
{\em Graph relations.\/}
For a monotone map $f: X\to Y$ we define two graph relations
\[
\xymatrix{
X
\ar[0,1]|-{\object @{/}}^-{f_\diamond}
&
Y
&
&
Y
\ar[0,1]|-{\object @{/}}^-{f^\diamond}
&
X
}
\]
by putting
\[
f_\diamond(y,x)
\textrm{ iff }
y\leq fx,
\quad
f^\diamond(x,y)
\textrm{ iff }
fx\leq y.
\]
The assignment
\[
f \mapsto f_\diamond
\]
can be extended to a locally monotone functor
\[
({-})_\diamond: \Pos \to \Rel.
\]
\item
{\em Membership relations.\/}
For a poset $X$, we use the following two membership relations:
\[
\xymatrix@1{
\sqin_{X}:
\LLL X
\ar[0,1]|-{\object @{/}}
&
X
&
&
\sqni_{X}:
X
\ar[0,1]|-{\object @{/}}
&
\UU X
}
\]
defined by $\sqin_X(x,l)=1$ iff $x$ is in $l$
and $\sqni_X(u,x)=1$ iff $x$ is in $u$.
\end{enumerate}
\end{exa}

As in the case of ordinary relations, there are various operations that we can perform on monotone relations.

\begin{defi}[Operations on relations]
\label{def:operations-on-relations}
Suppose
$
\xymatrix@1{
R:
X
\ar[0,1]|-{\object @{/}}
&
Y
}
$
is a monotone relation.
\begin{enumerate}
\item
The {\em converse\/}
$
\xymatrix@1{
R^{\con}:
Y^\op
\ar[0,1]|-{\object @{/}}
&
X^\op
}
$
of $S$ is defined by putting
$R^\con(x,y)$ iff $R(y,x)$.
\item
The {\em negation\/}
$
\xymatrix@1{
\neg R:
X^\op
\ar[0,1]|-{\object @{/}}
&
Y^\op
}
$
is defined by putting $\neg R(y,x)$ iff it is not the case that $(y,x) \in R$.
\item
Given monotone maps $f:A\to X$, $g:B\to Y$,
the composite
\[
\xymatrix{
A
\ar[0,1]|-{\object @{/}}_-{f_\diamond}
&
X
\ar[0,1]|-{\object @{/}}_-{R}
&
Y
\ar[0,1]|-{\object @{/}}_-{g^\diamond}
&
B
\ar @{<-} `u[lll] `[lll]|-{\object @{/}}_-{R(g-,f-)} [lll]
}
\]
is called a {\em restriction\/} of $R$ along $f$ and $g$.
\end{enumerate}
\end{defi}

\begin{rem}
Observe that $\sqni_{X}$ is the converse
of $\sqin_{X^\op}$. We omit the subscript $X$ whenever it is clear from the context. We also often use the notation
\[
\xymatrix@1{
\not\sqin_{X}:
(\LLL X)^\op
\ar[0,1]|-{\object @{/}}
&
X^\op
&
&
\not\sqni_{X}:
X^\op
\ar[0,1]|-{\object @{/}}
&
(\UU X)^\op
}
\]
instead of the notation $\neg {\sqin}$ and $\neg {\sqni}$ for the negation of the respective membership relations. Whenever it is possible, we use the infix notation for the membership relations.
\end{rem}

Any locally monotone functor $T: \Pos\to\Pos$ that satisfies BCC
(see Definition~\ref{def:BCC})
can be lifted to a locally monotone
functor $\ol{T}:\Rel\to\Rel$. In fact, BCC is equivalent
to the existence of such a lifting. The following theorem
was proved in~\cite{bkpv11} for the category $\Pre$,
but the proof goes through verbatim for the case of $\Pos$.

\begin{thmC}[\cite{bkpv11}]
\label{th:lifting}
For a locally monotone functor $T:\Pos\to\Pos$ the following are equivalent:
\begin{enumerate}
\item
The functor $T$ has a locally monotone functorial relation
lifting $\ol{T}$, i.e.,
there is a 2-functor $\ol{T}:\Rel\to\Rel$
such that the square
\begin{equation}
\label{eq:extension_square}
\vcenter{
\xymatrix{
\Rel
\ar[0,2]^-{\ol{T}}
&
&
\Rel
\\
\Pos
\ar[0,2]_-{T}
\ar[-1,0]^{({-})_\diamond}
&
&
\Pos
\ar[-1,0]_{({-})_\diamond}
}
}
\end{equation}
commutes.
\item
The functor $T$ satisfies the Beck-Chevalley Condition.
(i.e., it preserves exact squares).
\end{enumerate}
\end{thmC}

A relation lifting $\ol{T}$ of a locally
a locally monotone functor $T$ satisfying the BCC
is computed in the following way.
A relation
$
\xymatrix@1{
R:
X
\ar[0,1]|-{\object @{/}}
&
Y
}
$
can be represented by a certain span
\[
\xymatrix@1{
Y
&
E
\ar[0,1]^{p_1}
\ar[0,-1]_{p_0}
&
X,
}
\]
of monotone maps such that the equality
$R= (p_1)_\diamond\cdot (p_0)^\diamond$ holds
(see~\cite{bkpv11} or~\cite{bkpv12} for details).
Then the lifting
$\ol{T}$ is computed by the following composition of graph
relations:
\[
\xymatrix@1{
\ol{T}R:
TX
\ar[0,2]|-{\object @{/}}^(.6){(Tp_1)^\diamond}
&
&
TE
\ar[0,2]|-{\object @{/}}^{(Tp_0)_\diamond}
&
&
TY
}
\]
In elementary terms, we can check if the elements $\alpha \in TX$ and $\beta \in TY$ are related by the lifted relation $\ol{T}R$ as follows:
\begin{equation}
\label{eq:relation-lifting}
\ol{T}R(\beta,\alpha)
\ \mbox{ iff }\
(\exists w \in TR)
(\beta\leq_{TY}Tp_0(w)
\mbox{ and }
Tp_1(w)\leq_{TX}\alpha).
\end{equation}

Relation lifting behaves well with respect to graph relations, converse relations and restrictions of relations. The easy proofs of these properties follow immediately from the
formula~\eqref{eq:relation-lifting}.
We also give explicit explicit instances of relation liftings for Kripke-polynomial functors, and the convex powerset functor.

\begin{exa}
\label{ex:lifting_prop}
Suppose $T:\Pos\to\Pos$ is a locally monotone functor satisfying BCC, and
$
\xymatrix@1{
R:
X
\ar[0,1]|-{\object @{/}}
&
Y
}
$
is a given relation.
\begin{enumerate}
\item
{\em The relation lifting commutes with graph relations}, i.e.,
the equalities
$\ol{T}f_\diamond = (Tf)_\diamond$
and
$\ol{T}f^\diamond = (Tf)^\diamond$
hold.
\item
{\em Relation lifting commutes with converses}, i.e.,
the equality
$\ol{T}R^\con =(\ol{T^\partial}R)^\con$
(or, equivalently, $\ol{T^\partial}R = (\ol{T}R^\con)^\con$)
holds for every monotone relation $R$.
\item
{\em Relation lifting commutes with taking restrictions.\/}
Given a restriction
\[
\xymatrix{
A
\ar[0,1]|-{\object @{/}}_-{f_\diamond}
&
X
\ar[0,1]|-{\object @{/}}_-{R}
&
Y
\ar[0,1]|-{\object @{/}}_-{g^\diamond}
&
B
\ar @{<-} `u[lll] `[lll]|-{\object @{/}}_-{R(g-,f-)} [lll]
}
\]
of $
\xymatrix@1{
R:
X
\ar[0,1]|-{\object @{/}}
&
Y
}
$ along $f:A\to X$ and $g:B\to Y$, we can apply $\ol{T}$ to obtain
\[
\xymatrix{
T A
\ar[0,1]|-{\object @{/}}_-{(Tf)_\diamond}
&
T X
\ar[0,1]|-{\object @{/}}_-{\ol{T}R}
&
T Y
\ar[0,1]|-{\object @{/}}_-{(Tg)^\diamond}
&
T B
\ar @{<-} `u[lll] `[lll]|-{\object @{/}}_-{\ol{T}R(Tg-,Tf-)} [lll]
}
\]
since $\ol{T}$ commutes with taking the graphs by~(1) above.
\item
{\em Relation lifting for Kripke-polynomial functors\/}
can be computed inductively:
\begin{align*}
\ol{const_X}R\ &= {\leq_X}, \\
\ol{T_0\times T_1}R((\alpha_0,\alpha_1)(\beta_0,\beta_1)) &\mbox{ iff }
\ol{T_0}R(\alpha_0,\beta_0) \mbox{ and } \ol{T_1}R(\alpha_1,\beta_1), \\
\ol{T_0 + T_1}R(\alpha,\beta) = 1 &\mbox{ iff }
\begin{cases}
\alpha \in T_0 Y, \; \beta \in T_0 X, \text{ and } \ol{T_0}R(\alpha,\beta),\\
\alpha \in T_1 Y, \; \beta \in T_1 X, \text{ and } \ol{T_1}R(\alpha,\beta).
\end{cases} \\
\ol{\LLL}R(o,l) &\mbox{ iff }
(\forall y) (y \sqin o \mbox{ implies } (\exists x) (x \sqin l \mbox{ and } R(y,x))), \\
\ol{\UU}R(v,u) &\mbox{ iff }
(\forall x) (u \sqni x \mbox{ implies } (\exists y) (v \sqni y \mbox{ and } R(y,x))).
\end{align*}
\item {\em Relation lifting for the convex powerset functor\/} is computed as follows:
\begin{align*}
\ol{\PP^c}R(v,u) &\mbox{ iff } ((\forall y\in v) (\exists y'\geq_Y y)( R(y',x') \mbox{ and } (\forall x')(R(y',x') \mbox{ implies } (\exists x\in u) x'\leq_X x))\\
   &\mbox{ and } (\forall x\in u) (\exists x'\leq_X x)( R(y',x') \mbox{ and } (\forall y')(R(y',x') \mbox{ implies } (\exists y\in v) y'\geq_Y y)))\\
   &\mbox{ iff } (\forall y\in v)(\exists x\in u) R(y,x) \mbox{ and }  (\forall x\in u)(\exists y\in v) R(y,x). 
\end{align*}

\end{enumerate}
\end{exa}

\begin{rem}
Let the least finitary subfunctor $T_\omega$ of a locally monotone functor $T:\Pos\to\Pos$ satisfying BCC be given 
via the natural transformation $\monoInline{\nu^T}{T_\omega}{T}$. We can take any relation
$
\xymatrix@1{
R:
X
\ar[0,1]|-{\object @{/}}
&
Y
}
$
and restrict its lifting
$
\xymatrix@1{
\ol{T}R:
TX
\ar[0,1]|-{\object @{/}}
&
TY
}
$ along $\nu^T$ to obtain the relation
\[
\xymatrix{
T_\omega X
\ar[0,1]|-{\object @{/}}_-{(\nu^T_X)_\diamond}
&
T X
\ar[0,1]|-{\object @{/}}_-{\ol{T}R}
&
T Y
\ar[0,1]|-{\object @{/}}_-{(\nu_Y^T)^\diamond}
&
T_\omega Y.
\ar @{<-} `u[lll] `[lll]|-{\object @{/}}_-{\ol{T}R(\nu_Y^T-,\nu^T_X-) = \ol{T_\omega}R} [lll]
}
\]
Thus defined, the operation $\ol{T_\omega}$ need not be functorial. In general, we only obtain lax functoriality $\ol{T_\omega} R \cdot \ol{T_\omega} S \subseteq \ol{T_\omega} (R\cdot S)$ (this follows from $(\nu^T_Y)^\diamond \cdot (\nu^T_Y)_\diamond$ being the identity $\leq_{TY}$ for every $Y$). In case $R = f_\diamond$, we even obtain $\ol{T_\omega} (f_\diamond) \cdot \ol{T_\omega} S = \ol{T_\omega} (f_\diamond\cdot S)$. The operation $\ol{T_\omega}$ commutes with converses and with graph relations. To sum up, $\ol{T_\omega}$ is a \emph{lax extension} of $T_\omega$.

Being only a lax extension is not a real obstacle --- in $\Set$, the coalgebraic logic based on a cover modality can be meaningfully defined for an arbitrary lax extension, as shown in~\cite{MV15}. 
Apart from the case when $T_\omega = T$ in Proposition~\ref{prop:express} (2) (and $T_\omega$ is therefore  assumed to satisfy BCC) we only use the operation $\ol{T_\omega}$ on the syntax side of matters in formulating the modal rules of the calculus (this part roughly starts with Definition~\ref{def:collections_L_R} in~\ref{ssec:delta}, and continues in Section~\ref{sec:proof-system}).
There we mostly use the fact that $\ol{T_\omega}R$ is the indicated restriction of $\ol{T}R$ and that the relations $\ol{T}R$ and $\ol{T_\omega}R$ coincide whenever $R$ is a relation between finite posets.

It is worth mentioning though that the finitary Kripke polynomial functors satisfy the BCC, and therefore they admit functorial relation lifting. For example, $\ol{\LLL_\omega}$ is the appropriate restriction of $\ol{\LLL}$, and therefore is computed in the same way. The same goes for the finitary convex powerset functor $\PP^c_\omega$.
\end{rem}

The above calculus of relation liftings allows us to explicitly describe the relation liftings for functors introduced in Section~\ref{sec:functors}.

\begin{exa}
Recall the various ordered coalgebraic structures from Example~\ref{ex:coalgebras}.
\mbox{}\hfill
\begin{enumerate}
\item
The lifting of the functor $T = A \times \Id$ is particularly easy.
Given a relation
$
\xymatrix@1{
R:
X
\ar[0,1]|-{\object @{/}}
&
Y
}
$,
let us denote the lifted relation
\[
\xymatrix@1{
\ol{(A \times \Id)} R:
A \times X
\ar[0,1]|-{\object @{/}}
&
A \times Y
}
\]
by $\ol{R}$. Then the relation
$\ol{R}((b,y),(a,x))$ holds if and only if
$b \leq a$ and $R(y,x)$ holds.
\item
The coalgebras for the functor $T = \Id^A \times \bTwo$ model deterministic ordered
automata. Given a monotone relation
$
\xymatrix@1{
R:
X
\ar[0,1]|-{\object @{/}}
&
Y
}
$
the elements $\alpha = (f,i) \in X^A \times \bTwo$ and
$\beta = (g,j) \in Y^A \times \bTwo$ are related by the lifted relation
$
\xymatrix@1{
\ol{T}R:
TX
\ar[0,1]|-{\object @{/}}
&
TY
}
$
if and only if $j \leq i$ and
$R(g(a),f(a))$ for every $a \in A$.
\item
Given the functor $T = \LLL^A \times \bTwo$
(with $A$ discrete) for lowerset automata and a relation
$
\xymatrix@1{
R:
X
\ar[0,1]|-{\object @{/}}
&
Y
}
$, its lifted relation
$
\xymatrix@1{
\ol{T}R:
TX
\ar[0,1]|-{\object @{/}}
&
TY
}
$
is defined as follows: For an element $\alpha = (l, i)$
in $(\LLL X)^A \times \bTwo$ and
$\beta = (o, j)$ in $(\LLL Y)^A \times \bTwo$
we have that $\ol{T}R(\beta,\alpha)$ holds if and only if
\[
j \leq i
\mbox{ and }
(\forall a \in A) \mathrel{\ol{\LLL} R} (o(a),l(a))
\]
holds, with the latter condition meaning that for every $y$ such that  $y \sqin o(a)$ there is an element $x$ with $x \sqin l(a)$ such that $R(y,x)$ holds. 

In the finitary case, and due to the monotonicity of the relation $R$, the condition can be weakened further using the sets of minimal generators of the lowersets $l$ and $o$. For every $y \in g(o(a))$ there has to be some $x \in g(l(a))$ such that $R(y,x)$ holds. The computation of the lifting stays the same even for an ordered set $A$ of ``inputs''.

\item Given the functor $T = \UU$, two coalgebras $c: X \to \UU X$ and $c': Y \to \UU Y$,  and a relation
$
\xymatrix@1{
R:
X
\ar[0,1]|-{\object @{/}}
&
Y
},
$ 
The lifted relation $\ol{\UU}R$ is defined as follows: 
\[
\ol{\UU}R(c'(y), c(x))  \ \mbox{ iff } (\forall x') (c(x)\sqni x' \mbox{ implies } (\exists y') (c'(y) \sqni y' \mbox{ and } R(y',x'))).
\]
Given the functor $T = \LLL$, two coalgebras $c: X \to \LLL X$ and $c': Y \to \LLL Y$,
\[
\ol{\LLL}R (c'(y),c(x)) \mbox{ iff }
(\forall y') (y' \sqin c'(y) \mbox{ implies } (\exists x') (x' \sqin c(x) \mbox{ and } R(y,x))).
\]
Given the functor $T = \PP^c$, two coalgebras $c: X \to \PP^c X$ and $c': Y \to \PP^c Y$,
\begin{align*}
\ol{\PP^c}R (c'(y),c(x)) \mbox{ iff }
&(\forall y'\in c'(y))(\exists x'\in c(x)) R(y',x')\\
\mbox{ and }
&(\forall x'\in c(x))(\exists y'\in c'(y)) R(y',x').
\end{align*}
\end{enumerate}
\end{exa}

\section{Moss' logic for ordered coalgebras}
\label{sec:logic}
In this section we introduce a logic for ordered coalgebras parametric in the coalgebra functor $T$. The syntax of the logic will be finitary, the propositional part will be given by the conjunction and disjunction connectives, and the modal part will consist, up to Subsection~\ref{ssec:delta}, of a single modality $\nabla$ of arity $T_{\omega}^\partial$. The semantics of the logic will be given by monotone valuations over $T$-coalgebras, with the semantics of the modality $\nabla$ given by $T^\partial$ relation lifting.
We will show that this language is always adequate for $T$-coalgebras, and adequate and expressive for $T_\omega$-coalgebras. This means that the logic, in case of finitary coalgebras, has the Hennessy-Milner property.

\begin{asm}
\label{ass:5.1}
From this section on, we fix a locally monotone functor $T: \Pos\to\Pos$, its least finitary subfunctor $\monoInline{\nu^T}{T_\omega}{T}$, 
and assume $T$ (and therefore $T^\partial$) preserve exact squares. We assume that $T_{\omega}^\partial$ preserves finite intersections, and therefore admits a finite base.

\end{asm}
In particular, all Kripke polynomial functors of~\ref{eq:kripke-polynomial} in place of the functor $T$ comply with this assumption.

\subsection{The syntax of the coalgebraic language}
\label{ssec:lang}

The syntax of Moss' logic for ordered coalgebras will be based on finitary conjunction
and disjunction connectives, and a single finitary modality. It will be convenient
to regard the arities of both the connectives and the modality
as finitary functors. A similar approach has been already used
in~\cite{kkv12} for the Moss' logic in sets. There are some differencies worth mentioning though: First, when passing from
sets to the enriched setting of posets, to be as general as possible, one should start with a poset of propositional variables. Second, one has to be more careful about the precise shape of the arities, namely:
\begin{enumerate}
\item
In $\Set$, the arity of the finitary conjunction and disjunction
is given by the finitary powerset functor $P_\omega:\Set\to\Set$.
This means that, e.g., the conjunction is a map
$\bigwedge:P_\omega\Lang\to\Lang$ where $\Lang$ denotes the
set of all formulas. In $\Pos$, the natural choice for the
arity of the conjunction is $\UU_\omega$, whereas for the disjunction it
is $\LLL_\omega$.
\item
In $\Set$, the cover modality $\nabla$ has the coalgebraic
functor $T_\omega$ as its arity. In $\Pos$, one needs to use the
dual $T^\partial_\omega$ of $T_\omega$ for type-checking reasons,
as will become clear in Subsection~\ref{ssec:semant} below.
\end{enumerate}
\begin{rem}
Propositional variables are proposition place-holders. As such, the most natural (and the most free) choice is to simply start with a set of them (i.e., a discrete poset). This is indeed the setting of the most of our examples. The poset setting however allows to provide place-holder patterns, which consequently the semantics must respect.
This can be seen and used as prescribing order between formulas (which is normally done afterwords by the logic and induced by semantics of logical connectives, usually in a form of assuming implications --- and we can do so by using theories).
But what if implications are not at hand, or, we do not want to use (possibly infinite) theories? What if we used place-holder patterns instead? Not many examples are out there, but let us provide some simple intuition behind the possibilities this opens:
\begin{itemize}
    \item[-] Consider the $\{\wedge,\vee\}$-fragment of classical propositional logic, but take $\{ p_i\leq q_i | i\in N \}$ for the poset of propositional variables. This would produce a situation where ``p's always entail q's''.
    \item[-] Similarly, atomic propositions can ``code'' scalable properties often used in questionnaires (where mostly 3--7 options in a form of ``degrees'' are given --- e.g.\ heavy smoker, smoker, occasional smoker). For such settings, the poset of propositional variables can be chosen as e.g.\ $\{ p_i\leq q_i\leq r_i | i\in N \}$.
\end{itemize}
Examples where a partial order (namely inclusion) is built in syntax can be found in literature which consider ordered sorts or predicates: Order-Sorted Predicate Logic has been proposed as a formal knowledge representation languages for handling structural knowledge, such as the classification of objects~\cite{Ober89,Cohn87}, and extended by~\cite{Kaneiwa04} which formalizes a logic programming language with not only a sort hierarchy, but also a predicate hierarchy.
\end{rem}
\begin{rem}\label{rem:arities}
The naturality of the choice of the arity of the finitary conjunction to be $\UU_\omega$, i.e.\ a finitely generated upwards closed set of formulas, can be demonstrated for example as follows: Assume we have a poset of atomic formulas at hand, and think of a conjunction as eventually becoming the infimum in the free algebra of formulas. Whenever $\bigwedge \phi\leq b$, we expect $\bigwedge (\phi\cup b) \equiv \bigwedge \phi$ --- the conjunction is immune to adding arguments greater than the conjunction. Similarly for disjunction and its arity being $\LLL_\omega$.

The choice of arities is in accord with how the free distributive lattice over a poset $X$ can be constructed on $\LLL_\omega\UU_\omega X$: first take the free finite meet completion on $\UU_\omega X$, then the free finite join completion (cf.~\cite{PTJ83}).
\end{rem}
Therefore the formulas of the coalgebraic language should
have the following intuitive description in BNF:
\begin{equation}
\label{eq:intuitive-BNF}
a::= p\mid\bigwedge\phi\mid\bigvee\psi\mid\nabla\alpha
\end{equation}
where $p$ is an atom, $\phi={\uparrow}\{ a_1,\dots,a_k\}$
is a finitely generated upperset of formulas,
$\psi={\downarrow}\{ a_1,\dots,a_k\}$ is a finitely generated lowerset
of formulas, and $\alpha$ is a ``$T^\partial$-tuple'' of formulas.
There is a slight technicality however: since we work in posets
we expect to obtain a poset of formulas. The precise definition
of formulas is achieved by the free algebra construction in
the category $\Pos$.

\begin{defi}[Formulas]
\label{def:lang}
Fix a poset $\At$ of propositional atoms. The {\em language\/}
$\Lang$ is given as an algebra for $\UU_\omega+\LLL_\omega+T^\partial_\omega$,
free on $\At$. The components of the algebraic structure
$a:\UU_\omega\Lang+\LLL_\omega\Lang+T^\partial_\omega\Lang\to\Lang$
will be denoted by
\[
\bigwedge: \UU_{\omega}\Lang\to\Lang,
\quad
\bigvee: \LLL_{\omega}\Lang\to\Lang,
\quad \nabla_{T^\partial_\omega}: T^\partial_\omega\Lang\to\Lang
\]
\end{defi}

\begin{rem}
\label{rem:syntax}
\mbox{}\hfill
\begin{enumerate}
\item
The language $\Lang$ is a poset by its construction.
Moreover, an algebra for $F=\UU_\omega+\LLL_\omega+T^\partial$
free on $\At$ can be defined by a colimit of a transfinite chain
\[
w_{i,i+1} : W_i \to W_{i+1}
\]
where $W_0 = \At$, $W_{i+1} = F(W_i)+\At$ and the connecting
morphisms are defined in the obvious way:
$w_{0,1}:\At\to F(\At)+\At$ is the coproduct injection
and $w_{i+1,i+2}=F(w_{i,i+1})+\At$.

The above chain $w_{i,i+1}$ has $\Lang$ as its colimit;
we denote the colimit injections by $w_i : W_i \to \Lang$.
\item
The transfinite construction of $\Lang$ also shows that
the ``intuitive BNF'' of~\eqref{eq:intuitive-BNF} works.
More in detail, one can show that each formula $a$ in $\Lang$
has a unique finite depth.
Indeed, for every $\monoInline{a}{\bOne}{\Lang}$ there is a least $i$ such that $a \in w_i$.
There are two cases for a fixed $a$ in $\Lang$:
\begin{enumerate}
\item
$i = 0$ means that $a$ is an atom.
\item
$i$ is positive, i.e., $i = k+1$.
Then $a \in w_{k+1}$ holds.
The formula $a$ is not an atom by the definition of $i$; hence $a$ is either
a conjunction, a disjunction or a nabla of an object in $W_k$.

The finite poset of direct subformulas of $a$ can be
obtained as the subobject $\base^F_{F(W_k)}(w_{k+1}(a))$.
\end{enumerate}
\item
We defined the arity of a conjunction to be a finitely
generated upperset of formulas, which itself may not
be finite. We often abuse the notation in writing the formulas,
and list only the finite set of generators of the upperset
to keep the description finite. For example, by
$\bigwedge\{a,b\}$ we implicitly mean the conjunction
applied to the upperset ${\uparrow}{\{ a,b\}}$
generated by $a$ and $b$.

Notice that, for finite sets $A$ and $B$ of formulas such that
${\uparrow}A={\uparrow}B$ holds, the equality
$\bigwedge A=\bigwedge B$ is built into the relaxed
notation. Similarly for disjunctions, the equality ${\downarrow} A = {\downarrow} B$ implies the syntactic equality $\bigvee A = \bigvee B$. In both conjunction and disjunction, the commutativity, associativity and idempotence properties are built into the notation as well, due to the choice of their arity. Moreover, whenever $\phi\subseteq \phi'$ in $\UU_\omega \Lang$, $\bigwedge \phi' \leq_\Lang \bigwedge \phi$, and, similarly whenever $\psi\subseteq \psi'$ in $\LLL_\omega \Lang$, $\bigvee \psi \leq_\Lang \bigvee \psi'$.
\end{enumerate}
\end{rem}
For a given formula $a$, the above remark allows us to define
its finite poset of subformulas and its modal depth inductively.

\begin{defi}[Subformulas and modal depth]
\label{def:subflas}
Given a formula $a$ in $\Lang$, the (finite) subobject $\sfor(a)$ of $\Lang$ is called the subobject of \emph{subformulas} of $a$, and is defined inductively as follows.
Simultaneously we define the \emph{modal depth} $d(a)$.
\begin{enumerate}
\item
For $a$ in $\At$, put $\sfor(a) = \monoInline{a}{\bOne}{\Lang}$
 and $d(a) = 0$.
\item
For $a$ of the form $\bigwedge\phi$,
put
\begin{eqnarray*}
\sfor(a)
& = &
\bigcup\limits_{z\in \base(\phi)}\sfor(z)\cup a
\\
d(a)
& = &
\max\{d(b)\mid\phi\sqni b\}
\end{eqnarray*}
\item
For $a$ of the form $\bigvee\psi$,
put
\begin{eqnarray*}
\sfor(a)
& = &
\bigcup\limits_{z\in \base(\psi)}\sfor(z) \cup a
\\
d(a)
& = &
\max\{d(b)\mid b\sqin\psi\}
\end{eqnarray*}
\item
For $a$ of the form $\nabla\alpha$,
put
\begin{eqnarray*}
\sfor(a)
& = &
\bigcup\limits_{w\in \base(\alpha)}\sfor(w) \cup a
\\
d(a)
& = &
\max\{d(w)\mid w \in \base(\alpha)\}+1
\end{eqnarray*}
\end{enumerate}
Above, all the unions are taken in the lattice of subobjects,
see Remark~\ref{rem:sublattice}.
\end{defi}

\begin{exa}
\label{ex:streamsyntax}
Let us describe the syntax of the logic for the functor
$T = A \times\Id$, where as $A$ we take the poset
$\bTwo = \{ 0 < 1 \}$. If we
are given a poset $\At$ of atomic propositions,
the syntax is defined inductively as follows:
\[
a::=
p
\mid
\bigwedge \phi
\mid
\bigvee \psi
\mid
\nabla_{T^\partial_\omega}(n,a).
\]
In the above, $p$ is any atomic proposition,
$\phi={\uparrow}\{ a_1,\dots,a_k\}$ is a finitely
generated upperset and
$\psi={\downarrow}\{ a_1,\dots,a_k\}$
is a finitely generated lowerset of formulas,
and the $n$ in $\nabla_{T^\partial_\omega}(n,a)$ is an
element of $\bTwo$. Observe that $\nabla$ is monotone with respect to the first argument.

Hence the more relaxed description of the syntax
can be given by
\[
a::=
p
\mid
\bigwedge\{ a_1,\dots,a_k\}
\mid
\bigvee\{ a_1,\dots,a_k\}
\mid
\nabla_{T^\partial_\omega}(n,a).
\]
\end{exa}

\subsection{The semantics of the coalgebraic language}
\label{ssec:semant}
The above language $\Lang$ will be interpreted in coalgebras
for the functor $T$. More precisely, given a coalgebra $c: X\to TX$,
the {\em semantics\/} will be given by a monotone relation
$
\xymatrix@1{
\Vdash_0:
\At
\ar[0,1]|-{\object @{/}}
&
X^\op
}
$
called a monotone {\em valuation\/} of propositional atoms. The valuation being monotone, $x\leq y$ and $x\Vdash_0 p$ implies $y\Vdash_0 p$, and $p\leq_\Lang q$ and $x\Vdash_0 p$ implies $x\Vdash_0 q$.

We extend $\Vdash_0$ to obtain the semantics
$
\xymatrix@1{
\Vdash:
\Lang
\ar[0,1]|-{\object @{/}}
&
X^\op
}
$
of an arbitrary formula $a$ by induction on $a$ as follows:
\begin{eqnarray*}
x\Vdash\bigwedge\phi\
& \mbox{iff} &\
(\forall a)(\phi\sqni a\mbox{ implies }x\Vdash a)
\\
x\Vdash\bigvee\psi\
& \mbox{iff} &\
(\exists a)(a\sqin\psi\mbox{ and }x\Vdash a)
\\
x\Vdash\nabla_{T^\partial_\omega}\alpha\
& \mbox{iff} &\
c(x) (\mathrel{\ol{T^\partial}{\Vdash}\cdot (\nu^{T^\partial}_{\Lang})_{\diamond}}) \alpha
\end{eqnarray*}

\begin{rem}
The semantics for conjunction and disjunction is standard. The semantics for the modal formula $\nabla_{T^\partial_\omega}\alpha$ in a state $x$ is given by checking whether the $T^\partial$-lifted $\Vdash$ relation contains $c(x)$ and $\alpha$: for type-checking reasons, this is achieved by composing the lifted $\Vdash$ relation with the graph of the natural inclusion of $T^\partial_\omega$ in $T^\partial$, i.e.\ $(\nu^{T^\partial}_{\Lang})_{\diamond}$:
  \[
\xymatrix@1{
T^\partial_\omega\Lang
\ar[0,2]|-{\object @{/}}^-{(\nu^{T^\partial}_{\Lang})_{\diamond}}
&
&
T^\partial\Lang
\ar[0,2]|-{\object @{/}}^-{\ol{T^\partial}{\Vdash}}
&
&
(TX)^\op.
}
  \]
We can also equivalently write it like this:
  \[
x\Vdash\nabla_{T^\partial_\omega}\alpha\
\ \ \mbox{iff} \ \ 
c(x) \mathrel{\ol{T^\partial}{{\Vdash}}} \nu^{T^\partial}_{\Lang}(\alpha).
  \]
The semantics of nabla is indeed well defined: in the inductive process of defining the relation $\Vdash$ we use the fact that relation lifting commutes with restrictions (see Definition~\ref{def:operations-on-relations} and Example~\ref{ex:lifting_prop}). In particular, to compute the lifted relation it is enough to have $\Vdash$ restricted to the base of $\alpha$, whose semantics has been defined previously. Unravelling this yields the following simplification, using the lifted restricted relation ${\Vdash}(-,\base(\alpha)-)$: Let $\monoInline{\base(\alpha)}{Z}{\Lang}$ and $z\in Z$ be the unique element with $T^\partial\base(\alpha)(z) = \alpha$.
  \[
x\Vdash\nabla_{T^\partial_\omega}\alpha\
\ \  \mbox{iff} \ \
c(x) \mathrel{\ol{T^\partial}{{\Vdash}(-,T^\partial\base(\alpha)-)}} z.
  \]
\end{rem}

\begin{rem}
\label{rem:semantic-preorder}
Recall from Remark~\ref{rem:syntax} that $\Lang$ is a poset.
The above semantics allows us to define a monotone relation
$
\xymatrix@1{
\sqsubseteq:
\Lang
\ar[0,1]|-{\object @{/}}
&
\Lang
}
$
by putting:
\begin{equation}
\label{eq:semantic-preorder}
\mbox{$a\sqsubseteq b\ $ iff \ for all $c, x$ and all valuations:
$x\Vdash a$ implies $x\Vdash b$.}
\end{equation}
The relation $\sqsubseteq$ is reflexive (i.e.,
the relation ${\leq_\Lang}$ is smaller than ${\sqsubseteq}$)
and transitive
(i.e., the composite relation ${\sqsubseteq}\cdot {\sqsubseteq}$
is smaller than ${\sqsubseteq}$). Therefore $\sqsubseteq$
allows us to define an order-quotient
${\mathcal{Q}}=\Lang/{\sqsubseteq}$ of $\Lang$. The elements
of the poset $\mathcal{Q}$ are equivalence classes
$[a]_\equiv$ where
\[
a\equiv b\ \mbox{iff}\ a\sqsubseteq b \mbox{ and } b\sqsubseteq a
\]
and $\leq_{\mathcal{Q}}$ is the least order such that
$a\sqsubseteq b$ entails $[a]_\equiv\leq_{\mathcal{Q}} [b]_\equiv$.

The quotient $\mathcal{Q}$ carries an algebra structure for
$\UU_\omega+\LLL_\omega+T^\partial_\omega$ that is derived from the algebra structure on $\Lang$.
\end{rem}

\begin{exa}
\label{ex:nabla}
\mbox{}\hfill
\begin{enumerate}
\item
{\em Nabla for the functor $T = A\times\Id$}.
Observe that, $T=T_\omega$, and the functor dual to $T_\omega$ is:
\[
T^\partial_\omega X = (A\times X^\op)^\op = A^\op\times X.
\]
If we fix a coalgebra
$c = \langle out, next \rangle: X\to A \times X$, then it holds that
\[
x \Vdash \nabla_{T^\partial_\omega}(a,b) \quad \mbox{iff} \quad
out(x) \geq_A a \;\mbox{and}\; next(x) \Vdash b
,
\]
since the lifting of the semantics relation
$
\xymatrix@1{
\Vdash:
\Lang
\ar[0,1]|-{\object @{/}}
&
X^\op
}
$
is the relation
\[
\xymatrix@1{
(\leq_{A^\op} \times \Vdash) :
A^\op \times\Lang
\ar[0,1]|-{\object @{/}}
&
A^\op \times X^\op
}
.\]
The monotonicity of $\Vdash$ says that
\begin{enumerate}
\item
If $x \leq_X x^\prime$ and $x \Vdash\nabla_{T^\partial_\omega}(a,b)$, then
$x^\prime \Vdash\nabla_{T^\partial_\omega}(a,b)$, and
\item
if $a \geq_A a'$ and $b \leq_\Lang b'$, then
$x \Vdash
\nabla_{T^\partial_\omega}(a,b)$ implies that
$x \Vdash\nabla_{T^\partial_\omega}(a',b')$.
\end{enumerate}
\item
{\em Nabla for the functor $T = \LLL^A \times \bTwo$}.
Observe that the arity of nabla is:
\[
T^\partial_\omega \Lang=(\UU_\omega \Lang)^A \times \bTwo.
\]
By definition, $x \Vdash \nabla_{T^\partial_\omega} \alpha$ holds
if and only if $c(x) (\mathrel{\ol{T^\partial} {\Vdash} \cdot (\nu^{T^\partial}_{\Lang})_{\diamond}}) \alpha$ holds. Suppose
$c(x)$ is a tuple $(l, i)$ and $\alpha$ is a tuple $(o, j)$. Then
this is the case if and only if
\[
i \geq j
\mbox{ and }
(\forall a \in A) (\forall \phi) (o(a) \sqni \phi  \mbox{ implies } (\exists x) (x \sqin l(a) \mbox{ and } x \Vdash \phi)).
\]
We can again use the monotonicity of the semantics relation $\Vdash$ to weaken the above condition:
$c(x) (\mathrel{\ol{T^\partial} {\Vdash} \cdot (\nu^{T^\partial}_{\Lang})_{\diamond}}) \alpha$ holds if and only if
\[
i \geq j
\mbox{ and }
(\forall a \in A) (\forall \phi \in g(o(a)))(\exists x \in g(l(a))) 
\ x \Vdash \phi.
\]
Here $g(l(a))$ and $g(o(a))$ are, again, the generators for the lowerset $l(a)$ in $X$
and \emph{upperset} $o(a)$ in $\Lang$, respectively.
\item
Recall from Example~\ref{ex:coalgebras} that a frame for positive modal logic is a poset $X$ equipped with a monotone relation
$
\xymatrix@1{
R:
X
\ar[0,1]|-{\object @{/}}
&
X
}
$
that gives rise to two coalgebras
\[
c: X \to\UU X \ \ \mbox{and}\ \ d: X \to \LLL X
\]
defined by $c(x)=\{y\ |\ R(x,y)\}$ and $d(x)=\{y\ |\ R(y,x)\}$. The modalities $\Box$ and
$\blacklozenge$ defined by the equivalences
\begin{eqnarray*}
x\Vdash \Box a
& \mbox{iff} &
\ (\forall y)\ (R(x,y) \mbox{ implies } y\Vdash a)
\\
x\Vdash \blacklozenge a
& \mbox{iff} &
\ (\exists y)\ (R(y,x)\mbox{ and } y\Vdash a)
\end{eqnarray*}
are adjoint in the sense that $a\sqsubseteq\Box b$ holds if and only if $\blacklozenge a\sqsubseteq b$
holds, where $\sqsubseteq$ is the semantic preorder as defined in~\eqref{eq:semantic-preorder}.

Now, using the definition of the modalities $\nabla_{\LLL_\omega}$ and
$\nabla_{\UU_\omega}$ and the corresponding liftings
of the semantics $\Vdash$, we see that
\begin{eqnarray*}
x\Vdash\nabla_{\LLL_\omega}\alpha
&\ \mbox{iff} &
c(x) (\mathrel{\ol{\LLL}{\Vdash} \cdot (\nu^{\LLL}_{\Lang})_{\diamond}}) \alpha \\
&\ \mbox{iff}  &
(\forall y)(y \sqin c(x) \mbox{ implies } (\exists a)\ (a \sqin\alpha \mbox{ and } y\Vdash a)),\\
x\Vdash\nabla_{\UU_\omega}\beta
&\ \mbox{iff} &
d(x) (\mathrel{\ol{\UU}{\Vdash} \cdot (\nu^{\UU}_{\Lang})_{\diamond}}) \beta \\
&\ \mbox{iff}  &
(\forall b) (\beta \sqni b \mbox{ implies } (\exists y)\ (d(x) \sqni y \mbox{ and } y \Vdash b) ).
\end{eqnarray*}
Therefore
\begin{align*}
\nabla_{\LLL_\omega}\alpha
&\mbox{ can be expressed as }
\Box\bigvee\alpha, \\
\nabla_{\UU_\omega}\beta
&\mbox{ can be expressed as }
\bigwedge\blacklozenge\beta,
\end{align*}
and also conversely
\begin{align*}
\Box a
&\mbox{ can be expressed as }
\nabla_{\LLL_\omega}\{a\}, \\
\blacklozenge b
& \mbox{ can be expressed as }
\nabla_{\UU_\omega}\{b\}.
\end{align*}
\item
Let $T = \PP^c$, and $T^\partial_\omega = \PP^c_\omega$. Recall from Example~\ref{ex:coalgebras} that $\PP^c$ coalgebras can be used as semantics for positive modal logic. For a coalgebra $c: X \to \PP^c X$ and a monotone valuation, we have
\begin{eqnarray*}
x\Vdash \Box a
& \mbox{iff} &
\ (\forall y)\ (R(x,y) \mbox{ implies } y\Vdash a)
\\
x\Vdash \Diamond a
& \mbox{iff} &
\ (\exists y)\ (R(x,y)\mbox{ and } y\Vdash a).
\end{eqnarray*}
By the definition of the semantics for $\nabla_{\PP^c_\omega}$ and the relation lifting, we see that for $\alpha$ in $\PP^c_\omega \Lang$
\begin{align*}
x\Vdash \nabla_{\PP^c_\omega}\alpha \ &\mbox{ iff }\  c(x) 
\mathrel{
(\ol{\PP^c}\Vdash \cdot (\nu^{\PP^c}_{\Lang})_{\diamond} )
}
\alpha \\
 &\mbox{ iff }
(\forall x'\in c(x))(\exists a\in \alpha) x'\Vdash a \mbox{ and }
(\forall a\in \alpha)(\exists x'\in c(x)) x'\Vdash a.
\end{align*}
As $\alpha$ is a finitely generated convex subset of formulas, we can create a finitely generated upperset of formulas ${\uparrow}\alpha$, and  a finitely generated lowerset of formulas ${\downarrow}\alpha$. Then
\[
\nabla_{\PP^c_\omega}\alpha\  \mbox{ can be expressed as }\  \Box\bigvee{\downarrow}\alpha \wedge \bigwedge \Diamond {\uparrow}\alpha,
\]
and also conversely
\begin{align*}
\Box a
&\mbox{ can be expressed as }
\nabla_{\PP^c_\omega}\{a\}\vee \nabla_{\PP^c_\omega}\emptyset, \\
\Diamond a
& \mbox{ can be expressed as }
\nabla_{\PP^c_\omega}\{a,\top\},
\end{align*}
just the same way as it is in the classical finitary Moss' logic~\cite{kkv12}.

\item Recall the semantics of small description logic EL from Example~\ref{ex:coalgebras}
The syntax consists of a poset $N^c$ of concept names ordered by concept subsumptions of the form $A\sqsubseteq B$, a poset $N^r$ of role names ordered by role subsumptions of the form $r\sqsubseteq s$, and the following grammar of concepts:
\[
C:=\ A\ |\ \top \ |\ C\sqcap D\ |\ \exists r.C
\]
This results in a poset $\Lang$ of concept expressions. The semantics is based on a discrete poset (i.e.\ a set) $\Delta$ together with: (ii) interpretations $C^I$ of concept names by lowersets of $\Delta$ (i.e.\ subsets of $\Delta$) respecting their order: $C\sqsubseteq D$ implies $C^I\subseteq D^I$, (ii) interpretations $r^I$ of role names by binary relations on $\Delta$, again respecting their order: $r\sqsubseteq s$ implies $r^I\subseteq s^I$. We can see each structure of this kind as a coalgebra
\[
c: \Delta \to (\LLL \Delta)^{N^r} \times \bTwo^{N^c}
\]
which, for each $d\in\Delta$, assigns: for each role name $r$ the lowerset (i.e.\ subset) $\{e\ |\ (d,e)\in r^I\}$, and for each concept name $C$ the value $1$ if $d\in C^I$, and the value $0$ otherwise. Both assignments are monotone. 

For the sake of this example, we shall extract the $v: \Delta \to \bTwo^{N^c}$ part of the coalgebra structure and treat it as a monotone valuation of concept names 
$
\xymatrix@1{
\Vdash:
N^c 
\ar[0,1]|-{\object @{/}}
&
\Delta^\op
}
$
instead.
The relation extends inductively to all concept expressions as follows:
\begin{align*}
    d &\Vdash \top \\
    d &\Vdash C\sqcap D \mbox{ iff }\  d\Vdash C \mbox{ and } d\Vdash D\\
    d &\Vdash \exists r.C \ \ \mbox{ iff }\  \exists e (e\in c(d)(r) \mbox{ and } e\Vdash C)
\end{align*}
The monotonicity of $\Vdash$ entails that if $C\sqsubseteq_\Lang D$ and $d\Vdash C$, than also $d\Vdash D$.

Each $\beta\in (\UU_\omega\Lang)^{N^r}$ can be seen as a monotone assignment of uppersets of concepts to role names.
Unravelling the semantics of the modality
$\nabla_{(\UU_\omega)^{N^r}}$ on a coalgebra $c: \Delta \to (\LLL \Delta)^{N^r}$, we see that 
\begin{eqnarray*}
d\Vdash\nabla_{(\UU_\omega)^{N^r}}\beta
&\ \mbox{iff} &
c(d) (\mathrel{\ol{(\UU)^{N^r}}{\Vdash} \cdot (\nu^{\UU^{N^r}}_{\Lang})_{\diamond}}) \beta \\
&\ \mbox{iff}  &
  (\forall r)((\forall C) (\beta(r) \sqni C \mbox{ implies } (\exists e)\ (c(d)(r) \sqni e \mbox{ and } e \Vdash C) )).
\end{eqnarray*}
Therefore
\begin{align*}
\nabla_{(\UU_\omega)^{N^r}}\beta
&\mbox{ can equivalently be expressed as }
\bigwedge_r(\bigwedge\exists r.[\beta(r)]),
\end{align*}
and also conversely
\begin{align*}
\exists r. C
& \mbox{ can equivalently be expressed as }
\nabla_{(\UU_\omega)^{N^r}}\beta,
\end{align*}
where $\beta \in (\UU_\omega\Lang)^{N^r}$ is a monotone assignment of uppersets of concepts to role names such that $\beta(r) = C{\uparrow}$ and $\beta(s) = \top$ for all $s\neq r$.

\item
Recall frames for distributive substructural logics from
Example~\ref{ex:coalgebras}.
We restrict ourselves to ternary relations $R$
that generate coalgebras of the form
$c_\otimes: X\to\LLL(X\times X)$.

The polynomial coalgebra functor
$T=\LLL(\Id\times \Id)$ is locally monotone and
satisfies BCC.\@ Its relation lifting is easy to compute,
using the properties listed in
Example~\ref{ex:lifting_prop}.
The semantics of the nabla modality with the arity
$T_\omega^\partial=\UU_\omega(\Id\times\Id)$,  works as follows:
\[
x\Vdash \nabla_{T^\partial_\omega}\alpha
\mbox{ iff }
(\forall (a_0,a_1)\sqin\alpha)(\exists (x_0,x_1)\sqin c(x))
\ (x_0\Vdash a_0\ \&\ x_1\Vdash a_1).
\]
Therefore
\[
\nabla_{T^\partial_\omega}\alpha
\mbox{ can be expressed as }
\bigwedge\limits_{\alpha\sqni(a_0,a_1)}(a_0\otimes a_1),
\]
and conversely
\[
(a_0\otimes a_1)
\mbox{ can be expressed as }
\nabla_{T^\partial_\omega}\{(a_0,a_1)\}.
\]
\end{enumerate}
\end{exa}

\subsection{Hennessy-Milner property}
\label{ssec:express}

We now turn to proving that the finitary language defined in
Subsection~\ref{ssec:lang} is adequate and, in the finitary case $T$=$T_\omega$, also expressive for the following notion
of \emph{similarity}, which is defined in terms of relation lifting.

\begin{defi}
\label{def:simul}
We fix two pointed models $(c,\Vdash_c,x_0)$ and $(d,\Vdash_d,y_0)$,
i.e., we fix coalgebras $c:X\to TX$ and $d:Y\to TY$, valuations
$\Vdash_c$ and $\Vdash_d$, and $x_0 \in X$ and $y_0 \in Y$.
\begin{enumerate}
\item
A relation
$
\xymatrix@1{
S:
Y
\ar[0,1]|-{\object @{/}}
&
X
}
$
is called a $T$-\emph{simulation}
(from $(d,\Vdash_d,y_0)$ to $(c,\Vdash_c,x_0)$), if
the following three conditions are satisfied:
\begin{enumerate}
\item
$S(x_0,y_0)$ holds.
\item
For any $x\in X$ and $y\in Y$,
$S(x,y)$ implies $\ol{T}S(c(x),d(y))$.
\item
If $S(x,y)$ holds, then $x\Vdash_c p$ implies
$y\Vdash_d p$, for each atom $p$ in $\At$.
\end{enumerate}
And we say that $(d,\Vdash_d,y_0)$ {\em simulates\/}
$(c,\Vdash_c,x_0)$, if there is a simulation
$
\xymatrix@1{
S:
Y
\ar[0,1]|-{\object @{/}}
&
X
}
$
from $(d,\Vdash_d,y_0)$ to $(c,\Vdash_c,x_0)$.
\item
We say that $(d,\Vdash_d,y_0)$ is {\em modally stronger\/}
than $(c,\Vdash_c,x_0)$, if $x_0\Vdash_c a$ implies $y_0\Vdash_d a$,
for each formula $a$.
\end{enumerate}
\end{defi}

\begin{rem}
The notion of simulation given by the monotone relation lifting
coincides with the one given by Worrell in~\cite{worr}
in the enriched setting of $\V$-categories (of which preorders, hence
also posets, are a special case for $\V = \bTwo$). It is also shown
in~\cite{worr} that similarity coincides with the preorder on the
final coalgebra, whenever the final coalgebra exists.
\end{rem}

\begin{exa}\label{ex:streamsimulation}
Fix the empty poset $\At$ of atomic propositions.
Taking two coalgebras $c: X \to A \times X$ and $d: Y \to A \times Y$
with two distinguished states $x_0 \in X$ and $y_0 \in Y$,
the notion of
simulation yields that $(d,\Vdash,y_0)$
simulates $(c,\Vdash,x_0)$ if the infinite
stream $\gamma$ obtained as the behaviour of $x_0$
is pointwise smaller than the infinite stream $\delta$ obtained
as the behaviour of $y_0$.
\end{exa}

\begin{prop}
\label{prop:express}
Assume $T$ complies with the Assumption~\ref{ass:5.1}, and $\Lang$ is defined as in Definition~\ref{def:lang}.
\begin{enumerate}
\item
The language $\Lang$ is adequate for $T$-coalgebras: if $(d,\Vdash_d,y_0)$ simulates $(c,\Vdash_c,x_0)$, then
$(d,\Vdash_d,y_0)$ is modally stronger than $(c,\Vdash_c,x_0)$.
\end{enumerate}
and
\begin{enumerate}
\setcounter{enumi}{1}
\item
The language $\Lang$ is expressive for $T_\omega$-coalgebras:
if $(d,\Vdash_d,y_0)$ is modally stronger than $(c,\Vdash_c,x_0)$,
then $(d,\Vdash_d,y_0)$ simulates $(c,\Vdash_c,x_0)$.
Moreover, the relation
``being modally stronger'' is a $T_\omega$-simulation.
\end{enumerate}
\end{prop}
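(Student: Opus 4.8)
The plan is to prove the two implications separately, handling adequacy by a single relational induction and concentrating the real work in the one-step separation argument that underlies expressiveness.

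For adequacy I would fix a simulation $S$ and prove, by induction on the subformula structure of $a$ (equivalently on the modal depth, which is finite by Remark~\ref{rem:syntax}), the relational inclusion ${S^\con \cdot {\Vdash_c}} \subseteq {\Vdash_d}$ restricted to the subformulas of $a$; unfolded, this says precisely that $S(x,y)$ and $x \Vdash_c a$ force $y \Vdash_d a$. The atomic case is condition (c) of simulation, and the cases $a = \bigwedge\phi$ and $a = \bigvee\psi$ follow immediately from the meaning of the connectives and the induction hypothesis. The only substantial case is $a = \nabla_{T^\partial}\alpha$. Here I would apply the $2$-functor $\ol{T^\partial}$ to the induction hypothesis on $\base(\alpha)$: monotonicity of $\ol{T^\partial}$ gives $\ol{T^\partial}(S^\con \cdot {\Vdash_c}) \subseteq \ol{T^\partial}{\Vdash_d}$, functoriality rewrites the left-hand side as $\ol{T^\partial}(S^\con) \cdot \ol{T^\partial}{\Vdash_c}$, and the commutation of lifting with converses (Example~\ref{ex:lifting_prop}) identifies $\ol{T^\partial}(S^\con)$ with $(\ol{T}S)^\con$. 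Condition (b), namely $\ol{T}S(c(x),d(y))$, then supplies $(\ol{T}S)^\con$ relating $c(x)$ and $d(y)$; combined with the assumption $c(x)\mathrel{\ol{T^\partial}{\Vdash_c}}\alpha$ (i.e.\ $x \Vdash_c \nabla_{T^\partial}\alpha$) it realises the composite at $(\alpha, d(y))$, yielding $d(y)\mathrel{\ol{T^\partial}{\Vdash_d}}\alpha$, that is $y \Vdash_d \nabla_{T^\partial}\alpha$. Since relation lifting commutes with restrictions, only $\Vdash$ on the finite base $\base(\alpha)$ is required, so the induction hypothesis is exactly what is available.

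For expressiveness I would take $S$ to be the \emph{modally stronger} relation itself, defined by $S(x,y)$ iff $x\Vdash_c a$ implies $y \Vdash_d a$ for every formula $a$, and verify that it is a $T$-simulation; this simultaneously establishes the ``moreover'' clause. Condition (a) is the hypothesis and condition (c) holds because atoms are formulas, so everything reduces to condition (b): $S(x,y)$ implies $\ol{T}S(c(x),d(y))$. I would argue by contraposition. Assuming $\neg\,\ol{T}S(c(x),d(y))$, I would invoke the finitary base of Section~\ref{sec:EM-basis} to replace $c(x)$ and $d(y)$ by their finite posets of successors $\base(c(x))$ and $\base(d(y))$; by the commutation of lifting with restrictions, the failure of $\ol{T}S(c(x),d(y))$ is detected already on $S$ restricted to these two finite posets. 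For each successor pair that is not $S$-related I would choose a separating formula, assemble these finitely many formulas into a single monotone ``$T^\partial$-tuple'' $\alpha \in T^\partial\Lang$ built from $c(x)$ through the coalgebra structure, and consider $\nabla_{T^\partial}\alpha$. The graph-relation clause of the lifting yields $x \Vdash_c \nabla_{T^\partial}\alpha$, whereas the explicit formula \eqref{eq:relation-lifting} together with $\neg\,\ol{T}S(c(x),d(y))$ forbids any witness making $\nabla_{T^\partial}\alpha$ true at $y$, so $y \not\Vdash_d \nabla_{T^\partial}\alpha$; this contradicts $S(x,y)$.

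The main obstacle is exactly condition (b) of the expressiveness direction: converting the failure of $\ol{T}S(c(x),d(y))$ into an honest formula. Two points are delicate. First, one must manufacture from the finitely many successor-level separating formulas a single element $\alpha \in T^\partial\Lang$ over $\Lang$ whose $\nabla$ has the prescribed truth values; this is where finiteness of the bases is indispensable, providing the poset analogue of image-finiteness that classical Hennessy--Milner arguments require. Second, one must read off from \eqref{eq:relation-lifting}, using the $2$-functoriality of $\ol{T}$ and its commutation with graphs and restrictions, that no witness $w$ can validate $\nabla_{T^\partial}\alpha$ at $y$ once $\ol{T}S(c(x),d(y))$ fails. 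Adequacy is essentially routine once the formal properties of $\ol{T}$ and $\ol{T^\partial}$ are in place; the genuine content of the Hennessy--Milner property is concentrated entirely in this one-step separation step.
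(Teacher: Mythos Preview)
Your proposal is correct and follows essentially the same route as the paper. Adequacy is handled identically: induction on formulas, with the $\nabla$ case obtained by applying the $2$-functor $\ol{T^\partial}$ to the induction hypothesis written as a relational inequality and using $\ol{T^\partial}(S^\con)=(\ol{T}S)^\con$.

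For expressiveness the paper argues \emph{directly} rather than by contraposition, but the content is the same and the direct presentation makes explicit the one step you leave vague (the ``assembly''). The paper defines a monotone map $f:X^\op\to\Lang$ by
\[
f(x')=\bigwedge\{\,b_{(x',w)}\mid w\in\base(d(y)),\ \neg S(x',w)\,\},
\]
where $b_{(x',w)}$ is a separating formula with $x'\Vdash_c b_{(x',w)}$ and $w\not\Vdash_d b_{(x',w)}$; finiteness of $\base(d(y))$ makes this a well-formed formula, and monotonicity of $f$ comes from the fact that increasing $x'$ in $X$ only shrinks the index set. One then sets $\alpha=(T^\partial f)(c(x))$. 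Property $x'\Vdash_c f(x')$ lifts to $x\Vdash_c\nabla\alpha$, and the implication ``$w\Vdash_d f(x')\Rightarrow S(x',w)$'' (restricted to the finite bases) lifts to the implication ``$y\Vdash_d\nabla\alpha\Rightarrow\ol{T}S(c(x),d(y))$''. From $S(x,y)$ the former gives the latter, which is condition~(b). Your contrapositive phrasing reads the same lifted implication backwards: if $\ol{T}S(c(x),d(y))$ fails then $y\not\Vdash_d\nabla\alpha$, contradicting $S(x,y)$. So the only thing missing from your sketch is naming the map $f$ and checking its monotonicity; once that is written down, your argument and the paper's coincide.
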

\begin{proof}
(1)
Let us assume that
$(d,\Vdash_d,y_0)$ simulates $(c,\Vdash_c,x_0)$
via a simulation
$
\xymatrix@1{
S:
Y
\ar[0,1]|-{\object @{/}}
&
X
}
$.
We need to prove that $x_0\Vdash_c a$ implies $y_0\Vdash_y a$
for every formula $a$.
The adequacy is proved by induction on the complexity
of a given formula $a$.

The case when $a$ is an atom is immediate from $S$ being
a simulation, the cases of conjunction and disjunction
are easy. For the induction step for $a=\nabla\alpha$, let
us assume that $x_0\Vdash_c \nabla\alpha$ holds.
The induction hypothesis states that for every $x \in X$, $y \in Y$ and $z \in \base(\alpha)$ the following implication holds:
\[
\mbox{If } x \Vdash_c z \mbox{ and } S(x,y), \mbox{ then } y  \Vdash_d z.
\]
The induction hypothesis can equivalently be described as a lax triangle
\begin{equation}
\label{eq:adequacy-ih}
\xymatrixcolsep{4pc}
\xymatrix{
\wt{Z}
\ar[0,2]|-{\object @{/}}^-{\Vdash_d(-,\base(\alpha)-)}
\ar[1,1]|-{\object @{/}}_-{\Vdash_c(-,\base(\alpha)-)\phantom{MM}}
&
&
Y^\op
\\
&
X^\op
\ar[-1,1]|-{\object @{/}}_-{S^\con}
\ar@{}[-1,0]|{\uparrow}
&
}
\end{equation}
denoting the base of 
$
\monoInline{\alpha}
           {Z}
           {T^\partial_\omega\Lang}
$
by
$
\monoInline{\base(\alpha)}
           {\wt{Z}}
           {\Lang}
$.

We shall prove that for any $x$ in $X$, $y$ in $Y$, and $w \in T^\partial_\omega \base(\alpha)$, the following implication holds:
\begin{equation}
\label{eq:adequacy}
\ol{T}S(c(x),d(y))
\mbox{ and }
c(x)(\mathrel{\ol{T^\partial}{\Vdash_c}\cdot (\nu^{T^\partial}_{\Lang})_{\diamond}}) w
\; \mbox{ implies } \;
d(y)(\mathrel{\ol{T^\partial}{\Vdash_d}\cdot (\nu^{T^\partial}_{\Lang})_{\diamond}}) w
\end{equation}
The relation $\ol{T}S(c(x_0),d(y_0))$ holds since $S$ is a simulation. Moreover, $c(x_0)(\mathrel{\ol{T^\partial}{\Vdash_c} \cdot (\nu^{T^\partial}_{\Lang})_{\diamond}}) \alpha$ holds since $x_0\Vdash_c\nabla\alpha$ holds. Since $\alpha \in T^\partial \base(\alpha)$ (to be precise, $\alpha \in T^\partial_\omega \base(\alpha)$, but the base is a finite poset and therefore $T^\partial$ and $T^\partial_\omega$ coincide on it), we could instantiate the implication to get that $d(y_0)(\mathrel{\ol{T^\partial}{\Vdash_d} \cdot (\nu^{T^\partial}_{\Lang})_{\diamond}}) \alpha$ holds, and this would yield $y_0\Vdash_d\nabla\alpha$ as we wanted.

However, the implication~\eqref{eq:adequacy} can be expressed by the following diagram. The lax triangle is simply the image under $\ol{T^\partial}$ of~\eqref{eq:adequacy-ih}, and we formally pre-compose it with graph of the natural inclusion $(\nu^{T^\partial}_{\Lang})_\diamond$ (which, by $\wt{Z}$ being finite, is the identity):
\[
\xymatrixcolsep{4pc}
\xymatrix{
T_\omega^\partial\wt{Z}
\ar[0,1]|-{\object @{/}}^-{(\nu^{T^\partial}_{\Lang})_\diamond}
&
T^\partial\wt{Z}
\ar[0,2]|-{\object @{/}}^-{\ol{T^\partial}{\Vdash_d}(-,T^\partial\base(\alpha)-)}
\ar[1,1]|-{\object @{/}}_-{\ol{T^\partial}{\Vdash_c}(-,T^\partial\base(\alpha)-)\phantom{MM}}
&
&
(T Y)^\op
\\
&
&
(T X)^\op
\ar@{}[-1,0]|{\uparrow}
\ar[-1,1]|-{\object @{/}}_-{(\ol{T}S)^\con}
}
\]

\medskip\noindent
(2) For this part, we assume $T=T_\omega$ (i.e., the coalgebra functor is finitary). As both the coalgebra functor $T_\omega$ and the arity functor $T^\partial_\omega$ are finitary, we can simplify the semantic clause for $x\Vdash \nabla_{T^\partial_\omega}\alpha$ to
\[
c(x) \mathrel{\ol{T^\partial_\omega}{\Vdash}} \alpha.
\]
Expressivity boils down to proving that ``being modally stronger''
is a $T_\omega$-simulation. For the purpose of the proof, write
\[
S(x,y)
\]
to denote that $(d,\Vdash_d,y)$ is modally stronger than
$(c,\Vdash_c,x)$.

Hence the following implication holds for any formula $a$:
\begin{equation}
\label{eq:expressivity1}
\mbox{If }
S(x,y)
\mbox{ and }
x\Vdash_c a
\mbox{, then }
y\Vdash_d a.
\end{equation}
It is clear that $S$ verifies the properties~(a) and~(c)
of $T_\omega$-simulations from Definition~\ref{def:simul}. It remains
to be proved that $S$ satisfies the property~(b):
\begin{equation}
\label{eq:expressivity2}
\mbox{If }
S(x,y)
\mbox{, then }
\ol{T_\omega}S(c(x),d(y)).
\end{equation}
Assume therefore that $S(x,y)$ holds
and denote the corresponding bases of $c(x)$ and $d(y)$ by
\[
\xymatrixcolsep{1pc}
\xymatrix{
\base(c(x)):U
\mono[0,1]
&
X
&
\mbox{and}
&
\base(d(y)):W
\mono[0,1]
&
Y.
}
\]
We first construct a monotone map $f:X^\op\to\Lang$ such that
for any  $x' \in X$ and $w \in \base(d(y)) $ the following two requirements hold:
\begin{eqnarray}
&&
x'\Vdash_c f(x')\label{eq:expressivity3}
\\
&&
w \Vdash_d f(x')
\mbox{ implies }
S^\con(w,x')\label{eq:expressivity4}
\end{eqnarray}
Fix $x'$ in $X$, and for each $w \in \base(d(y))$ such that
\[
\neg S(x',w)
\]
pick a formula $b_{(x',w)}$ for which
$x'\Vdash_c b_{(x',w)}$ but
$w\not\Vdash_d b_{(x',w)}$, which is possible
by $\neg S(x',w)$ and~\eqref{eq:expressivity1}.
Define
\[
f(x')=\bigwedge\limits_{w \in \base(d(y)): \neg S(x',w)}b_{(x',w)}.
\]
The above conjunction is finitary since $W$ is a finite
poset. In case the conjunction is empty, we set $f(x') = \top$. Moreover, $f$ is a monotone map (with $x'' \geq x'$ the number of conjuncts in the definition of $f(x'')$ decreases), and the properties~\eqref{eq:expressivity3} and~\eqref{eq:expressivity4} hold. In particular, given any $u \in \base(c(x))$, the implication
\[
w \Vdash_d f(u)
\mbox{ implies }
S^\con(w,u)
\]
holds.

The latter implication can be expressed as a 2-cell
\[
\xymatrix{
U^\op
\ar@{} [0,4]|-{\uparrow}
&
&
&
&
W^\op
\ar@{<-} `u[llll] `[llll]|-{\object @{/}}_-{S^\con(\base(d(y)^\op)-,\base(c(x)^\op)-)} [llll]
\ar@{<-} `d[llll] `[llll]|-{\object @{/}}^-{\Vdash_d(\base(d(y))-,f\base(c(x))-)} [llll]
}
\]
in $\Rel$.
We apply the functor $\ol{T^\partial_\omega}$ to it and use the properties of relation lifting to obtain a 2-cell
\begin{equation}
\label{eq:expressivity5}
\vcenter{
\xymatrix{
(T_\omega U)^\op
\ar@{} [0,4]|-{\uparrow}
&
&
&
&
(T_\omega W)^\op
\ar@{<-} `u[llll] `[llll]|-{\object @{/}}_-{(\ol{T_\omega}S)^\con(T_\omega\base(d(y))^\op-,T_\omega\base(c(x))^\op-)} [llll]
\ar@{<-} `d[llll] `[llll]|-{\object @{/}}^-{\ol{T^\partial_\omega}{\Vdash_d}(T_\omega\base(d(y))-,T_\omega^\partial f\cdot T_\omega\base(c(x))-)} [llll]
}
}
\end{equation}
By the unit property of $\base$, we know that
\[
c(x) \in T_\omega \base(c(x))
\mbox{ and }
d(y) \in T_\omega \base (d(y))
\]
holds. Therefore we can use~\eqref{eq:expressivity5} and the definition of the converse of a relation to observe that
$
\ol{T_\omega}S(c(x),d(y))
$
holds whenever
$
d(y)\mathrel{\ol{T_\omega^\partial}{\Vdash_d}}(T_\omega^\partial f)c(x)
$
holds.

The relation $
d(y)\mathrel{\ol{T_\omega^\partial}{\Vdash_d}}(T_\omega^\partial f)c(x)
$ holds iff
$
y\Vdash_d\nabla((T_\omega^\partial f)c(x))
$ holds. Since we assume that $S(x,y)$ holds, it is enough to prove that $x\Vdash_c\nabla((T_\omega^\partial f)c(x))$ holds, and $
y\Vdash_d\nabla((T_\omega^\partial f)c(x))
$ will follow.

By~\eqref{eq:expressivity3} we know that $x'\Vdash_c f(x')$ holds
for every $x'$. Thus we get the inequality  $\id_X\leq{\Vdash_c}(-,f-)$,
and by applying the functor $\ol{T_\omega^\partial}$ to it, we obtain the inequality
$\id_{T_\omega^\partial X}\leq\ol{T_\omega^\partial}{\Vdash_c}(-,T_\omega^\partial f-)$.
Hence $c(x)\mathrel{\ol{T_\omega^\partial}{\Vdash_c}}(T_\omega^\partial f)c(x)$ holds,
and therefore $x\Vdash_c\nabla((T_\omega^\partial f)c(x))$ holds as required.
\end{proof}

\subsection{A dual modality $\Delta$}
\label{ssec:delta}

In the preceding paragraphs we proved that the coalgebraic
language $\Lang$ with a single nabla modality is adequate, and in the finitary case also expressive. Nevertheless, we introduce another
modality $\Delta$ in this subsection. The semantics of this
modality will show that $\Delta$ is, in a sense, dual
to $\nabla$. Why would we want
to extend the language with a new modality, knowing that
the language with nabla as the only modality is already
expressive? It will turn out that the dual modality
is crucial in designing a \emph{cut-free} two-sided sequent
calculus which is sound and complete for the logic of $T$-coalgebras.

The definition of semantics of $\Delta$ below is
a straightforward adaptation of a similar modality
studied in~\cite{kissvenema} in the context of complementation of coalgebraic automata in
the case of $\Set$ and classical Moss'
logic, where $\Delta$ is
the \emph{boolean} dual of $\nabla$.
For the classical Moss' logic in $\Set$, the dual modality $\Delta$, and its mutual definability with
the modality $\nabla$, played a crucial role
in the formulation of a cut-free \emph{two-sided} sequent calculus as shown in~\cite{bpv13}.
Without the boolean negation, a one-sided sequent calculus is
not available in our case already for the propositional
part of the language $\Lang$. For that reason we aim at a two-sided calculus,
and therefore it is convenient to have a modality
dual to nabla in the language.

We extend the language $\Lang$ with a monotone modality
\[
\Delta_{T^\partial_\omega}: T^\partial_\omega\Lang\to\Lang.
\]
Given a coalgebra $c: X\to TX$ and a monotone valuation
$
\xymatrix@1{
\Vdash:
\Lang
\ar[0,1]|-{\object @{/}}
&
X^\op
}
$,
the semantics of $\Delta_{T^\partial}$
is defined \emph{negatively} by the relation lifting of the \emph{negated} relation
$
\xymatrix@1{
\nVdash:
\Lang^\op
\ar[0,1]|-{\object @{/}}
&
X
}
$
as follows:
\[
x\nVdash\Delta\alpha\
\mbox{ iff }\
c(x) (\mathrel{\ol{T}{\nVdash}}\cdot (\nu^{T^\partial}_{\Lang})_{\diamond} )\alpha.
\]

Before we discuss the mutual relation of the two modalities $\nabla$ and $\Delta$,
we illustrate the semantics of $\Delta$ on the running examples:

\begin{exa}
\mbox{}\hfill
\begin{enumerate}
\item
{\em Delta for the functor $T = A\times\Id$}.
Let us fix a coalgebra
$c = \langle out, next \rangle: X\to A \times X$. The semantics of the modality $\Delta$ is then given as follows:
\[
x \Vdash \Delta(a,b) \quad \mbox{iff} \quad
out(x) \nleq_A a \;\mbox{or}\; next(x) \Vdash b
.\]
\item {\em Delta for the functor $T = (\Id)^A \times \bTwo$}.
Fix a coalgebra $c: X \to X^A \times \bTwo$. If for some state $x \in X$ we denote by $c(x) = (f,i)$ the successors and output of $x$, the semantics of
the modality $\Delta$ in the state $x$ is given as follows:
\[x \Vdash \Delta(\Phi,j) \quad \mbox{iff} \quad j \nleq i \;\mbox{or}\; (\exists a) f(a) \Vdash \Phi(a)\]
\item
{\em Delta for the functor $T = (\LLL)^A \times \bTwo$}.
For a coalgebra $c: X \to (\LLL X)^A \times \bTwo$ and for its state $x$ with $c(x) = (l,i)$, we have that $x \Vdash \Delta (p,j)$ holds if and
only if
\[
j \nleq i
\mbox{ or }
(\exists a \in A) (\exists y \sqin l(a)) (\forall \phi \sqin p(a))
\ y \Vdash \phi.
\]
\item
Consider again frames for modal
logic with adjoint modalities (see Example~\ref{ex:coalgebras}).
We only consider coalgebras of the form
$c: X\to\UU X$ for this example.
The semantics of $\Delta_{\LLL_\omega}$ works as follows:
\begin{eqnarray*}
x\nVdash\Delta_{\LLL_\omega}\beta
& \equiv &
(c(x) (\mathrel{\ol{\UU}{\nVdash}} \cdot (\nu^{\LLL}_{\Lang})_{\diamond}) \beta)
\\
& \equiv &
((\forall b) (\beta \sqni b \mbox{ implies } (\exists x') (c(x) \sqni x' \mbox{ and } x' \nVdash b))
\\
& \equiv &
\neg (\exists b) (\beta \sqni b \mbox{ and } (\forall x') ( c(x) \sqni x' \mbox{ implies } x' \Vdash b ) )
\end{eqnarray*}
Therefore
\[
\Delta_{\LLL_\omega}\beta
\equiv
\bigvee\Box\beta.
\]
Recall from Example~\ref{ex:nabla} that
\[
\Box b\equiv\nabla_{\LLL_\omega}\{b\}.
\]
We see that in this particular example
$\Delta$ is definable by $\nabla$ as follows:
\[
\Delta_{\LLL_\omega}\beta
\equiv
\bigvee\limits_{b\sqin \beta}\nabla_{\LLL_\omega}\{b\}.
\]
Since the arity functor $\LLL_\omega$ is finitary,
the above expression is a well-formed formula.
\item Consider $\PP^c$ coalgebras for positive modal logic (see Example~\ref{ex:coalgebras}). The semantics of $\Delta_{\PP^c_\omega}$ works as follows
\begin{eqnarray*}
x\nVdash\Delta_{\PP^c_\omega}\beta
& \equiv &
(c(x) (\mathrel{\ol{\PP^c}{\nVdash}} \cdot (\nu^{\PP^c}_{\Lang})_{\diamond}) \beta)
\\
& \equiv &
(\forall x'\in c(x)) (\exists a\in\alpha) x'\nVdash a \ \mbox{ and } \ 
(\forall a\in\alpha) (\exists x'\in c(x)) x'\nVdash a
\\
& \equiv &
\neg ((\exists x'\in c(x)) (\forall a\in\alpha) x'\Vdash a \ \mbox{ or } \ 
(\exists a\in\alpha) (\forall x'\in c(x)) x'\Vdash a)
\end{eqnarray*}
Therefore 
\[
\Delta_{\PP^c_{\omega}}\beta \equiv \Diamond \bigwedge {\uparrow}\beta \vee \bigvee \Box {\downarrow}\beta.
\]
Recall from Example~\ref{ex:nabla} that
\begin{align*}
\Box a
&\mbox{ can be expressed as }
\nabla_{\PP^c_\omega}\{a\}\vee \nabla_{\PP^c_\omega}\emptyset, \\
\Diamond a
& \mbox{ can be expressed as }
\nabla_{\PP^c_\omega}\{a,\top\},
\end{align*}
Thus $\Delta_{\PP^c_\omega}\beta$ can be expressed as a disjunction of $\nabla_{\PP^c_\omega}$-formulas as follows:
\[
\Delta_{\PP^c_\omega}\beta \equiv \nabla_{\PP^c_\omega}\{\bigwedge {\uparrow}\beta,\top\}\vee \bigvee \{ \nabla_{\PP^c_\omega}\{b\},\nabla_{\PP^c_\omega}\emptyset \ 
|\ b\in{\downarrow}\beta
\}.
\]
\end{enumerate}
\end{exa}
 The above example suggests that the modality $\Delta$ might always be expressible by
a certain disjunction of $\nabla$ formulas. This is indeed true, but the disjunction need not in general be finitely generated: for some coalgebra functors the disjunction of the $\nabla$ formulas needed in the expression is inherently infinite and thus $\Delta$ is not definable by
a well-formed formula of the finitary language. The mutual definability of the two modalities becomes important when formulating rules of the sequent calculus. 

To explain the relationship between the modalities $\nabla$ and $\Delta$, we need to unravel the combinatorial principle underlying their mutual translations. 
It will not be as simple as the previous example suggests, but let us have a glimpse at it first. Take an upperset ${\uparrow}{b}$ for some $b\sqin \beta$, and generate a lowerset from it in $\LLL_\omega\UU_\omega\Lang$. Call it $\Phi_b$. It has the following property: $\Phi_b \neg\ol{\UU_\omega}{\not\sqni} \beta$. Moreover, applying the $\LLL_\omega\bigwedge$ map to $\Phi_b$ yields an element in $\LLL_\omega\Lang$, namely, the lowerset of conjunctions of uppersets containing $b$. It is generated by $\bigwedge{\uparrow}{b}$. We may apply $\nabla_{\LLL_\omega}$ to it. The above example suggests, that one of the formulas $\nabla_{\LLL_\omega}\bigwedge{\uparrow}{b}$ will be true whenever $\Delta_{\LLL_\omega}\beta$ is true.

The underlying operations suggested by the above are in particular the lifted relations $\neg\ol{T}{\not\sqin}$ and $\neg\ol{T}{\not\sqni}$.
We describe them first, as they will play a crucial role in the mutual definitions. We provide an inductive definition of these relations for the case of $T$ being a Kripke-polynomial functor.

\begin{exa}
We state explicit formulas for the relation
$
\xymatrix@1{
\neg\ol{T}{\not\sqin}:
T^\partial\LLL_\omega X
\ar[0,1]|-{\object @{/}}
&
T^\partial X
}
$
when $T$ is a Kripke-polynomial functor.
\begin{enumerate}
\item
$T=\const_E$: The relation
$
\xymatrix@1{
\neg\ol{E}{\not\sqin}:
E^\op
\ar[0,1]|-{\object @{/}}
&
E^\op
}
$
is the relation
$
\xymatrix@1{
\not\leq_E:
E^\op
\ar[0,1]|-{\object @{/}}
&
E^\op
}
$.
\item
$T=\Id$: The relation
$
\xymatrix@1{
\neg \ol{\Id}{\not\sqin}:
\LLL_\omega X
\ar[0,1]|-{\object @{/}}
&
X
}
$
is the relation
$
\xymatrix@1{
\sqin:
\LLL_\omega X
\ar[0,1]|-{\object @{/}}
&
X
}
$.
\item
$T=T_1+T_2$: Let $\alpha$ be in $T_i^\partial X$ and $\Phi$ be in $T_j^\partial \LLL_\omega X$ for some $i,j$ in $\{1,2\}$. The relation $\alpha \mathrel{\neg(\ol{T_1 + T_2}{\not\sqin})} \Phi$ holds iff $i \neq j$ or
$\alpha \mathrel{\neg\ol{T_i}{\not\sqin}} \Phi$ with $i = j$.
\item
$T=T_1\times T_2$: The relation $(\alpha_1,\alpha_2) \mathrel{\neg(\ol{T_1 \times T_2}{\not\sqin})} (\Phi_1,\Phi_2)$ holds  iff
$\alpha_1 \mathrel{\neg\ol{T_1}{\not\sqin}} \Phi_1$ or
$\alpha_2 \mathrel{\neg\ol{T_2}{\not\sqin}} \Phi_2$ holds.
\item
$T=T_1^E$: Generalising the case (4), the relation $\alpha \mathrel{\neg\ol{T_1^E}{\not\sqin}} \Phi$ holds iff there is some $e$ from $E$ such that $\alpha(e) \mathrel{\neg\ol{T_1}{\not\sqin}} \Phi(e)$ holds.
\item
$T=\LLL_\omega T_1$: The relation
$v \mathrel{\neg\ol{\LLL_\omega T_1}{\not\sqin}} u$
holds iff the following condition is satisfied:
\[
(\exists \alpha \sqin v)(\forall \Phi \sqin u) \; \alpha \mathrel{\neg\ol{T_1}{\not\sqin}} \Phi.
\]
\end{enumerate}
The formulas for the relation
$
\xymatrix@1{
\neg\ol{T}{\not\sqni}:
T^\partial\LLL_\omega X
\ar[0,1]|-{\object @{/}}
&
T^\partial X
}
$
are easy to obtain by dual reasoning. For example, the relation $u \mathrel{\neg\ol{\LLL_\omega T_1}{\not\sqni}} v$ holds if and only if the following condition is satisfied:
\[
(\exists \Phi \sqin u)(\forall \alpha \sqin v) \; \Phi \mathrel{\neg\ol{T_1}{\not\sqni}} \alpha.
\]
\end{exa}

Next we define collections $R_\alpha$ and $L_\beta$, parametric in $\alpha \in T^\partial_\omega \Lang$ and $\beta \in T^\partial_\omega \Lang$ respectively.  
$R_\alpha$ contains all $\Psi \in T^\partial_\omega \LLL_\omega \base(\alpha)$ of which $\alpha$ is not a lifted non-member. Similarly, $L_\beta$ contains all $\Phi \in T^\partial_\omega \UU_\omega \base(\beta)$ of which $\beta$ is not a lifted non-member.
The collection $R_\alpha$ will be explicitly used to rewrite a formula $\nabla_{T^\partial_\omega}\alpha$ as a conjunction of delta-formulas by the right rule $\nabla$-r in the calculus in the next section (hence the notation). Similarly, $L_\beta$ will be used in rewriting a formula $\Delta_{T^\partial_\omega}\beta$ as a disjunction of nabla-formulas by the left rule $\Delta$-l in the calculus.

\begin{defi}
\label{def:collections_L_R} Let us fix elements $\alpha \in T^\partial_\omega \Lang$ and $\beta \in T^\partial_\omega \Lang$.
We define the collections $R_\alpha$ and $L_\beta$ as follows:
\begin{enumerate}
\item
$R_\alpha$ is the collection of all elements $\Psi \in T^\partial_\omega \LLL_\omega \base(\alpha)$ with
\[
\alpha \mathrel{\neg\ol{T_\omega}{\not\sqin}} \Psi.
\]
\item
$L_\beta$ is the collection of all elements $\Phi \in T^\partial_\omega \UU_\omega \base(\beta)$ with
\[
\Phi \mathrel{\neg\ol{T_\omega}{\not\sqni}} \beta.
\]
\end{enumerate}
\end{defi}

\begin{exa}\label{ex:RL}
To illustrate the above definition we give two simple examples:
\begin{enumerate}
\item Consider $T^\partial_\omega = \LLL_\omega$, thus $T_\omega = \UU_\omega$. Note that
$\alpha\neg\ol{\UU_\omega}\not\sqin\Psi$ holds iff
\[
\exists \psi (\Psi\sqni \psi \mbox{ and } \forall a (\alpha\sqni a \mbox{ implies } a\sqin \psi)).
\]
For $\alpha = \{\bot\}$, the collection $R_{\{\bot\}}$ consists of those elements $\Psi$ of $\LLL_\omega\LLL_\omega\{\bot\}$, where some $\psi\sqin\Psi$ contains $\bot$. The only such lowerset $\Psi$ is $\{\emptyset,\{\bot\}\}$. Thus $R_{\{\bot\}}$ equals $\{\{\emptyset,\{\bot\}\}\}$.

\item Consider $T^\partial = \mathbb{N}^\op\times\Id$ where $\mathbb{N}$ is the poset of natural numbers with their natural order, and $T = \mathbb{N}\times\Id$. For $\alpha = (5,b\wedge c)$, the collection $R_{(5,b\wedge c)}$ consists of elements $(n,\psi)$ of $N^\op\times\LLL_\omega\base(5,b\wedge c)$ where $5\nleq n$ or $b\wedge c\sqin\psi$. Note that $\LLL_\omega\base(5,b\wedge c) = \{\emptyset,\{b\wedge c\}\}$. Therefore $R_{(5,b\wedge c)} = \{(n,\emptyset)\ |\ 5\nleq n\}\cup \{(n,\{b\wedge c\})\ |\ n\in N\}$.

\end{enumerate}
\end{exa}
To compare the current setting with the classical Moss' logic: It was shown in~\cite{kissvenema} that the two modalities $\nabla$ and $\Delta$ are mutually definable already in the positive fragment of the classical Moss' logic. Their mutual definability was used in~\cite{bpv13} in designing a two-sided sequent proof system for the logic.
We will state a similar fact in the poset-setting in the following two propositions. They will be applied in the next section, when we define a sequent proof system.

\begin{prop}
\label{prop:deltadl}
For each coalgebra $c: X\to TX$ and every semantics relation $\Vdash_c$,
the relations
\[
\xymatrixcolsep{4pc}
\xymatrix{
T_\omega^\partial\Lang
\ar[0,1]|-{\object @{/}}^-{(\nu^{T^\partial}_{\Lang})_\diamond}
&
T^\partial\Lang
\ar[0,1]|-{\object @{/}}^-{{\Vdash_c}(-,\Delta-)}
&
X^\op
}
\]
and
\[
\xymatrixcolsep{5pc}
\xymatrix{
T^\partial_\omega\Lang
\ar[0,1]|-{\object @{/}}^-{\neg(\ol{T_\omega}{\not\sqni})}
&
T^\partial_\omega\UU_\omega\Lang
\ar[0,1]|-{\object @{/}}^-{{\Vdash_c}(-,\nabla T^\partial_\omega\bigwedge-)}
&
X^\op
}
\]
are equal.

In other words, for each $x$ in $X$, and each $\beta$
in $T^\partial_\omega\Lang$, we have that $x\Vdash_c\Delta\beta$ holds if and only if
there is a $\Phi_x$ in $T^\partial_\omega\UU_\omega\Lang$ such that
\[
\Phi_x \mathrel{\neg{\ol{T_\omega}{\not\sqni}}} \beta
\mbox{ and }\
x\Vdash_c\nabla(T^\partial_\omega\bigwedge)\Phi_x.
\]
Whenever $L_\beta$ is finite, we have the following semantic equivalence:
\[
\Delta\beta
\equiv \bigvee\limits_{\Phi \in L_\beta}
\nabla(T^\partial_\omega\bigwedge)\Phi
.
\]
\end{prop}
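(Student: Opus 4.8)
The plan is to strip the coalgebra $c$ from both sides and reduce the statement to a single identity of monotone relations that makes no reference to $c$ at all. By the definition of the semantics of $\Delta$, the relation ${\Vdash_c}(-,\Delta-)$ is obtained from $\neg(\ol{T}{\nVdash_c})\colon T^\partial\Lang\to(TX)^\op$ by restricting along $c$ on the target, since $x\Vdash_c\Delta\beta$ holds iff $\neg\big(c(x)\mathrel{\ol{T}{\nVdash_c}}\beta\big)$. Likewise $x\Vdash_c\nabla(T^\partial\bigwedge)\Phi$ holds iff $c(x)\mathrel{\ol{T^\partial}{\Vdash_c}}(T^\partial\bigwedge)\Phi$, and since relation lifting commutes with restrictions (Example~\ref{ex:lifting_prop}(3)), the relation ${\Vdash_c}(-,\nabla T^\partial\bigwedge-)$ is the restriction along $c$ of $\ol{T^\partial}\big({\Vdash_c}(-,\bigwedge-)\big)\colon T^\partial\UU_\omega\Lang\to(TX)^\op$. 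As restriction along $c$ acts on the target while $\neg(\ol{T}{\not\sqni})$ sits on the source, the two relations in the statement coincide as soon as we prove the $c$-free identity
\[
\neg(\ol{T}{\nVdash_c})\;=\;\ol{T^\partial}\big({\Vdash_c}(-,\bigwedge-)\big)\cdot\neg(\ol{T}{\not\sqni})
\]
of relations $T^\partial\Lang\to(TX)^\op$, where ${\Vdash_c}$ may be treated as an arbitrary monotone relation $R\colon\Lang\to X^\op$.

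Next I would rewrite the conjunction. From the clause $x\Vdash_c\bigwedge\phi$ iff $(\forall a)(\phi\sqni a\Rightarrow x\Vdash_c a)$ one reads off $\neg\big({\Vdash_c}(-,\bigwedge-)\big)=(\nVdash_c)\cdot{\sqni}^{\con}$, with $\sqni\colon\Lang\to\UU_\omega\Lang$ the membership relation; dually $\neg(\ol{T}{\not\sqni})=\neg\big(\ol{T}(\neg{\sqni})\big)$ since ${\not\sqni}=\neg{\sqni}$. Using that relation lifting commutes with converses, $\ol{T^\partial}S=(\ol{T}(S^{\con}))^{\con}$ (Example~\ref{ex:lifting_prop}(2)), together with the standard facts that negation and converse commute and that $(S\cdot U)^{\con}=U^{\con}\cdot S^{\con}$, the displayed $c$-free identity becomes a statement purely about $\ol{T}$ applied to composites of the two relations $\nVdash_c=\neg R$ and $\neg{\sqni}$ together with their converses. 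Writing $\widehat{T}Q:=\neg\big(\ol{T}(\neg Q)\big)$ for the de Morgan dual lifting, the whole proposition reduces to the composition law
\[
\widehat{T}R\;=\;\ol{T^\partial}\big(\neg((\neg R)\cdot{\sqni}^{\con})\big)\cdot\widehat{T}{\sqni}
\]
relating $\widehat{T}$ to the genuine dual lifting $\ol{T^\partial}$, bridged by the membership relation.

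The main obstacle is exactly this last law, because relation lifting does \emph{not} commute with negation, so the two occurrences of $\widehat{T}$ cannot be collapsed formally; the conjunction/membership bridge is precisely what makes the two sides agree. To prove it I would unfold all three liftings by the explicit span formula~\eqref{eq:relation-lifting}: an element of $TX$ is $\ol{T}$-related to an element of the other $T$-object exactly when a witness in the $T$-image of the underlying relation sits appropriately in the two graph orders. Spelling out $\widehat{T}R$, $\widehat{T}{\sqni}$ and $\ol{T^\partial}\big(\cdots\big)$ in this way turns the law into a pair of implications between quantified statements over such witnesses; the forward direction builds a witness for $\widehat{T}R$ by pairing an $\ol{T^\partial}$-witness with a $\widehat{T}{\sqni}$-witness and using that membership composed with the conjunction clause reproduces $\neg R$, while the backward direction factors a given $\widehat{T}R$-witness through membership. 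This is the poset analogue of the slimness/redistribution argument used for $\Set$ in~\cite{bpv13,kissvenema}, and I expect it to absorb essentially all of the technical work; monotonicity of $R$ is used, as in Example~\ref{ex:nabla}, to pass between uppersets and their finite sets of generators.

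Finally, unfolding the proven equality of relations pointwise yields precisely the ``in other words'' statement: $x\Vdash_c\Delta\beta$ holds iff there is some $\Phi\in T^\partial\UU_\omega\Lang$ with $\Phi\mathrel{\neg(\ol{T}{\not\sqni})}\beta$ and $x\Vdash_c\nabla(T^\partial\bigwedge)\Phi$. For the last clause I would note that, by the counit property of the base (Remark~\ref{rem:3.7.bis}), $\beta$ already lies in $T^\partial\base(\beta)$, so any witness $\Phi$ may be replaced by its restriction to $\UU_\omega\base(\beta)$ without affecting either $\Phi\mathrel{\neg(\ol{T}{\not\sqni})}\beta$ or the semantics of $\nabla(T^\partial\bigwedge)\Phi$; hence the existential quantifier may be taken over $\Phi\in T^\partial\UU_\omega\base(\beta)$, i.e.\ over the collection $L_\beta$ of Definition~\ref{def:collections_L_R}. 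When $L_\beta$ is finite this existential is a finite join, giving the asserted semantic equivalence $\Delta\beta\equiv\bigvee_{\Phi\in L_\beta}\nabla(T^\partial\bigwedge)\Phi$.
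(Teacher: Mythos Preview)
Your reduction to a $c$-free identity is sound: both displayed relations are obtained by restricting along $c$ on the target, so it suffices to prove
\[
\neg(\ol{T}{\nVdash_c})\;=\;\ol{T^\partial}\big({\Vdash_c}(-,\bigwedge-)\big)\cdot\neg(\ol{T}{\not\sqni})
\]
as relations $T^\partial\Lang\to(TX)^\op$. This is a pleasant streamlining of the paper's presentation, which keeps $c(x)$ around throughout.

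The gap is in the hard direction of this identity, namely producing a $\Phi$ from the hypothesis $\gamma\mathrel{\neg(\ol{T}{\nVdash_c})}\beta$. You describe this as ``factoring a given $\widehat{T}R$-witness through membership'', but $\widehat{T}R=\neg\ol{T}(\neg R)$ is the \emph{negation} of a lifted relation: it is a universal statement (``for all $w\in T(\neg R)$, not \ldots''), and there is no span-witness in $T(\text{anything})$ to factor. The span formula~\eqref{eq:relation-lifting} gives you witnesses for $\ol{T}$, not for $\widehat{T}$, so the proposed unfolding cannot produce $\Phi$ by manipulation of existing witnesses. The paper supplies exactly the missing idea: one \emph{constructs} $\Phi$ as $(Tf^\sharp)(\gamma)$, where $f^\sharp:X\to(\UU_\omega\Lang)^\op$ sends $x$ to the upperset $\{w\in\base(\beta)\mid x\Vdash_c w\}$. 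That this particular $\Phi$ satisfies both required properties is then argued via two separate 2-cells in $\Rel$ and their $\ol{T}$-images, using contraposition for the $\neg(\ol{T}{\not\sqni})$ clause. None of this is recoverable from your witness-pairing sketch.

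A second, related issue: you defer the finiteness (base) considerations to the final paragraph, proposing to restrict an already-found $\Phi\in T^\partial\UU_\omega\Lang$ to $T^\partial\UU_\omega\base(\beta)$ after the fact. But there is no restriction map in that direction with the required compatibility; the paper instead builds $f^\sharp$ so that it factors through $(\UU_\omega\base(\beta))^\op$ from the outset, which is precisely why $\base(\beta)$ is finite and why $\Phi_x$ lands in $L_\beta$ automatically. This is not a cosmetic point: it is how the argument connects to Definition~\ref{def:collections_L_R} and hence to the final semantic equivalence.
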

\begin{proof}
The guiding ideas in the proof are quite simple, however, much of the work is due to type-checking reasons. 
We divide the proof into several parts:
\begin{enumerate}
\item
The inequality
$
{\Vdash_c}(-,\Delta-)
\cdot
(\nu^{T^\partial}_{\Lang})_\diamond
\leq
{\Vdash_c}(-,\nabla T^\partial_\omega\bigwedge-)
\cdot
\neg(\ol{T_\omega}{\not\sqni})
$.

Suppose that $x\Vdash\Delta\beta$ holds. We will construct an element $\Phi$ in $(T^\partial_\omega\UU_\omega \Lang)^\op$
such that
\[
\Phi \mathrel{\neg{\ol{T_\omega}{\not\sqni}}} \beta
\
\mbox{ and }
\
x\Vdash_c \nabla(T^\partial_\omega\bigwedge)\Phi.
\]
The main idea for this part of the proof is to simply consider the monotone map assigning to a state $x$ the finitely generated upperset of formulas from $\base(\beta)$ it satisfies:
\[
f^\sharp: X\to(\UU_\omega \Lang)^\op,
\quad
x\mapsto {\uparrow} \{w \in \base(\beta) \ |\ x\Vdash_c w \}.
\]
Most of what follows concerns restricting along $\base(\beta)$.
Using the
$
\monoInline{\base(\beta)}
           {W}
           {\Lang}
$ with $W$ finite, we can factorise $f^\sharp$ as follows:
\[
\xymatrixcolsep{1pc}
\xymatrix{
X
\ar[0,2]^{f}
\ar[1,1]_{f^\sharp}
&
&
(\UU_\omega W)^\op
\ar[1,-1]^{(\UU_\omega \base(\beta))^\op}
\\
&
(\UU_\omega \Lang)^\op
&
}
\]
The mapping $f$ with a finite range is then defined as
\[
f: X\to(\UU_\omega W)^\op,
\quad
x\mapsto \{w \in W \ |\ x\Vdash_c \base(\beta)(w) \}.
\]
Therefore, applying the functor $T$ to $f$ yields an element $(Tf)c(x)$ in $(T^\partial\UU_\omega W)^\op$. 
Observe first that because $W$ (and therefore $\UU_\omega W$) is a finite poset, we have that $T^\partial(\UU_\omega W)^\op = T^\partial_\omega(\UU_\omega W)^\op$ (i.e., the corresponding natural inclusion $\nu^{T^\partial}_{\UU_\omega W}$ is the identity). We may therefore define the element $\Phi_x$ in $(T^\partial_\omega\UU_\omega W)^\op$ to be the unique one with
\[
\nu^{T^\partial}_{\UU_\omega W}(\Phi_x) =(Tf)c(x).
\]
As $(\UU_\omega W)^\op$ is included in $(\UU_\omega \Lang)^\op$ via a mono morphism $(\UU_\omega \base(\beta))^\op$, we may put 
\[
\Phi = (T^\partial_\omega\UU_\omega \base(\beta))^\op(\Phi_x),
\]
and see that it is indeed an element of $(T^\partial_\omega\UU_\omega \Lang)^\op$. Relaxing the typing, we could say that $\Phi,\Phi_x$, and $(Tf)c(x)$ are the same thing. The situation is depicted in the following naturality square:
\[
\xymatrixcolsep{1pc}
\xymatrix{
\Phi_x :T^\partial_\omega\UU_\omega W
\ar@{=}[0,2]^{\nu_{\UU_\omega W}^{T^\partial}}
\ar@{>->}[1,0]_{T^\partial_\omega\UU_\omega\base(\beta)}
&
&
T^\partial\UU_\omega W
\ar@{>->}[1,0]^{T^\partial\UU_\omega\base(\beta)}
&
: (Tf)c(x)
\\
\Phi :T^\partial_\omega\UU_\omega \Lang
\ar@{>->}[0,2]^{\nu_{\UU_\omega \Lang}^{T^\partial}}
&
&
T^\partial\UU_\omega \Lang
&
}
\]
Let 
$
\xymatrix@1{
\not\sqni':
W^\op
\ar[0,1]|-{\object @{/}}
&
(\UU_\omega W)^\op
}
$
be the restriction of the relation 
$
\xymatrix@1{
\not\sqni:
\Lang^\op
\ar[0,1]|-{\object @{/}}
&
(\UU_\omega \Lang)^\op
}
$
along $\base(\beta)$, i.e.\ $\not\sqni'\ =\ \not\sqni(\UU_\omega\base (\beta) - ,\base(\beta) - )$. We obtain that $\Phi$ is related with $\beta$ via the lifted $\not\sqni$ relation, i.e.\ $\Phi \mathrel{{\ol{T_\omega}{\not\sqni}}} \beta$,  if and only if $\Phi_x$ and $\beta$ relate in the lifted $\not\sqni'$ relation restricted along $\base(\beta)$, i.e.\ $\Phi_x\mathrel{{\ol{T_\omega}{\not\sqni'}}} \beta$
This will concern the point (a) below. 

Similarly, let 
$
\xymatrix@1{
\Vdash_c':
W
\ar[0,1]|-{\object @{/}}
&
X^\op
}
$
be the restriction of the relation 
$
\xymatrix@1{
\Vdash_c:
\Lang
\ar[0,1]|-{\object @{/}}
&
X^\op
}
$
along $\base(\beta)$, i.e.\ $\Vdash_c'\ =\ \Vdash_c(-, \base(\beta)-)$.
We obtain that $c(x)$ and $\nu_\Lang^{T^\partial}(\beta)$ are related via the lifted $\nVdash_c$ relation, i.e.\ $c(x) ( \mathrel{{\ol{T}{\nVdash_c}}} )  \nu_\Lang^{T^\partial}(\beta)$, if and only if  $c(x)$ and $\beta$ are related via the lifted restricted $\nVdash_c'$ relation as follows:  $c(x)(\mathrel{{\ol{T}{\nVdash_c'}}}) \beta$.
This will concern the point (a) below.

For the conjunction map (restricted to $\base(\beta)$)
$
\bigwedge: \UU_\omega W \to \Lang,
$
the restricted relation $\Vdash_c(-,\bigwedge -)$ of ``satisfying conjunctions of formulas from $\base(\beta)$'' will be considered in the point (b) below. 
We obtain that $c(x)$ and $\nu_\Lang^{T^\partial}((T^\partial_\omega\bigwedge)\Phi)$ are related via the lifted $\Vdash_c$ relation, i.e.\ $c(x) ( \mathrel{{\ol{T^\partial}{\Vdash_c}}} )  \nu_\Lang^{T^\partial}((T^\partial_\omega\bigwedge)\Phi)$, if and only if  $c(x)$ and $\Phi_x$ are related via the $T^\partial$ lifted restricted $\Vdash_c(-,\bigwedge -)$ relation. These observations simplify a bit the remaining proof.

\smallskip
\begin{enumerate}
\item
To prove that $\Phi_x \mathrel{\neg{\ol{T_\omega}{\not\sqni'}}} \beta$ holds,
it suffices to prove that
\[
c(x) \mathrel{\neg\ol{T}{\nVdash'}} \beta\
\ \mbox{implies}\ \
\Phi_x\mathrel{\neg\ol{T_\omega}{\not\sqni'}}\beta.
\]
We reason by contraposition and we assume that
$\Phi_x\mathrel{\ol{T_\omega}{\not\sqni'}}\beta$
to show that then $c(x)\mathrel{\ol{T}{\nVdash'}} \beta$,
contradicting the original assumption that $x\Vdash\Delta\beta$.

By the definition of $f^\sharp$, observe that for any $w \in \base(\beta)$ we have an implication
\[
x \Vdash_c w \mbox{ implies } fx \sqni' w.
\]
By contraposition,
$
fx \not \sqni' w
$
implies
$
x \not \Vdash_c w
$. Write this implication as the following diagram:
\[
\xymatrix{
W^\op
\ar@{} [0,4]|-{\uparrow}
&
&
&
&
X.
\ar@{<-} `u[llll] `[llll]|-{\object @{/}}_-{{\not\Vdash'_c}(-,\base(\beta)^\op-)} [llll]
\ar@{<-} `d[llll] `[llll]|-{\object @{/}}^-{{\not\sqni'}(f-,-)} [llll]
}
\]
The image under $\ol{T}$ of the above diagram yields
\[
\xymatrix{
(T^\partial W)^\op
\ar@{} [0,4]|-{\uparrow}
&
&
&
&
TX
\ar@{<-} `u[llll] `[llll]|-{\object @{/}}_-{\ol{T}{\not\Vdash'_c}(-,(T^\partial\base(\beta))^\op-)} [llll]
\ar@{<-} `d[llll] `[llll]|-{\object @{/}}^-{\ol{T}{\not\sqni'}(Tf-,-)} [llll]
}
\]
We read the diagram for $\beta$ in $(T^\partial W)^\op$ and $c(x)$ in $TX$ 
(this is correct because $W$ is finite, and $(T^\partial W)^\op = (T^\partial_\omega W)^\op$). The lower part says that $(Tf)(c(x))\mathrel{\ol{T}{\not\sqni'}} \beta$:
By assumption, $\Phi_x \mathrel{\ol{T_\omega}{\not\sqni'}} \beta$ holds, and since
$
\Phi_x = (Tf) c(x)
$,
we deduce that
$(Tf)(c(x))\mathrel{\ol{T}{\not\sqni'}} \beta$
holds. 

For the upper part, we first recall that, by the unit property of base, $\beta \in T_\omega^\partial\base(\beta)$. Together with $\base(\beta)$ being finite, this entails that $\beta \in T^\partial\base(\beta)$.
Hence we have
$c(x)\mathrel{\ol{T}{\nVdash'}} \beta$ by using the above diagram.
This is a contradiction with $x\Vdash_c\Delta\beta$.
\item
We prove that $x\Vdash_c\nabla(T^\partial_\omega\bigwedge)\Phi_x$ holds by proving
$c(x) \mathrel{\ol{T^\partial}{\Vdash(-,\bigwedge -)}}\Phi_x$.

By the definition of $f$ we have that $x \Vdash_c \bigwedge f (x)$ holds. The monotonicity of $\Vdash_c$ together with the definition of conjunction says that for any other $\phi \in \UU_\omega W$ with $f(x) \leq \phi$ we can deduce $x \Vdash_c \bigwedge \phi$. This observation yields a diagram in $\Rel$ of the form
\[
\xymatrix{
\UU_\omega W
\ar@{} [0,4]|-{\uparrow}
&
&
&
&
X^\op
\ar@{<-} `u[llll] `[llll]|-{\object @{/}}_-{{\Vdash_c}(-,\bigwedge-)} [llll]
\ar@{<-} `d[llll] `[llll]|-{\object @{/}}^-{\leq_{\UU_\omega W}( f^\op-,-)} [llll]
}
\]
By applying $\ol{T^\partial}$ to the above diagram
we obtain
\[
\xymatrix{
T^\partial\UU_\omega W
\ar@{} [0,4]|-{\uparrow}
&
&
&
&
(TX)^\op
\ar@{<-} `u[llll] `[llll]|-{\object @{/}}_-{\ol{T^\partial}{\Vdash_c}(-,T^\partial\bigwedge-)} [llll]
\ar@{<-} `d[llll] `[llll]|-{\object @{/}}^-{\leq_{T^\partial\UU_\omega W}(Tf)^\op-,-)} [llll]
}
\]
We instantiate the diagram for $c(x)$ in $(TX)^\op$, and $\Phi_x$ in $T^\partial \UU_\omega W$. 
Since
\[
(Tf)^\op c(x) = (Tf) c(x) = \Phi_x,
\]
we have in particular the inequality
$(Tf)^\op c(x) \leq \Phi_x$ in $T^\partial \UU_\omega W$. By the above diagram, the relation
$c(x) \mathrel{\ol{T^\partial}{\Vdash_c}} (T^\partial\bigwedge)\Phi_x$ follows.

We deduce from this that $c(x) \mathrel{\ol{T^\partial}{\Vdash_c}} \nu^{T^\partial}_\Lang((T^\partial_\omega\bigwedge)\Phi_x)$, 
using the following naturality square for the conjunction map:
\[
\xymatrixcolsep{1pc}
\xymatrix{
\Phi_x :T^\partial_\omega\UU_\omega W
\ar@{=}[0,2]^{\ \nu_{\UU_\omega W}^{T^\partial}}
\ar[1,0]_{T^\partial_\omega\bigwedge}
&
&
T^\partial\UU_\omega W
\ar[1,0]^{T^\partial\bigwedge}
\\
T^\partial_\omega\Lang
\ar@{>->}[0,2]^{\nu_{\Lang}^{T^\partial}}
&
&
T^\partial\Lang
}
\]
\end{enumerate}

\item
The inequality
$
{\Vdash_c}(-,\nabla T^\partial_\omega\bigwedge-)
\cdot
\neg(\ol{T}{\not\sqni})
\leq
{\Vdash_c}(-,\Delta-)
\cdot
(\nu^{T^\partial}_{\Lang})_\diamond
$.

Consider $x$, $\Phi$ in $T^\partial_\omega\UU_\omega\Lang$, and $\beta$ in $T^\partial_\omega\Lang$, such that
\[
\Phi \mathrel{\neg{\ol{T}{\not\sqni}}} \beta
\ \mbox{ and }\
x\Vdash_c\nabla(T^\partial\bigwedge)\Phi.
\]
We need to show that $x\Vdash_c\Delta\beta$ holds, so by the definition of the semantics of $\Delta$ we need
to prove that $c(x)\mathrel{\neg\ol{T}{\nVdash}}\nu_\Lang^{T^\partial}(\beta)$ holds.
Reasoning by contraposition, we show that
\[
c(x)\mathrel{\ol{T}{\nVdash}}\nu_\Lang^{T^\partial}(\beta)\
\ \mbox{entails}\ \
\Phi\mathrel{\ol{T}{\not\sqni}}\beta,
\]
contradicting the assumption.

First let us observe that whenever there is an $x_0 \in X$ such that $x_0 \not \Vdash_c a$ for some $a \in \Lang$, and $x_0 \Vdash_c \bigwedge u$  for an upperset $u \in \UU_\omega \Lang$, then $u \not \sqni w$ follows. This is precisely the information contained in the diagram in $\Rel$ of the form
\[
\xymatrix{
\UU_\omega \Lang
\ar[0,4]|-{\object @{/}}^-{{\not\sqni}^\con}
\ar[1,2]|-{\object @{/}}_-{{\Vdash_c}(-,\bigwedge -)\phantom{MMM}}
&
&
&
&
\Lang
\\
&
&
X^\op
\ar@{}[-1,0]|{\uparrow}
\ar[-1,2]|-{\object @{/}}_-{\phantom{MMM}{\not\Vdash_c}^\con
}
&
&
}
\]
We apply $\ol{T^\partial}$ to the above lax triangle
to obtain
\[
\xymatrix{
T^\partial\UU_\omega \Lang
\ar[0,4]|-{\object @{/}}^-{(\ol{T}{\not\sqni})^\con}
\ar[1,2]|-{\object @{/}}_-{\ol{T^\partial}{\Vdash_c}(-,T^\partial\bigwedge-)\phantom{MMMM}}
&
&
&
&
T^\partial \Lang
\\
&
&
(TX)^\op
\ar@{}[-1,0]|{\uparrow}
\ar[-1,2]|-{\object @{/}}_-{\phantom{MMM}(\ol{T}{\not\Vdash_c})^\con}
&
&
}
\]
We instantiate the diagram for $c(x)$ in $(TX)^\op$, $\Phi$ in  $T^\partial\UU_\omega \Lang$, and $\nu^{T^\partial}_\Lang (\beta)$ in $T^\partial \Lang$.
Using 
the assumptions $
c(x)\mathrel{\ol{T^\partial}{\Vdash}}(T^\partial\bigwedge)\Phi
$
and $c(x)\mathrel{\ol{T}{\nVdash'}}\beta$, we conclude that $\Phi\mathrel{\ol{T}{\not\sqni}}\beta$ holds.
\end{enumerate}
We have proved the desired equality of the two relations.
The last assertion of the proposition trivially follows.
\end{proof}

The previous proposition states that under certain conditions, the modality $\Delta$ is semantically equivalent to a disjunction of $\nabla$ formulas. The following proposition deals with the dual statement: the modality $\nabla$ is semantically equivalent to a conjunction of $\Delta$ formulas.

\begin{prop}
\label{prop:nablaviadelta}
For each coalgebra $c: X\to TX$ and every semantics relation $\Vdash_c$,
the relations
\[
\xymatrixcolsep{4pc}
\xymatrix{
T_\omega^\partial\Lang
\ar[0,1]|-{\object @{/}}^-{(\nu^{T^\partial}_\Lang)_\diamond }
&
T^\partial\Lang
\ar[0,1]|-{\object @{/}}^-{{\Vdash_c}(-,\nabla-)}
&
X^\op
}
\]
and
\[
\xymatrixcolsep{5pc}
\xymatrix{
T_\omega^\partial\Lang
\ar[0,1]|-{\object @{/}}^-{\neg\ol{T_\omega}{\not\sqin}}
&
T_\omega^\partial\LLL_\omega\Lang
\ar[0,1]|-{\object @{/}}^-{{\Vdash_c}(-,\Delta T_\omega^\partial\bigvee-)}
&
X^\op
}
\]
are equal.

In other words, for each $x$ in $X$, and each $\alpha$
in $T_\omega^\partial\Lang$, $x\Vdash_c\nabla\beta$ holds iff
there is a $\Psi$ in $T_\omega^\partial\LLL_\omega\Lang$ such that
\[
\alpha \mathrel{\neg{\ol{T_\omega}{\not\sqin}}} \Psi
\
\mbox{ and }\
x\Vdash_c\Delta(T_\omega^\partial\bigvee)\Psi.
\]
Whenever $R_\alpha$ is finite,
we have the following semantic equivalence:
\[
\nabla\alpha
\equiv
\bigwedge\limits_{\Psi \in R_\alpha}
\Delta(T_\omega^\partial\bigvee)\Psi
.
\]
\end{prop}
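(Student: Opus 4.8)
The plan is to mirror the structure of the proof of Proposition~\ref{prop:deltadl}, exploiting the duality between the two propositions. Indeed, Proposition~\ref{prop:deltadl} expresses $\Delta$ via a disjunction of $\nabla$ formulas using the relation $\neg\ol{T}{\not\sqni}$ and the conjunction $\bigwedge$, whereas the present statement expresses $\nabla$ via a conjunction of $\Delta$ formulas using the relation $\neg\ol{T}{\not\sqin}$ and the disjunction $\bigvee$. The governing principle is that passing from $\nabla$ to $\Delta$ interchanges the roles of $\Vdash_c$ and its negation $\nVdash_c$, of uppersets and lowersets, of $\sqni$ and $\sqin$, and of $T$ and $T^\partial$ in the relevant relation liftings. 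I would therefore prove the equality of the two displayed relations by establishing the two inclusions separately, just as before, and at each step take the formal dual of the corresponding step in the proof of Proposition~\ref{prop:deltadl}.

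First I would prove the inclusion ${\Vdash_c}(-,\nabla-) \leq {\Vdash_c}(-,\Delta T^\partial\bigvee-)\cdot\neg(\ol{T}{\not\sqin})$. Assuming $x\Vdash_c\nabla\alpha$, i.e.\ $c(x)\mathrel{\ol{T^\partial}{\Vdash_c}}\alpha$, I would construct a witness $\Psi_x\in T^\partial\LLL_\omega\Lang$ dual to the $\Phi_x$ of the previous proof: using the base $\base(\alpha):Z\to\Lang$, define a monotone map sending $x$ to the \emph{lowerset} $\{w\in\base(\alpha)\mid x\Vdash_c w\}$ (in place of the upperset used for $\Delta$), and set $\Psi_x=(Tf^\sharp)c(x)$. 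The two subgoals would then be: (a) that $\alpha\mathrel{\neg\ol{T}{\not\sqin}}\Psi_x$ holds, proved by contraposition exactly as in part (1a) of the earlier proof but with $\sqin$ replacing $\sqni$ and the semantics $\Vdash_c$ used positively rather than its negation; and (b) that $x\Vdash_c\Delta(T^\partial\bigvee)\Psi_x$, which unwinds via the semantics of $\Delta$ to a statement about $\ol{T}{\nVdash_c}$ and is obtained by applying $\ol{T^\partial}$ (respectively $\ol{T}$) to a suitable 2-cell in $\Rel$ built from the defining property of $f^\sharp$ together with the monotonicity of $\Vdash_c$ and the definition of $\bigvee$. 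For the reverse inclusion I would again reason by contraposition: given $\alpha\mathrel{\neg\ol{T}{\not\sqin}}\Psi$ and $x\Vdash_c\Delta(T^\partial\bigvee)\Psi$, I would show that $c(x)\mathrel{\ol{T^\partial}{\nVdash_c}}\alpha$ would force $\Psi\mathrel{\ol{T}{\not\sqin}}\alpha$, contradicting the hypothesis; this uses the unit properties $\alpha\in T^\partial\base(\alpha)$ and $\Psi\in T^\partial\LLL_\omega\base(\alpha)$ together with a lax triangle obtained by applying $\ol{T^\partial}$ to the elementary observation relating $\sqin$, $\bigvee$, and $\Vdash_c$.

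The main obstacle I anticipate is not conceptual but bookkeeping: one must track precisely where $T$ versus $T^\partial$, and $\sqin$ versus $\sqni$, appear, since the semantics of $\nabla$ uses $\ol{T^\partial}{\Vdash_c}$ while the semantics of $\Delta$ uses $\ol{T}{\nVdash_c}$, and these are related by $\ol{T^\partial}R=(\ol{T}R^\con)^\con$ from Example~\ref{ex:lifting_prop}(2). Getting the variance right in the 2-cells in $\Rel$ and correctly invoking the commutation of relation lifting with converses and restrictions is the delicate part; once the correct 2-cells are written down, applying $\ol{T}$ or $\ol{T^\partial}$ and reading off the conclusion is routine. The final assertion, that $\nabla\alpha\equiv\bigwedge_{\Psi\in R_\alpha}\Delta(T^\partial\bigvee)\Psi$ whenever $R_\alpha$ is finite, is immediate from the equality of relations together with the semantics of $\bigwedge$, exactly as in the previous proposition.
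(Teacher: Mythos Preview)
Your plan is exactly the paper's: its proof reads, in full, ``The proof is completely analogous to that of Proposition~\ref{prop:deltadl},'' and your dualisation scheme (swap $\Vdash_c\leftrightarrow\nVdash_c$, $\UU_\omega\leftrightarrow\LLL_\omega$, $\sqni\leftrightarrow\sqin$, $\bigwedge\leftrightarrow\bigvee$) is precisely what that entails. One small slip to fix when you write it out: the map $f^\sharp$ should send $x$ to the lowerset ${\downarrow}\{w\in\base(\alpha)\mid x\nVdash_c w\}$, not to $\{w\mid x\Vdash_c w\}$ (the latter is an upperset, and you need $x\nVdash_c\bigvee f^\sharp(x)$ for part (b), dual to $x\Vdash_c\bigwedge f^\sharp(x)$ in the earlier proof); this is exactly the variance bookkeeping you yourself flagged.
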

\begin{proof}
The proof is completely analogous to that of
Proposition~\ref{prop:deltadl}.
\end{proof}

\section{Proof system and completeness}
\label{sec:proof-system}
In this section we define a two-sided sequent calculus
${\mathbf{G}}_{\nabla\Delta}$ for
the coalgebraic language $\Lang$ defined in
Section~\ref{sec:logic}.  We prove that the calculus
is complete with respect to the semantics of $\Lang$
given therein. The definition of the calculus as well
as the completeness proof is general and parametric
in the coalgebra functor $T$.

The calculus we are going to define can be seen as
closely related to the two-sided sequent
calculus for the finitary Moss' coalgebraic logic of coalgebras
in $\Set$ as presented in~\cite{bpv13}. We point out a closer connection of the two calculi in Subsection~\ref{ssec:positivefragments} at the very end of the paper.

\subsection{The proof system ${\mathbf{G}}^T_{\nabla\Delta}$}
\label{ssec:proof-system}
The calculus ${\mathbf{G}}^T_{\nabla\Delta}$ will manipulate sequents. A \emph{sequent}
is a syntactic object of the form $\phi\Rightarrow\psi$
where $\phi$ is a finitely generated upperset of formulas,
i.e.\ $\phi$ is in $\UU_\omega\Lang$, and $\psi$ is
a finitely generated lowerset of formulas, i.e., $\psi$
is in $\LLL_\omega\Lang$. Sequents can be seen as objects in $(\UU_\omega\Lang)^\op \times \LLL_\omega \Lang$, and therefore they carry an order and form a poset which we denote $\mathcal{S}$ (this order on sequents is a natural one as becomes clear from their semantics). 

Given a coalgebra $c: X\to TX$ equipped with a monotone
valuation $\Vdash_c$ and a state $x_0$, we say that
sequent $\phi\Rightarrow\psi$ is \emph{refuted} in $x_0$
if $x_0\Vdash_c\bigwedge\phi$ while
$x_0\nVdash_c \bigvee\psi$. Otherwise, the sequent
is \emph{valid} in $x_0$. Observe that the validity is monotone w.r.t.\ the order on $\mathcal{S}$.
A sequent that is valid
in all states of all $T$-coalgebras under all valuations
is simply called a \emph{valid sequent}, otherwise it is
called a \emph{refutable sequent}.

A \emph{rule} is a scheme written in the form
\[
\Inference{\mathcal{A}}{S}{}
\]
where $\mathcal{A}$ is a poset of sequents (namely, $\mathcal{A}$ is in $\Sub_{\MM}\mathcal{S}$) forming the assumptions of the rule, and the sequent $S$ is the conclusion of the rule.

Notice that as defined, the number of assumptions of a rule need not be finite.
It is because, for some functors, we will need infinitary modal rules to form
a complete calculus. However, if the functor $T^\partial_\omega$ preserves
finite posets, all the rules will be finitary.

A rule is
said to be \emph{sound} if, whenever all the assumptions
are valid sequents, the conclusion is also a valid sequent.
A rule is called \emph{invertible} if whenever the conclusion
is valid, then all the assumptions are valid as well. A rule has
a \emph{subformula property} if all the assumptions consist
of subformulas of the conclusion only. A rule with the empty set
of assumptions is called an \emph{axiom}.

A \emph{proof} in the calculus ${\mathbf{G}}^T_{\nabla\Delta}$ is a well-founded tree
labelled by sequents in the following way:
\begin{enumerate}
\item Each leaf is labelled
by an instance of an axiom.
\item Each non-leaf node $n$ is labelled by the conclusion of an instance $I$ of some rule. The children nodes of $n$ are labelled by the assumptions of the instance $I$.
\end{enumerate}

We say that
a sequent is \emph{provable} if there is a proof whose root
is labelled by the sequent.

We start the definition of the calculus ${\mathbf{G}}^T_{\nabla\Delta}$
by listing the axioms, rules of weakening and standard
rules for conjunction and disjunction.
If $\phi$ is an upperset of formulas, by writing
$\phi, a$ we mean the upperset generated by $\phi\cup\{a\}$, and by $\phi, \phi'$ we mean the upperset generated by $\phi\cup\phi'$.
Similarly If $\psi$ is a lowerset of formulas, by writing
$\psi, a$ we mean the lowerset generated by $\psi\cup\{a\}$, and by $\psi, \psi'$ we mean the lowerset generated by $\psi\cup\psi'$.

When writing down a sequent, e.g.\ in examples we provide,
we often use a simplified notation and list only the generators of the corresponding lowersets
and uppersets to keep the description finite\footnote{One can do so systematically by mapping sequents to their bases as follows: observing $(\UU\Lang)^\op = \LLL_\omega\Lang^\op$, we can use
$(\base_{\Lang^\op},\base_\Lang ): \Sub_{\MM}(\LLL_\omega\Lang^\op) \times \Sub_{\MM}(\LLL_\omega \Lang) \to \Sub_{\MM}\Lang^\op\times \Sub_{\MM}\Lang$.}.

\begin{defi}
\label{def:calculus_prop}
The propositional part of the calculus ${\mathbf{G}}^T_{\nabla\Delta}$ is given in the following table:

\begin{equation}
\fbox{
$
\begin{array}{ll}
\Inference{}
          {{\uparrow}{a} \Rightarrow {\downarrow}{b}}
          {Ax}
        \text{$a\leq_\Lang b$}
&
\\
&
\\
\Inference{\phi\Rightarrow\psi}
          {\phi\Rightarrow\psi',\psi}
          {w-r}
&
\Inference{\phi\Rightarrow\psi}
          {\phi,\phi'\Rightarrow\psi}
          {w-l}
\\
&
\\
\Inference{\{\phi\Rightarrow a,\psi\ |\ a\in g(\phi')\}}
          {\phi\Rightarrow\bigwedge\phi',\psi}
          { \text{$\bigwedge$-r }}
&
\Inference{\phi,\phi'\Rightarrow\psi}
          {\phi,\bigwedge\phi'\Rightarrow\psi}
          { \text{$\bigwedge$-l} }
\\
&
\\
\Inference{\phi\Rightarrow\psi,\psi'}
          {\phi\Rightarrow\psi,\bigvee\psi'}
          { \text{$\bigvee$-r} }
&
\Inference{\{\phi, a\Rightarrow\psi\ |\ a\in g(\psi') \}}
          {\phi,\bigvee\psi'\Rightarrow\psi}
          { \text{$\bigvee$-l} }
\end{array}
$
}
\end{equation}
\end{defi}
The rules for conjunction and disjunction above are straightforward
adaptations of the standard conjunction and disjunction rules to our
specific setting. In the axioms, the side condition requires that the inequality $a\leq_\Lang b$
holds in the free algebra of formulas.  All the axioms and rules
are sound, and the rules for $\bigwedge$ and $\bigvee$ are
moreover invertible, which follows immediately from the definition of valid sequents.
\begin{rem}
One can see the above tabular of rules as an inductive definition of a monotone relation 
$
\xymatrix@1{
\Rightarrow:
\LLL_\omega\Lang
\ar[0,1]|-{\object @{/}}
&
\UU_\omega\Lang
}
$ as the smallest one such that it in a way subsumes $\leq_\Lang$: $a\leq_\Lang b$ implies ${\uparrow}{a} \Rightarrow {\downarrow}{b}$, and is closed under the conjunction and disjunction rules (the weakening rules ensure the monotonicity).  
\end{rem}
Next we turn to modal rules. The modal part of the calculus will consist of three rule schemes: (1) a rule for introduction of nabla modality to the right-hand side of a sequent, (2) a rule for introduction of delta modality to the left-hand side of a sequent, and (3) a combined nabla-delta rule.

\paragraph{(1) $\nabla$-r and (2) $\Delta$-l.}
Let us first cover the cases
of introducing nabla to the right-hand and delta to the
left-hand side of a sequent. Behind the soundness of following two rules lie the mutual definability of the two modal operators discussed in the previous section.
The two rules,
when read backwards, say how to reduce a nabla
formula on the right-hand side of a sequent to
a conjunction of delta formulas, and dually how
to reduce a delta formula on the right-hand side
of a sequent to a disjunction of nabla formulas.
The laws that guarantee the soundness and invertibility
of these two rules are those of Propositions~\ref{prop:deltadl}
and~\ref{prop:nablaviadelta}.

\begin{defi}[Modal rules for $\nabla$-r and $\Delta$-l]\label{def:modalrules}
$ $
Consider $\alpha \in T_\omega^\partial\Lang$
%
with
$\monoInline{\base(\alpha)}{V}{\Lang}$,
and $\beta \in T_\omega^\partial\Lang$
with
$\monoInline{\base(\beta)}{W}{\Lang}$.
The two modal rules are defined as
\begin{equation}
\fbox{
$
\Inference{\{\phi\Rightarrow\Delta(T_\omega^\partial\bigvee)\Psi,\psi\mid\Psi\mbox{ in }R_\alpha \}}
          {\phi\Rightarrow\nabla\alpha,\psi}
          {\text{$\nabla$-r}}
\quad
\Inference{\{\phi,\nabla(T_\omega^\partial\bigwedge)\Phi\Rightarrow\psi\mid\Phi\mbox{ in }L_\beta\}}
          {\phi,\Delta\beta\Rightarrow\psi}
          {\text{$\Delta$-l}}
$
}
\end{equation}
The sets of assumptions are indexed by collections $R_\alpha$ with each $\Psi$ in $T_\omega^\partial\LLL_\omega\base(\alpha)$, and $L_\beta$ with each $\Phi$ in $T_\omega^\partial\UU_\omega\base(\beta)$, from~Definition~\ref{def:collections_L_R}. The collections can in general be infinite, and therefore the above rules become infinitary. However, in case that $T_\omega^\partial$ preserves finite posets, they will become finitary.
\end{defi}

\begin{exa}\label{ex:therules}
We illustrate the above definition with two simple examples of instances of the rule $\nabla$-r. The instances are based on Example~\ref{ex:RL}.
\begin{enumerate}
\item
Consider $T^\partial = \LLL_\omega$, and $\alpha = \{\bot\}$. By Example~\ref{ex:RL} (1), $R_\alpha = \{\{\emptyset,\{\bot\}\}\}$. Then for the only element $\{\emptyset,\{\bot\}\}$ of $R_\alpha$ we obtain $\LLL_\omega\bigvee\{\emptyset,\{\bot\}\} = \{\bot\}$.
The following is therefore a correct instance of the rule $\nabla$-r:

\[
\Inference{\nabla\emptyset\Rightarrow\Delta\{\bot\}}
          {\nabla\emptyset\Rightarrow\nabla\{\bot\}}
          {$\nabla$-r}
\]

\item
Consider $T^\partial = \mathbb{N}^\op\times\Id$, and $\alpha = (5,b\wedge c)$. By Example~\ref{ex:RL} (2), $R_{(5,b\wedge c)} = \{(n,\emptyset)\ |\ 5>n\}\cup \{(n,\{b\wedge c\})\ |\ n\in \mathbb{N}\}$. Note that $\LLL_\omega\bigvee(n,\emptyset) = (n,\bot)$ while $\LLL_\omega\bigvee(n,\{b\wedge c\}) = (n,b\wedge c)$. The following is therefore a correct instance of the rule $\nabla$-r:

\[
\Inference{\{\nabla(3,b),\nabla(8,c)\Rightarrow\Delta(n,\bot)\mid 5>n\}\cup
           \{\nabla(3,b),\nabla(8,c)\Rightarrow\Delta(n,b\wedge c)\mid n\in \mathbb{N}\}
           }
          {\nabla(3,b),\nabla(8,c)\Rightarrow\nabla(5,b\wedge c)}
          {$\nabla$-r}
\]

\end{enumerate}
\end{exa}

The rule $\nabla$-r is sound and invertible by
Proposition~\ref{prop:nablaviadelta}, and the rule
$\Delta$-l is sound and invertible by Proposition~\ref{prop:deltadl}.
Let us make a syntactic observation on the two rules concerning
the subformula property.

\begin{rem}[A form of the subformula property]
\label{rem:subpropinv}
Given any $\Psi$ of the type
\[
\xymatrix{
\bOne
\mono[0,2]
\mono[1,2]_{\Psi}
&
&
T_\omega^\partial \LLL_\omega V
\mono[1,0]^{T_\omega^\partial\LLL_\omega \base(\alpha)}
\\
&
&
T_\omega^\partial \LLL_\omega \Lang
}
\]
we can state the following, weaker kind of the subformula property:
\[
z \in \base(\Psi) \mbox{ implies } z \in \LLL_\omega \base(\alpha).
\]
Since $\Psi \in T_\omega^\partial \LLL_\omega \base(\alpha)$ holds, we get the inclusion
\[
\base(\Psi) \subseteq \LLL_\omega \base(\alpha)
\]
by Definition~\ref{def:base} of base. The weak subformula property follows from this immediately.
We can use the weak subformula property in particular to study the rule $\nabla$-r: It shows that all the assumptions of the rule are built from subformulas of the conclusion, using one additional modality $\nabla$, and the operation $T^\partial\bigvee$.
Similarly, for any $\Phi \in T^\partial \UU_\omega \base(\beta)$ we get that
\[
z \in \base(\Phi) \mbox{ implies } z \in \UU_\omega \base(\beta).
\]
\end{rem}

\paragraph{(3) $\nabla\Delta$ rule.}
Let us turn now to introducing the last rule scheme. Observe that the invertible rules of the calculus (the $\bigwedge$ and $\bigvee$ rules and rules $\nabla$-r and $\Delta$-l) are strong enough to reduce any valid sequent $\phi\Rightarrow\psi$ to a set of sequents in a \emph{reduced form} by a backwards application of the rules. A sequent is \emph{reduced} if it is of the form
\[
\pi,
\{
\nabla\alpha\ |\ A\sqni\alpha\}\Rightarrow\{\Delta\beta\mid\beta\sqin B
\},
\lambda,
\]
with $\pi$ and $\lambda$ being finitely generated upperset and lowerset of atomic formulas (or their respective generators in the simplified notation), $A$ being
in $\UU_\omega  T_\omega^\partial\Lang$ and $B$ in $\LLL_\omega T_\omega^\partial\Lang$ (or their respective generators in the simplified notation).

The last missing bit of the calculus ${\mathbf{G}}^T_{\nabla\Delta}$ is a rule that decomposes sequents in a reduced form.
The rule $\nabla\Delta$, which we will formulate in Definition~\ref{def:therule} below, introduces the nabla modality to the left-hand side and the modality delta to the right-hand side of
a sequent simultaneously. It is the only rule whose
backwards application reduces the modal depth of a sequent.

The idea behind the rule $\nabla\Delta$ is to express what it means for a sequent in a reduced form to be refuted in a state of a coalgebra. It will describe how validity and refutation of the subformulas of the sequent is ``redistributed'' in the ``successors'' of the state. This semantical idea will be made precise in Example~\ref{ex:srd} below.
To be able to formulate the same idea in a syntactic form of a rule we employ a technical notion of {\em redistribution}.

\begin{defi}
\label{def:rd}
Fix a pair $(A,B)$ in
$(\UU_\omega T_\omega^\partial\Lang\times\LLL_\omega T_\omega^\partial\Lang)$.
A \emph{redistribution} of $(A,B)$ is an element
$\Phi$ of $T_\omega^\partial (\UU_\omega \base(A) \times \LLL_\omega \base(B))$
satisfying the following
two conditions:
\begin{enumerate}
\item
for each $\alpha$ with $A\sqni\alpha$ it holds that
$(T_\omega^\partial p_0)\Phi \mathrel{\ol{T_\omega^\partial}{\sqni}} \alpha$,
\item
for each $\beta$ with $\beta\sqin B$ it holds that
$\beta\mathrel{\ol{T_\omega^\partial}{\sqin}} (T_\omega^\partial p_1)\Phi$,
\end{enumerate}
where
$
\xymatrixcolsep{1pc}
\xymatrix@1{
\UU_\omega\Lang
&
&
\UU_\omega\Lang\times\LLL_\omega\Lang
\ar[0,2]^(.6){p_1}
\ar[0,-2]_(.6){p_0}
&
&
\LLL_\omega\Lang
}
$
are the projection maps.

We denote the collection of all redistributions
of the pair $(A,B)$ by $\rd(A,B)$.
\end{defi}

\begin{rem}
The notion of redistribution was introduced earlier in work on coalgebra automata~\cite{kupk:coal08}.
The idea of ``redistributing the subformulas'' has appeared, for the case of sets, e.g.\ in~\cite{kkv12}
and~\cite{bpv13} under the name of slim redistribution or separated slim redistribution, respectively. For reasons of simplicity we do not adopt the name slim redistribution, and speak simply about redistributions instead.
\end{rem}

We illustrate the notion of redistribution with the following
example which arises semantically. It is a crucial example
because it relates directly to soundness and a weak form of
invertibility of the modal rule we prove later in
Proposition~\ref{prop:sitherule}.

\begin{exa}
\label{ex:srd}
Suppose a sequent
\[
{\uparrow} \{\nabla \alpha \mid A \sqni \alpha \}
\Rightarrow
{\downarrow} \{ \Delta \beta \mid \beta \sqin B \}
\]
is
not valid for some pair $(A,B)$ in
$\UU_\omega T^\partial_\omega\Lang\times\LLL_\omega T^\partial_\omega\Lang$. This means that there is a coalgebra $c: X\to TX$ and a valuation
$
\xymatrix@1{
\Vdash:
\Lang
\ar[0,1]|-{\object @{/}}
&
X^\op
}
$
such that
\[
x_0\Vdash\bigwedge\limits_{A\sqni\alpha}\nabla\alpha
\ \mbox{ and }\ x_0\nVdash\bigvee\limits_{\beta\sqin B}\Delta\beta
\]
for some $x_0$ in $X$.
The bases of $A$ and $B$ in $\Sub_{\MM}\Lang$ will be denoted by
  \[
\monoInline{\base(A)}{V}{\Lang}\ 
\mbox{ and }\ 
\monoInline{\base(B)}{W}{\Lang}
  \]
respectively, with $V$ and $W$ being finite posets.
In this example we will construct a certain redistribution out of this
countermodel.
\begin{enumerate}
\renewcommand{\theenumi}{\Roman{enumi}}
\item
The construction of a redistribution. Again, the main idea of the proof is rather simple, and most of the hassle is due to keep track of types and restricting to bases to keep the reasoning on the language side finitary.

We start by defining two maps
\[
f^\sharp: X^\op\to\UU_\omega \Lang
\mbox{ and }
g^\sharp: X^\op\to\LLL_\omega \Lang
\]
by putting
\begin{align*}
f^\sharp (x) &= {\uparrow} \{a \in \base(A) \mid x \Vdash v\}, \\
g^\sharp (x) &= {\uparrow} \{b\in \base(B) \mid x \nVdash w\}.
\end{align*}
Similarly as before in proof of Proposition~\ref{prop:deltadl}, they factor through the bases of $A$ and $B$ to result in the following two maps:
    \[
f: X^\op\to\UU_\omega V\ 
\mbox{ and }\ 
g: X^\op\to\LLL_\omega W
    \]
by putting
    \[
f(x) = \{v\mid x\Vdash\base(A)(v)\}\ 
\mbox{ and }\ 
g(x) = \{w\mid x\nVdash\base(B)(w)\}.
    \]
Hence the tupling of $f$ and $g$ is of the form
\[
(f,g): X^\op\to\UU_\omega V \times \LLL_\omega W
\]
and by applying $T^\partial$ to it we obtain a map
\[
T^\partial(f,g): (TX)^\op\to T^\partial(\UU_\omega V \times \LLL_\omega W).
\]
Consider the composite
\[
\xymatrix{
X^\op
\ar[0,1]^-{c^\op}
&
(TX)^\op
\ar[0,2]^-{T^\partial(f,g)}
&
&
T^\partial(\UU_\omega V \times \LLL_\omega W)
}
\]
and define $\Phi_{x_0}$ to be its value at $x_0$ in $X^\op$:
    \[
\Phi_{x_0} := T^\partial(f,g)(c^\op(x_0)).
    \]
Because $\UU_\omega V \times \LLL_\omega W$ is a finite poset, $T^\partial$ and $T^\partial_\omega$ agree on it (the corresponding natural inclusion is the identity). Thus we may see  $\Phi_{x_0}$ as an element of $T^\partial_\omega(\UU_\omega V \times \LLL_\omega W)$. 
Therefore, using $\base(A)$ and  $\base(B)$ as maps,
    \[
\Phi = T^\partial_\omega(\UU_\omega \base(A),\LLL_\omega \base(B))(\Phi_{x_0}) = T^\partial(f^\sharp,g^\sharp)(c^\op(x_0))
    \]
is an element of $T^\partial_\omega(\UU_\omega \base(A) \times \LLL_\omega \base(B))$ as required by the definition or a redistribution.  

We claim that $\Phi$ is a redistribution of $(A,B)$.
We verify item~(1) of Definition~\ref{def:rd}, item~(2)
is verified by dual reasoning. Therefore, we want to prove
that
\[
(T_\omega^\partial p_0)\Phi
\mathrel{\ol{T_\omega^\partial}{\sqni}}
\alpha
\]
holds for every $\alpha$ with $A\sqni\alpha$, where
$p_0:\UU_\omega \Lang \times\LLL_\omega \Lang\to\UU_\omega \Lang$
is the product projection. 
By the definition of $\Phi$, we observe that
\[
(T_\omega^\partial p_0) \Phi = T_\omega^\partial\UU_\omega\base(A)(T^\partial f) c^\op(x_0) = (T^\partial f^\sharp) c^\op(x_0).
\]
For the following diagrams to work, we however need to keep things restricted to bases on the language side, and therefore we are going to use $f$ rather then $f^\sharp$.
Consider any $\alpha$ such that $A \sqni \alpha$ holds. From the unit property of the base we know that $A\in \UU_\omega T^\partial_\omega\base(A)$. So, there is a (unique) $\alpha'$ in $T_\omega^\partial V$ with $\alpha =  T^\partial_\omega \base (A)(\alpha')$. To sum up, to prove for
$
\xymatrix@1{
\sqni:
\Lang
\ar[0,1]|-{\object @{/}}
&
\UU_\omega\Lang
}
$
that
\[
(T_\omega^\partial p_0)\Phi
\mathrel{\ol{T_\omega^\partial}{\sqni}}
\alpha,
\]
means to prove that
\[
T_\omega^\partial\UU_\omega\base(A)(T^\partial f) c^\op(x_0)
\mathrel{\ol{T_\omega^\partial}{\sqni}}
T^\partial \base (A)(\alpha'),
\]
and in turn it suffices to prove, for the restricted
$
\xymatrix@1{
\sqni:
V
\ar[0,1]|-{\object @{/}}
&
\UU_\omega V,
}
$
that
\[
(T_\omega^\partial p_0)\Phi_{x_0}
\mathrel{\ol{T_\omega^\partial}{\sqni}}
\alpha'.
\]
This by definition of $\Phi_{x_0}$ boils down to proving that
\begin{equation}
\label{eq:ex-5.6}
(T^\partial f (c^\op(x_0)))
\mathrel{\ol{T_\omega^\partial}{\sqni}}
\alpha'
\end{equation}
holds for any $\alpha$ such that $A\sqni\alpha$ holds.

Given any $x$ in $X$ and $v \in V$ such that $x \Vdash_c \base(A)(v)$ holds, we know that $f(x) \sqni v$ holds by the definition of $f$. The monotonicity of the membership relation $\sqni$ entails that for every $\phi$ in $\UU_\omega V$ with $\phi \leq f(x)$, the relation $\phi \sqni v$ holds as well. Thus the following lax triangle in $\Rel$
\[
\xymatrix{
V
\ar[0,4]|-{\object @{/}}^-{{\sqni}}
\ar[1,2]|-{\object @{/}}_{{\Vdash}(-,\base(A)-)\phantom{MM}}
&
&
&
&
\UU_\omega V
\\
&
&
X^\op
\ar[-1,2]|-{\object @{/}}_{\phantom{M} f_\diamond}
\ar@{}[-1,0]|{\uparrow}
&
&
}
\]
commutes. Its image under $\ol{T^\partial}$ is the lax triangle
\begin{equation}
\label{eq:ex-5.7}
\xymatrix{
T^\partial V
\ar[0,4]|-{\object @{/}}^-{\ol{T^\partial} {\sqni}}
\ar[1,2]|-{\object @{/}}_{\ol{T^\partial} {\Vdash}(-,T^\partial \base(A)-)\phantom{MM}}
&
&
&
&
T^\partial  \UU_\omega V.
\\
&
&
(TX)^\op
\ar[-1,2]|-{\object @{/}}_{\phantom{M} (T^\partial f)_\diamond}
\ar@{}[-1,0]|{\uparrow}
&
&
}
\end{equation}
We prove~\eqref{eq:ex-5.6} using~\eqref{eq:ex-5.7}.
Consider any $\alpha$ such that $A \sqni \alpha$ holds. Recall $\alpha'$ in $T^\partial V$ is the unique element with $\alpha =  T^\partial \base (A)(\alpha')$ (as $V$ is a  finite poset, we are allowed to write $T^\partial$ in place of $T^\partial_\omega$ here).
We instantiate the diagram for $c^\op(x_0)$ in $(T^\partial X)^\op$, and $\alpha'$ in $T^\partial V$. 

We know from assumption that $x_0 \Vdash_c \nabla\alpha$. Let $\monoInline{\base(\alpha)}{Z}{\Lang}$ and $z\in Z$ be the unique element with $T^\partial_\omega\base(\alpha)(z) = \alpha$.
By the semantic definition we obtain that 
\[
c(x_0) \mathrel{\ol{T^\partial} {\Vdash_c}}(-,T^\partial\base(\alpha)-) z.
\]
But observe that $\base(\alpha)\sqsubseteq \base(A)$. 
Therefore this is equivalent to
\[
c(x_0) \mathrel{\ol{T^\partial} {\Vdash_c}}(-,T^\partial\base(A)-) \alpha'.
\]
Applying the diagram we finally obtain that $T^\partial f (c^\op(x_0)) \ol{T^\partial} \sqni \alpha'$ as required.
We consequently use the fact that since $V$ and $\UU_\omega V$ are finite posets, $T^\partial$ and $T^\partial$ coincide on them, and consequently the lifted relations $\ol{T^\partial} \sqni$ and $\ol{T^\partial_\omega} \sqni$ coincide.
\item
\label{item:IIsrd}
The redistribution $\Phi$ constructed above, has the following
additional property: the sequent
\[
p_0(z)\Rightarrow p_1(z)
\]
is refutable, for every $z \in \base(\Phi)$.
\begin{enumerate}
\renewcommand{\theenumi}{\alph{enumi}}

\item
Recall the map
$(f^\sharp,g^\sharp):X^\op\to\UU_\omega \Lang \times\LLL_\omega \Lang$
and consider its $(\EE,\MM)$-factorisation
\[
\xymatrix{
X^\op
\epi[0,1]_-{e}
&
Y
\mono[0,1]_-{m}
&
\UU_\omega \Lang \times \LLL_\omega \Lang
\ar@{<-} `u[ll] `[ll]_-{(f,g)} [ll]
}
\]
We will prove the inclusion
\[
\base(\Phi) \subseteq m.
\]
The diagram
\[
\xymatrix{
\bOne
\mono[0,1]^-{c^\op(x_0)}
\mono[1,2]_-{\Phi}
&
(TX)^\op
\ar[0,1]^-{T^\partial e}
\ar[1,1]^{T^\partial(f^\sharp,g^\sharp)}
&
T^\partial Y
\mono[1,0]^{T^\partial m}
\\
&
&
T^\partial (\UU_\omega \Lang \times \LLL_\omega \Lang)
}
\]
commutes: the left triangle commutes by the definition of $\Phi$, and the right triangle commutes by the $(\EE,\MM)$-factorisation of $(f^\sharp,g^\sharp)$. This proves that $\Phi\in T^\partial m$ holds. Thus $\base(\Phi_{x_0}) \subseteq m$ holds by the definition of base.

\item
We will show that for every $z \in \base(\Phi)$ there is a state $x_z \in X$ such that
\[
p_0(z) \Rightarrow p_1(z)
\]
is not valid in $x_z$.

Since $z \in \base(\Phi)$ holds, we know that $z \in m$ holds as well. From this and from the surjectivity of the mapping $e$ we can thus define $x_z$ to be any element from $X$ such that
\[
z = (m \cdot e)(x_z)
\]
holds. Then we know by the $(\EE,\MM)$-factorisation of $(f^\sharp,g^\sharp)$ that
\[
(m \cdot e)(x_z) = (f^\sharp,g^\sharp)(x_z).
\]
The sequent $p_0(z) \Rightarrow p_1(z)$ can therefore be written as \[
p_0 \cdot (f^\sharp,g^\sharp) (x_z) \Rightarrow p_1 \cdot (f^\sharp,g^\sharp) (x_z),
\]
and this in turn is the sequent
\[
f^\sharp(x_z) \Rightarrow g^\sharp(x_z).
\]
By the definition of $f^\sharp$ and $g^\sharp$, all formulas in $f^\sharp(x_z)$ are valid in $x_z$, while no formula in $g^\sharp(x_z)$ is valid in $x_z$.
\end{enumerate}
\end{enumerate}
\end{exa}
Now we can state the main modal rule of the calculus ${\mathbf{G}}^T_{\nabla\Delta}$, explain how it reads, and show its main properties.
\begin{defi}[The modal $\nabla\Delta$ rule]
\label{def:therule}
We formulate the rule $\nabla\Delta$ as follows:
\begin{equation}
\fbox{
$
\Inference{
           \{
           p_0(z^\Phi)
           \Rightarrow
           p_1(z^\Phi)
           \mid
           \Phi\mbox{ in }\rd(A,B)
           \}
           }
          {
          \{
          \nabla\alpha\mid A\sqni\alpha
          \}
          \Rightarrow
          \{
          \Delta\beta\mid\beta\sqin B
          \}
          }
          {\text{$\nabla\Delta$}}
          \text{$\forall \Phi.\ z^\Phi \in \base(\Phi)$}
$
}
\end{equation}
where the elements in $\rd(A,B)$ are of the form $\Phi \in T^\partial_\omega(\UU_\omega\Lang\times\LLL_\omega\Lang)$, and
\[
\xymatrixcolsep{1pc}
\xymatrix@1{
\UU_\omega\Lang
&
&
\UU_\omega\Lang\times\LLL_\omega\Lang
\ar[0,2]^-{p_1}
\ar[0,-2]_-{p_0}
&
&
\LLL_\omega\Lang
}
\]
are the projection maps of the product $\UU_\omega\Lang\times\LLL_\omega\Lang$.
\end{defi}
Recall from Part~\ref{item:IIsrd} of Example~\ref{ex:srd} that choosing an element $z^\Phi \in \base(\Phi)$, we obtain a finitely generated upperset of
formulas $p_0(z^\Phi)$
and a  finitely generated lowerset of formulas
$p_1(z^\Phi)$, therefore
$p_0(z^\Phi)\Rightarrow p_1(z^\Phi)$
 is a well-formed sequent.
The rule $\nabla\Delta$ says that the conclusion is provable whenever
for each redistribution $\Phi$ of $(A,B)$ there exists some
$z^\Phi \in \base(\Phi)$ such that the sequent
$p_0(z^\Phi)\Rightarrow p_1(z^\Phi)$
is provable. In other words, it says that if the sequent in the conclusion
is refutable, there must be a redistribution $\Phi$ of the pair
$(A,B)$ (in fact the one of Example~\ref{ex:srd}) such that all the sequents
$p_0(z^\Phi)\Rightarrow p_1(z^\Phi)$
are refutable.

\begin{exa}\label{ex:proof}
Continuing on Examples~\ref{ex:RL} and~\ref{ex:therules}, we illustrate the above definition with two instances of the rule:
\begin{enumerate}
\item
Consider $T^\partial_\omega = \LLL_\omega$ and let us compute $\rd(\{\emptyset\},\{\{\bot\}\})$. Any such redistribution $\Phi$ in $\LLL_\omega(\UU_\omega\Lang\times\LLL_\omega\Lang)$ has to satisfy the following:
\[(\LLL_\omega p_0)\Phi\mathrel{\ol{\LLL_\omega}{\sqni}}\emptyset \mbox{  and  } \{\bot\}\mathrel{\ol{\LLL_\omega}{\sqin}}(\LLL_\omega p_1)\Phi.\]
The first condition is met by $(\LLL_\omega p_0)\Phi = \emptyset$ only, while the second condition requires $(\LLL_\omega p_1)\Phi \neq \emptyset$, which together is impossible. The collection of redistributions is therefore empty and the following is a correct instance of the $\nabla\Delta$ rule (in fact an axiom):

\begin{prooftree}
\AXC{$\emptyset$}
\LL{$\nabla\Delta$}
\UIC{$\nabla\emptyset\Rightarrow\Delta\{\bot\}$}
\end{prooftree}
Putting the above observations together with Example~\ref{ex:therules} (1) we obtain a simple proof in the simplified notation:

\begin{prooftree}
\AXC{$\emptyset$}
\LL{$\nabla\Delta$}
\UIC{$\nabla\emptyset\Rightarrow\Delta\{\bot\}$}
\LL{$\nabla$-r}
\UIC{$\nabla\emptyset\Rightarrow\nabla\{\bot\}$}
\end{prooftree}

\item
Consider $T^\partial_\omega = \mathbb{N}^\op\times\Id$ and let us compute $\rd(\{(3,b),(8,c)\},\{(n,\bot)\})$ where $n<5$. Any such redistribution $\Phi$ is an element of $\mathbb{N}^\op \times (\UU_\omega\Lang\times\LLL_\omega\Lang)$, i.e.\ of the form $(m,(\phi,\psi))$ satisfying firstly
\[(m,\phi)(\mathrel{\ol{\mathbb{N}^\op\times\Id}{\sqni}})(3,b) \mbox{ and }
(m,\phi)(\mathrel{\ol{\mathbb{N}^\op\times\Id}{\sqni}})(8,c),
\]
which implies that $m\geq 8$ and $\phi\sqni b$ and $\phi\sqni c$. Secondly,
\[
(n,\bot)(\mathrel{\ol{\mathbb{N}^\op\times\Id}{\sqin}})(m,\psi),
\]
which implies that $n\geq m$. But since $m\geq 8$ and $n<5$, this is clearly impossible. Thus there is no such redistribution and the following is a correct instance of the rule, whenever $n<5$:
\begin{prooftree}
\AXC{$\emptyset$}
\LL{$\nabla\Delta$}
\UIC{$\nabla(3,b),\nabla(8,c)\Rightarrow\Delta(n,\bot)$}
\end{prooftree}
Let us compute $\rd(\{(3,b),(8,c)\},\{(n,b\wedge c)\})$ for any $n$. Any such redistribution $\Phi$ is of the form $(m,(\phi,\psi))$ satisfying, as before, $m\geq 8$ and $\phi\sqni b$ and $\phi\sqni c$. Moreover, it should satisfy
\[
(n,b\wedge c)(\mathrel{\ol{\mathbb{N}^\op\times\Id}{\sqin}})(m,\psi),
\]
which implies that $n\geq m$ and $b\wedge c\sqin \psi$. For $n<8$ the collection of redistributions is again empty. For $n\geq 8$ we know that any redistribution $(m,(\phi,\psi))$ contains in its base the pair $(\phi,\psi)$ where $\phi\Rightarrow\psi$ is a provable sequent (using possibly some weakening inferences). Therefore the following is a correct proof in the simplified notation:
\begin{prooftree}
\AXC{$b\Rightarrow b$}\AXC{$c\Rightarrow c$}
\LL{$\bigwedge$-r}
\BIC{$b,c\Rightarrow b\wedge c$}
\LL{w}
\UIC{$\phi\Rightarrow\psi$}
\LL{$\nabla\Delta$}
\UIC{$\nabla(3,b),\nabla(8,c)\Rightarrow\Delta(n,b\wedge c)$}
\end{prooftree}
Put together with Example~\ref{ex:therules} (2) we obtain the following proof, where the left part of the tree covers cases for $n<8$ and the right part of the tree covers cases for $n\geq 8$ and all the existing redistributions:
\begin{prooftree}
\AXC{$\emptyset$}
\LL{$\nabla\Delta$}
\UIC{$\nabla(3,b),\nabla(8,c)\Rightarrow\Delta(n,\bot)$}
\AXC{$\hdots$}
\AXC{$b\Rightarrow b$}\AXC{$c\Rightarrow c$}
\LL{$\bigwedge$-r}
\BIC{$b,c\Rightarrow b\wedge c$}
\LL{w}
\UIC{$\phi\Rightarrow\psi$}
\LL{$\nabla\Delta$}
\UIC{$\nabla(3,b),\nabla(8,c)\Rightarrow\Delta(n,b\wedge c)$}
\AXC{$\hdots$}
\LL{$\nabla$-r}
\QuaternaryInfC{$\nabla(3,b),\nabla(8,c)\Rightarrow\nabla(5,b\wedge c)$}
\end{prooftree}

\end{enumerate}
\end{exa}
The rule $\nabla\Delta$ satisfies the subformula property, similar in spirit to the properties discussed in Remark~\ref{rem:subpropinv}.

\begin{rem}[Subformula property]
\label{rem:subproptherule}
Since any redistribution $\Phi$ has the form
\[
\xymatrix{
\bOne
\mono[0,2]
\mono[1,2]_{\Phi}
&
&
T_\omega^\partial(\UU_\omega V\times\LLL_\omega W)
\mono[1,0]^{T_\omega^\partial(\UU_\omega\base(A) \times \LLL_\omega\base(B))}
\\
&
&
T_\omega^\partial(\UU_\omega\Lang\times\LLL_\omega\Lang),
}
\]
we get by the definition of base the inclusion
\[
\base(\Phi) \subseteq \UU_\omega\base(A) \times \LLL_\omega\base(B).
\]
Since the image of
$\UU_\omega\base(A)\times\LLL_\omega\base(B)$
are the subformulas of the conclusion of the rule $\nabla \Delta$, this observation tells
us that the rule $\nabla \Delta$ satisfies the subformula property.
\end{rem}
\begin{rem}[One-step nature of the rules]
Coalgebras naturally capture one-step behaviour. 
As introduced by~\cite{Patt03},
both the semantics and syntax of coalgebraic logics can be stratified in layers of transition and modal depth. In particular, proof systems (both Hilbert and Gentzen style) can be presented via one-step axioms with no nesting modalities and rules which concern one layer of modalities only, and their completeness demonstrated by an inductive argument in a one-step manner~\cite{kkv12,KP11}.
We do not adopt the one-step formalism systematically in this paper, however, the construction of the language in Subsection~\ref{ssec:lang} allows for such a treatment, the semantics could be presented in a one-step manner following~\cite{kkv12}, 
and the rules of the sequent calculus allow for a purely one-step reformulation. It is roughly because all the rules except the $\nabla\Delta$ rule operate within the same layer of the language, while the $\nabla\Delta$ rule strips precisely one layer of modalities. 
Namely, in the proof of the following proposition,  
constructing the model refuting a sequent 
\[
\pi,\{\nabla\alpha\mid A\sqni\alpha\}
\Rightarrow
\{\Delta\beta\mid\beta\sqin B \},\lambda
\]
can be seen as a part of presenting a completeness proof by step-by-step method.
\end{rem}
The rule $\nabla \Delta$ is sound, and even invertible in a certain technical sense which we explain  and prove in the following proposition.
\begin{prop}[Soundness and invertibility of the $\nabla\Delta$ rule]
\label{prop:sitherule}
Let $A$, $B$ and their bases be given as in
Definition~\ref{def:therule} and let $\pi$ and $\lambda$
consist of atomic formulas. The following are equivalent:
\begin{enumerate}
\item
The sequent
\[
\pi,\{\nabla\alpha\mid A\sqni\alpha\}
\Rightarrow
\{\Delta\beta\mid\beta\sqin B \},\lambda
\]
is refutable.
\item
The sequent $\pi\Rightarrow\lambda$ is refutable, and there
exists a redistribution $\Phi$ of $(A,B)$
such that for all $z \in \base(\Phi)$ the sequents
\[
p_0(z)\Rightarrow p_1(z)
\]
are refutable.
\end{enumerate}
\end{prop}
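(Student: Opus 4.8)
The plan is to prove the two implications separately, noting that $(1)\Rightarrow(2)$ is essentially already contained in Example~\ref{ex:srd}, while the substance lies in the countermodel construction for $(2)\Rightarrow(1)$. For $(1)\Rightarrow(2)$ I would start from a state $x_0$ of some coalgebra $c:X\to TX$ with valuation $\Vdash_c$ at which the sequent of $(1)$ is refuted. Since the left-hand side is an upperset and the right-hand side a lowerset, the refutation splits cleanly: $x_0\Vdash_c\bigwedge\pi$ and $x_0\nVdash_c\bigvee\lambda$ already witness that $\pi\Rightarrow\lambda$ is refutable at $x_0$, while $x_0\Vdash_c\bigwedge_{A\sqni\alpha}\nabla\alpha$ together with $x_0\nVdash_c\bigvee_{\beta\sqin B}\Delta\beta$ say exactly that the purely modal reduced sequent ${\uparrow}\{\nabla\alpha\mid A\sqni\alpha\}\Rightarrow{\downarrow}\{\Delta\beta\mid\beta\sqin B\}$ is refuted at $x_0$. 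Example~\ref{ex:srd} then manufactures, out of this countermodel, a concrete redistribution $\Phi_{x_0}\in\rd(A,B)$, and Part~\ref{item:IIsrd} of that example establishes precisely that $p_0(z)\Rightarrow p_1(z)$ is refutable for every $z\in\base(\Phi_{x_0})$. This $\Phi_{x_0}$ is the witness demanded by $(2)$.

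For $(2)\Rightarrow(1)$ I would build a countermodel by hand. Given a redistribution $\Phi\in\rd(A,B)$ all of whose base-sequents $p_0(z)\Rightarrow p_1(z)$ (for $z\in\base(\Phi)$) are refutable, together with a refutation of $\pi\Rightarrow\lambda$, fix for each $z$ a pointed countermodel $(c_z:X_z\to TX_z,\Vdash_z,x_z)$ with $x_z\Vdash_z\bigwedge p_0(z)$ and $x_z\nVdash_z\bigvee p_1(z)$. Form the coproduct coalgebra $X'=\coprod_z X_z$ (coproducts of coalgebras being computed on carriers) with the combined valuation, and adjoin a fresh root $X=X'+\bOne$, $\bOne=\{x_0\}$, choosing the valuation at $x_0$ — using the refutability of $\pi\Rightarrow\lambda$ — so that $x_0$ verifies every atom of $\pi$ and falsifies every atom of $\lambda$. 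To define $c(x_0)\in TX$ I would use the unit property of the base, $\Phi\in T^\partial\base(\Phi)$, together with a monotone map $\tau$ from (the dual of) $\base(\Phi)$ into $X$ sending each generator $z$ to its refuting state $x_z$, and set $c(x_0):=T\tau(\Phi)$.

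The verification that $x_0$ refutes the sequent of $(1)$ should then run relationally, dually to Example~\ref{ex:srd}: condition~(1) of Definition~\ref{def:rd} combined with the $2$-cell recording $x_z\Vdash\bigwedge p_0(z)$, pushed through $\ol{T^\partial}$, yields $c(x_0)\mathrel{\ol{T^\partial}{\Vdash}}\alpha$ and hence $x_0\Vdash\nabla\alpha$ for each $A\sqni\alpha$; dually, condition~(2) with the $2$-cell recording $x_z\nVdash\bigvee p_1(z)$ gives $c(x_0)\mathrel{\ol{T}{\nVdash}}\beta$, hence $x_0\nVdash\Delta\beta$ for each $\beta\sqin B$; and the atomic part is immediate from the choice of valuation at $x_0$. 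This step is routine and closely parallels the relational computations in the proof of Proposition~\ref{prop:deltadl}.

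The main obstacle will be the monotonicity of the connecting map $\tau$. In the plain coproduct the chosen states $x_z$ lie in distinct summands and are pairwise incomparable, so $z\mapsto x_z$ is monotone only when $\base(\Phi)$ happens to be discrete. One checks that the order on $\base(\Phi)$ is at least logically consistent with the refutations — if $z\le z'$ then $p_0(z)\supseteq p_0(z')$ and $p_1(z)\subseteq p_1(z')$, so the theories of the corresponding states can be arranged to be comparable — but simply forcing the required order onto the coproduct would destroy monotonicity of the valuation. The real work is therefore to assemble the individual countermodels into a single model whose base-states are genuinely ordered by logical strength, so that $T\tau(\Phi)$ is well defined and the lifted $2$-cells above apply; once this coherent assembly is carried out, the rest of the argument is formal.
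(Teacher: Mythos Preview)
Your treatment of $(1)\Rightarrow(2)$ matches the paper exactly: split off the atomic part and invoke Example~\ref{ex:srd} for the redistribution $\Phi_{x_0}$ and its Part~\ref{item:IIsrd}.

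For $(2)\Rightarrow(1)$ your overall construction is also the paper's --- coproduct of the refuting coalgebras, fresh root $x_0$, successor structure at $x_0$ obtained by pushing $\Phi$ along a map sending each $z$ to its refuting state, then a lifted $2$-cell argument for $c(x_0)\mathrel{\ol{T^\partial}{\Vdash}}\alpha$ and dually. You have also correctly located the only real difficulty: the assignment $z\mapsto x_z$ need not be monotone out of $\base(\Phi)$.

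Where you diverge is in the proposed fix. You suggest rebuilding the coproduct so that the chosen states become ordered ``by logical strength''; you already note that forcing such an order may break monotonicity of the valuation, and you leave the construction unspecified. The paper sidesteps all of this with a one-line trick: replace $Z=\base(\Phi)$ by its \emph{discrete} underlying poset $Z_d$ and let $e:Z_d\to Z$ be the identity-on-elements map. The map $h:Z_d\to X^\op$, $z\mapsto x_z$, is then trivially monotone, and one sets $c(x_0)=(T^\partial h)(w^\Phi)$ for some $w^\Phi\in T^\partial Z_d$ with $T^\partial(\base(\Phi)\cdot e)(w^\Phi)=\Phi$. The only thing to check is that such a $w^\Phi$ exists, i.e.\ that $T^\partial e$ is surjective. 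This follows from the BCC assumption on $T$: the commutative square with $e$ on two sides and identities on the other two is exact (for $z\le z'$ take $w=z$), hence its $T^\partial$-image is exact, and instantiating at $\alpha=\beta$ gives surjectivity of $T^\partial e$. With this, your relational verification goes through unchanged. So the missing idea is not a reorganisation of the countermodels but simply discretising the domain of $\tau$ and using exactness to lift $\Phi$.
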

\begin{proof}
(1) implies (2).
Assume that the sequent
\[
\pi,\{\nabla\alpha\mid A\sqni\alpha\}
\Rightarrow
\{\Delta\beta\ |\ \beta\sqin B \},\lambda
\]
is refutable. Then there is a coalgebra $c:X\to TX$,
a monotone valuation $\Vdash$ on $c$ and a state $x_0$
in $X$ such that $x_0$ refutes the sequent
$\pi\Rightarrow\lambda$,
and $(x\Vdash\bigwedge\limits_{A\sqni\alpha}\nabla\alpha)$ and
$(x\nVdash\bigvee\limits_{\beta\sqin B}\Delta\beta)$.

Then the redistribution $\Phi$ of $(A,B)$, defined
in Example~\ref{ex:srd}, has the property that all the
sequents
$
p_0(z)\Rightarrow p_1(z)
$
are refutable, see
Part~\eqref{item:IIsrd} of Example~\ref{ex:srd}.

\medskip\noindent
(2) implies (1).
Assume a redistribution $\Phi$ in $\rd(A,B)$ is given.
We moreover assume that for all $z \in \base(\Phi)$, the sequent
$p_0(z)\Rightarrow p_1(z)$
is refuted by some valuation $\Vdash_z$
on a coalgebra $c_z: X_z\to TX_z$ in a state $x_z$.

We define a new coalgebra $c:X\to TX$, a point $x_0$ in $X$,
and a valuation $\Vdash$
such that $x_0$ will validate $\nabla\alpha$ for all
$\alpha$ in $A$, and $x_0$ will refute $\Delta\beta$ for
all $\beta$ in $B$.

\begin{enumerate}
\renewcommand{\theenumi}{\alph{enumi}}
\item
The definition of $c:X\to TX$ and $x_0$ in $X$.
We will denote the base of the redistribution $\Phi$ by
\[
\xymatrixcolsep{1pc}
\xymatrix{
\base(\Phi):Z
\mono[0,2]
&
&
\UU_\omega\Lang\times\LLL_\omega\Lang
}.
\]
Therefore, by the properties of base, there is a
$w_\Phi$ in $T^\partial Z$ such that
\[
T^\partial\base(\Phi)(w_\Phi) = \Phi.
\]
(and we are allowed to write here $T^\partial$ in place of $T^\partial_\omega$ because $Z$ is a finite poset).
Let $Z_d$ denote the discrete underlying poset of $Z$.
We denote by $e:Z_d\to Z$ the obviously monotone mapping
that is identity on the elements.
The monotone map $h:Z_d\to X^\op$ is defined by putting $h(z)=x_z$.

Define
\[
X=\coprod_{z \in Z_d} X_z +\bOne
\]
and denote the unique element of $\bOne$ by $x_0$.
The coalgebra map $c:X\to TX$ is defined on the individual
components as follows:
\[
\xymatrix{
\bOne
\ar[1,0]_{{\mathsf{inj}}}
\ar[1,2]^{\phantom{MM}x_0\mapsto (T^\partial h)(w^\Phi)}
&
&
\\
X
\ar[0,2]^-{c}
&
&
TX
\\
X_z
\ar[-1,0]^{{\mathsf{inj}}_z}
\ar[0,2]_-{c_z}
&
&
TX_z
\ar[-1,0]_{T{\mathsf{inj}}_z}
}
\]
where ${\mathsf{inj}}$ and ${\mathsf{inj}}_z$ are the coproduct
injections, and $w^\Phi$ is the element of $T^\partial Z_d$ with the property that
\[
T^\partial\base(\Phi) \cdot T^\partial e(w^\Phi)=\Phi.
\]
\item
The valuation $\Vdash$ on $c:X\to TX$.

In all states from all $X_z$ the valuation of atoms
remains unchanged, while in $x_0$ we satisfy
all atoms in $\pi$ and refute all atoms in $\lambda$.
This is possible because the sequent $\pi\Rightarrow\lambda$
is refutable by assumption.
\item
We need to prove $x_0\Vdash\bigwedge\nabla\alpha$
for each $\alpha$ in $A$, and
that $x_0\nVdash\Delta\beta$ for each $\beta$ in $B$.
Equivalently, we can prove that
$c(x_0)\mathrel{\ol{T^\partial}{\Vdash}}\alpha$
for each $\alpha$ in $A$, and
$c(x_0)\mathrel{\ol{T}{\nVdash}}\beta$ for each $\beta$ in $B$. We will prove only $c(x_0)\mathrel{\ol{T^\partial}{\Vdash}}\alpha$
for each $\alpha$ in $A$, the second property is verified
by dual reasoning.

We consider $\monoInline{\base(A)}{V}{\Lang}$. 
We first claim that there the following lax diagram in $\Rel$ commutes:
\[
\xymatrixcolsep{3pc}
\xymatrix{
V
\ar[0,4]|-{\object @{/}}^-{\Vdash(-,\base(A)-)}
\ar[1,0]|-{\object @{/}}_{\sqni}
&
&
&
&
X^\op
\ar[1,0]|-{\object @{/}}^{h^\diamond}
\\
\UU_\omega V
\ar[0,1]|-{\object @{/}}_-{(\UU_\omega\base(A))_\diamond}
&
\UU_\omega\Lang
\ar[0,1]|-{\object @{/}}_-{(p_0)^\diamond}
\ar@{} [-1,3]|-{\nearrow}
&
\UU_\omega\Lang\times\LLL_\omega\Lang
\ar[0,1]|-{\object @{/}}_-{(\base(\Phi))^\diamond}
&
Z
\ar[0,1]|-{\object @{/}}_-{e^\diamond}
\ar@{<-} `d[ll] `[ll]|-{\object @{/}}^-{(p_0 \cdot \base(\Phi))^\diamond} [ll]
&
Z_d
}
\]
To see that this is the case, fix a $v\in V$ and a $z$ in $Z_d$
such that $v$ and $z$ are related by the first-down-then-right
passage of the diagram.

Hence there is some $\phi$ in $\UU_\omega\Lang$ with $\phi\sqni \base(A)(v)$,
some $(\phi',\psi')$ with
$\phi'\supseteq\phi$ and therefore $\phi'\sqni \base(A)(v)$,
and $\base(\Phi)(z)\leq_{\UU_\omega\Lang\times\LLL_\omega\Lang}(\phi',\psi')$.
This entails that $p_0(\base(\Phi)(z))\sqni \base(A)(v)$, and $\base(A)(v)$ is valid in $x_z$ by assumption. Therefore the pair $v$ and $z$ are related
by the first-right-then-down passage of the diagram.

We will use the image under $\ol{T^\partial}$
of the above diagram:
\[
\xymatrixcolsep{3.5pc}
\xymatrix{
T^\partial V
\ar[0,4]|-{\object @{/}}^-{\ol{T^\partial}{\Vdash}(-,T^\partial\base(A)-)}
\ar[1,0]|-{\object @{/}}_{\ol{T^\partial}{\sqni}}
&
&
&
&
(TX)^\op
\ar[1,0]|-{\object @{/}}^{(T^\partial h)^\diamond}
\\
T^\partial\UU_\omega V
\ar[0,1]|-{\object @{/}}_-{(T^\partial \UU_\omega\base(A))_\diamond}
&
T^\partial\UU_\omega\Lang
\ar[0,1]|-{\object @{/}}_-{(T^\partial p_0)^\diamond}
\ar@{} [-1,3]|-{\nearrow}
&
T^\partial\UU_\omega\Lang\times T^\partial\LLL_\omega\Lang
\ar[0,1]|-{\object @{/}}_-{(T^\partial\base(\Phi))^\diamond}
&
T^\partial Z
\ar[0,1]|-{\object @{/}}_-{(T^\partial e)^\diamond}
&
T^\partial Z_d
}
\]
Consider a fixed $\alpha$ in $A$. Let $\monoInline{\base(\alpha)}{U}{\Lang}$, and let $u\in U$ be the unique element with $T^\partial\base(\alpha)(u) = \alpha$.
Because $\base(\alpha)\sqsubseteq \base(A)$, there is a unique $\alpha'$ in $T^\partial V$ with $\alpha = T^\partial\base(A)(\alpha')$.

Apply the diagram to $\alpha'$ in $T^\partial V$, $(T^\partial p_0)\Phi$
in $T^\partial\UU_\omega\Lang$, and the unique 
$w_\Phi$ in $T^\partial Z$ such that
\[
T^\partial\base(\Phi)(w_\Phi) = \Phi.
\]
Since $\alpha'$ and $w_\Phi$ are related by the
first-down-then-right passage of the diagram,
there exists $\xi$ in $ (TX)^\op$  with $c(x_0)\leq_{(TX)^\op}\xi$, and such that
\[
\xi\mathrel{\ol{T^\partial}{\Vdash}(-,T^\partial\base(A)-)}\alpha'.
\]
Therefore $c(x_0)\geq_{TX}\xi$
and, consequently,
\[
c(x_0)\mathrel{\ol{T^\partial}{\Vdash}(-,T^\partial\base(A)-)}\alpha'.
\]
But this is equivalent to 
\[
c(x_0)\mathrel{\ol{T^\partial}{\Vdash}(-,T^\partial\base(\alpha)-)}u,
\]
and this 
proves that $x_0$ validates $\nabla\alpha$, as required.

We remark that, in the above diagram, that since $V$ (and also $\UU_\omega V$) is a finite poset, $T^\partial$ and $T^\partial_\omega$ coincide on it. Therefore $\alpha$ and $\alpha'$ are typed properly. For the same reason, also the lifted relations $\ol{T^\partial}{\sqni}$ and $\ol{T^\partial_\omega}{\sqni}$ coincide.\qedhere 
\end{enumerate}
\end{proof}

\subsection{Completeness}
Assume a valid sequent $\phi\Rightarrow\psi$ is given. 
Recall that in a simplified notation, we can identify the finitely generated lowerset $\psi$ (and the finitely generated upperset $\phi$) with the finite (discrete) poset of its generators  $g(\psi)$ (resp.\ $g(\phi)$).
We will show by an inductive argument that the sequent is provable in the calculus ${\mathbf{G}}^T_{\nabla\Delta}$, using the invertibility of the rules of the calculus.
To this end we need to define a measure of syntactic complexity of a sequent in such a way that any backward application of a logical rule of the calculus strictly decreases the defined measure.

\begin{defi}[Measure on sequents]
\label{def:measure}
For each formula $a$ in $\Lang$, we define its complexity
on the left-hand side of a sequent, and on the right-hand side of
the sequent simultaneously as follows,
counting modal formulas as atoms with a slightly bigger complexity.
\[
\begin{array}{rclcrcl}
l(p)
&=&
0
&\phantom{MM}&
r(p)
&=&
0
\\[8pt]
l(\bigwedge\phi)
&=&
\displaystyle\sum_{a\in g(\phi)}l(a)+1
&&
r(\bigwedge\phi)
&=&
\displaystyle\sum_{a\in g(\phi)}r(a)+1
\\[15pt]
l(\bigvee\psi)
&=&
\displaystyle\sum_{a\in g(\psi)}l(a)+1
&&
r(\bigvee\psi)
&=&
\displaystyle\sum_{a\in g(\psi)}r(a)+1
\\[15pt]
l(\Delta\beta)
&=&
3
&&
r(\nabla\alpha)
&=&
3\\[8pt]
l(\nabla\alpha)
&=&
2
&&
r(\Delta\beta)
&=&
2
\end{array}
\]
For a sequent $\phi\Rightarrow\psi$ we define its complexity
as
\[
k(\phi\Rightarrow\psi) = \sum_{a\in g(\phi)}l(a)+\sum_{b\in g(\psi)}r(b),
\]
and its modal depth (recall Definition~\ref{def:subflas}) by
\[
d(\phi\Rightarrow\psi)
=
\max(\{d(a)\mid a\in g(\phi)\}\cup\{d(b)\mid b\in g(\psi)\}).
\]
Finally, we define the measure of the sequent $\phi\Rightarrow\psi$
to be the pair of natural numbers
\[
m(\phi\Rightarrow\psi) = (d(\phi\Rightarrow\psi),k(\phi\Rightarrow\psi)).
\]
We consider the pairs to be ordered \emph{lexicographically}. The measure is defined exactly as in~\cite{bpv13}.
\end{defi}

\begin{prop}\label{prop:measure}
Fix a sequent $\phi\Rightarrow\psi$.
A backward application of any rule of the calculus, except weakening rules\footnote{In a backward application of a weakening inference, the set of generators of the corresponding upperset or lowerset may actually get bigger, even if the actual upperset or lowerset does not.}, to the sequent yields sequents with strictly smaller measure.
\end{prop}

\begin{proof}
In each backward application
of one of the propositional rules, $k(\phi\Rightarrow\psi)$
strictly decreases while modal depth remains unchanged, i.e.,
the measure of any assumption is strictly smaller then the measure
of the conclusion.

We inspect the modal rules. Consider a backward application of
the $\Delta$-l rule with the conclusion
$\phi,\Delta\beta\Rightarrow\psi$. Its complexity is
\[
k(\phi,\Delta\beta\Rightarrow\psi)
=
\sum_{a\in g(\phi)}l(a)+3+\sum_{b\in g(\psi)}r(b).
\]
Complexity of any of the assumptions $\phi,\nabla(T^\partial_\omega\bigwedge)\Phi\Rightarrow\psi$ is strictly smaller:
\[
k(\phi,\nabla(T^\partial_\omega\bigwedge)\Phi\Rightarrow\psi)
=
\sum_{a\in g(\phi)}l(a)+2+\sum_{b\in g(\psi)}r(b).
\]
Since $\Phi \in T^\partial_\omega\UU_\omega\base(\beta)$,
from Remark~\ref{rem:subpropinv} it follows that
the modal depth of the sequent remains unchanged.
The case of the modal rule $\nabla$-r is similar.

Consider a backward application of the $\nabla\Delta$
rule with the conclusion
$
\{\nabla\alpha\mid A\sqni\alpha\}
\Rightarrow
\{\Delta\beta\mid\beta\sqin B \}
$.
From the type of any $\Phi$ in $\rd(A,B)$ and from
Remark~\ref{rem:subproptherule} it follows that
the measure of each of the sequents
$p_0(z^\Phi)\Rightarrow p_1(z^\Phi)$
is strictly smaller because in this case the modal depth
of the assumptions is strictly smaller than that of the conclusion.
\end{proof}

The basic idea is we can apply the rules of the calculus (except the weakening rules) to a sequent backwards, simplifying it in terms of the measure, and using the invertibility of the rules (or a weak form of invertibility in the case of the modal rule). Reaching an irreducible, atomic, sequent, we check for provability. This can be seen as a backward proof-search in a variant of the calculus with the weakening rules built in the axioms.  
Now we can finally prove that the calculus ${\mathbf{G}}^T_{\nabla\Delta}$ is sound and complete.

\begin{thm}[Soundness and Completeness of the calculus ${\mathbf{G}}^T_{\nabla\Delta}$]
Each sequent $\phi\Rightarrow\psi$ is valid
if and only if it is provable.
\end{thm}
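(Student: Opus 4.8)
The plan is to prove the two directions separately, with soundness routine and completeness the substantial half. Both rest on the machinery already assembled: the invertibility/measure statements of Propositions~\ref{prop:deltadl}, \ref{prop:nablaviadelta}, \ref{prop:sitherule} and~\ref{prop:measure}.

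For soundness I would argue by induction on the height of a proof that every provable sequent is valid. The leaves are axiom instances $a\Rightarrow b$ with $a\leq_\Lang b$, which are valid because $\Vdash$ is monotone in its formula argument. For the induction step it suffices that each rule is sound: the weakenings and the $\bigwedge,\bigvee$ rules are sound directly from the definition of validity; the rules $\nabla$-r and $\Delta$-l are sound by Propositions~\ref{prop:nablaviadelta} and~\ref{prop:deltadl}; and the rule $\nabla\Delta$ is sound by the implication $(1)\Rightarrow(2)$ of Proposition~\ref{prop:sitherule} taken with empty $\pi,\lambda$. Indeed, a refutation of the conclusion $\{\nabla\alpha\mid A\sqni\alpha\}\Rightarrow\{\Delta\beta\mid\beta\sqin B\}$ produces, via the redistribution $\Phi$ of Example~\ref{ex:srd}, a $\Phi\in\rd(A,B)$ all of whose sequents $p_0(z)\Rightarrow p_1(z)$ are refutable, so in particular the chosen premiss $p_0(z^\Phi)\Rightarrow p_1(z^\Phi)$ is refutable.

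For completeness I would prove ``valid implies provable'' by well-founded induction on the measure $m(\phi\Rightarrow\psi)=(d,k)$, relying on Proposition~\ref{prop:measure}, which guarantees that each backward rule application strictly decreases $m$ (the propositional rules and $\nabla$-r, $\Delta$-l lower $k$ while leaving $d$ fixed, whereas $\nabla\Delta$ lowers $d$). Fix a valid sequent. If a generator of $\phi$ or $\psi$ is a conjunction, a disjunction, a right-hand $\nabla$, or a left-hand $\Delta$, I apply the corresponding rule backwards; since these rules are invertible, all premisses are again valid and, by Proposition~\ref{prop:measure}, of strictly smaller measure, so the induction hypothesis makes them provable and the rule makes $\phi\Rightarrow\psi$ provable. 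Iterating, the only remaining case is a \emph{reduced} sequent $\pi,\{\nabla\alpha\mid A\sqni\alpha\}\Rightarrow\{\Delta\beta\mid\beta\sqin B\},\lambda$ with $\pi,\lambda$ atomic. Here I split on whether $\pi\Rightarrow\lambda$ is valid. If it is, then since $\pi$ is an upperset and $\lambda$ a lowerset of atoms, validity forces $\pi$ to meet $\lambda$ (otherwise the evident countermodel whose single state satisfies exactly the atoms of $\pi$ refutes it); choosing $a\in\pi\cap\lambda$ gives the axiom $a\Rightarrow a$, and weakenings on both sides yield $\pi\Rightarrow\lambda$ and then the full reduced sequent. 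If instead $\pi\Rightarrow\lambda$ is refutable, then by Proposition~\ref{prop:sitherule} the validity of the reduced sequent forces that for every $\Phi\in\rd(A,B)$ there is some $z^\Phi\in\base(\Phi)$ with $p_0(z^\Phi)\Rightarrow p_1(z^\Phi)$ valid. By the subformula property of Remark~\ref{rem:subproptherule} these sequents have strictly smaller modal depth, hence smaller measure, so the induction hypothesis makes each provable; the rule $\nabla\Delta$ with the witnesses $z^\Phi$ then proves $\{\nabla\alpha\}\Rightarrow\{\Delta\beta\}$, and two weakenings reinstate $\pi$ and $\lambda$.

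The main obstacle is organising the induction so that the case distinction is exhaustive and the measure provably decreases at each step; this is precisely what Proposition~\ref{prop:measure} together with the reduced-form analysis supplies, so once Proposition~\ref{prop:sitherule} is available the argument reduces to bookkeeping of invertibility under a well-founded induction. The only genuinely semantic point lying outside the cited results is the small-model argument in the atomic case, showing that a valid atomic sequent must have intersecting antecedent and succedent.
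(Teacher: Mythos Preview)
Your proposal is correct and follows essentially the same route as the paper's own proof: soundness by induction on proof height using the soundness directions of Propositions~\ref{prop:deltadl}, \ref{prop:nablaviadelta} and~\ref{prop:sitherule}, and completeness by well-founded induction on the lexicographic measure, with the invertible rules reducing to the reduced form and Proposition~\ref{prop:sitherule} handling that form via the $\nabla\Delta$ rule. The only cosmetic difference is that you fold the base case (purely atomic sequents) into the reduced-form case and spell out the one-point countermodel for the atomic subcase, which the paper treats separately as the measure-$(0,0)$ case and leaves implicit; note that your countermodel needs $T\bOne$ nonempty, a harmless assumption both versions tacitly make.
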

\begin{proof}
Soundness of the calculus ${\mathbf{G}}^T_{\nabla\Delta}$ can be proved by
a routine induction on the depth of a proof in the calculus,
using in particular appropriate directions
of Propositions~\ref{prop:deltadl},~\ref{prop:nablaviadelta}
and~\ref{prop:sitherule} establishing soundness of
the modal rules.

The completeness can be proved by induction on the measure.
Assume a valid sequent $\phi\Rightarrow\psi$ is given.
If the measure of the sequent is $(0,0)$, it consists
of atoms and can only be valid if there are some atoms $\phi\sqni p$ and $q\sqin\psi$ with $p\leq_\Lang q$. In that case the sequent
is provable from an axiom by weakening rules.

Suppose that $(d(\phi\Rightarrow\psi),k(\phi\Rightarrow\psi))>(0,0)$,
meaning the sequent contains at least one logical operator.
We distinguish two cases:
\begin{enumerate}
\item
$\phi\Rightarrow\psi$ is not of the reduced form. Then some of
the propositional rules, or the $\Delta$-l or the $\nabla$-r rule
can be applied to it backwards. Any such rule is
by Propositions~\ref{prop:deltadl} and~\ref{prop:nablaviadelta}
invertible and therefore such an application preserves validity.
Moreover, all the assumptions of such  a rule have strictly
smaller measure by Proposition~\ref{prop:measure}. We may therefore apply the induction hypothesis
and conclude that all the assumptions of such a rule application,
being valid, are also provable, and so is, by applying the rule
forward, its conclusion.
\item
$\phi\Rightarrow\psi$ is of the form
$
\pi, \{\nabla\alpha\mid A\sqni\alpha\}
\Rightarrow
\{\Delta\beta\mid \beta\sqin B \}, \lambda
$
with $\pi$ and $\lambda$ consisting of atomic formulas.
Then either $\pi\Rightarrow\lambda$ is valid, and therefore
provable as in the basic case. The sequent itself is then
provable by weakening rules.

Or, by Proposition~\ref{prop:sitherule}, for each redistribution
there is some $z^\Phi \in \base(\Phi)$ such that the sequent
$p_0(z^\Phi)\Rightarrow p_1(z^\Phi)$
is valid. But any such sequent has strictly smaller measure by Proposition~\ref{prop:measure}
and therefore we may apply the induction hypothesis and conclude
that any such sequent, being valid, is therefore also provable.
Then the sequent
$
\pi, \{\nabla\alpha\mid A\sqni\alpha\}
\Rightarrow
\{\Delta\beta\ |\ \beta\sqin B \}, \lambda
$
is provable by the rule $\nabla\Delta$, and weakening rules. \qedhere
\end{enumerate}
\end{proof}

\section{Concluding remarks}
\label{sec:conclusion}
We have shown that a finitary Moss' logic can be meaningfully defined for coalgebras in the category $\Pos$ of posets and monotone maps, and that it is expressive and also complete. All the definitions and proofs are parametric in the coalgebra functor $T$, which is required to be locally monotone, to preserve exact squares, and the syntax functor $T^\partial_\omega$, which is required to preserve intersections of subobjects.

There is still a lot we do not know about endofunctors of $\Pos$. Namely we are interested in the following open problems:
\begin{enumerate}
\item Characterise endofunctors of $\Pos$ that preserve exact squares.
\item Characterise endofunctors of $\Pos$ that preserve order
embeddings.
\item Among the endofunctors of $\Pos$ that preserve order embeddings, characterise those such endofunctors that moreover preserve finite intersections.
\end{enumerate}
There are two interesting questions regarding the logic introduced in this paper that we have not addressed and leave for potential future study:
\begin{enumerate}
\item Is it the case that for a posetification $T_\omega^\prime$ (see~\cite{balan+kurz}) of a $\Set$ endofunctor $T_\omega$, the Moss' logic for $T^\prime$-coalgebras as introduced in this paper is the positive fragment of the Moss' logic for $T$-coalgebras? The corresponding result for the logic of predicate liftings was obtained in~\cite{balan+kurz+velebil}.
\item What is the relation of the Moss' logic for coalgebras for $\Pos$-endofunctors to the logic of all monotone predicate liftings considered in~\cite{kkvpos:12}?
\end{enumerate}
Let us briefly comment on (1). First of all, our definition of the language in Subsection~\ref{ssec:lang} uses $\LLL_\omega$ and $\UU_\omega$ for arities of disjunction and conjunction. We have explained that this makes perfect sense in the poset setting in Remark~\ref{rem:arities} (it is worth noting that even with a discrete poset of atomic propositions, our construction produces an ordered language). But these functors are not poset extensions or liftings of the finitary powerset functor in $\Set$\footnote{The functors  $\LLL,\UU$ can however be obtained as quotients of certain liftings of the powerset functor to the category of preorders and monotone maps~\cite{balan+kurz}.}, which classical Moss' logic uses as arity of both the connectives. While we could in principle use the self-dual finitary convex powerset functor $\PP^c_\omega$, which is the posetification of the finitary powerset functor in $\Set$~\cite{balan+kurz}, for the arity of both conjunction and disjunction (and this choice would actually produce a discrete language, starting from a discrete poset of atomic propositions), this seems to us to make sense precisely for the purpose of studying positive fragments of $\Set$-based coalgebraic logic (which is not the aim of this paper), and not so much for studying the $\Pos$-based coalgebraic logic. Had we done so, it would not be difficult to adjust the rest of the syntactic machinery, including the Definition~\ref{def:collections_L_R} of the collections $R_\alpha$ and $L_\beta$ and Definition~\ref{def:rd} of a redistribution, syntactic shape of sequents, and the formulation of the rules of the calculus accordingly. What is more important, under this modification, all the proofs remain correct. 
We will outline the resulting formalism in the following subsection.

\subsection{Positive fragments}\label{ssec:positivefragments}
We will see how the machinery described in this paper, modulo one alternation concerning arities, can capture positive (i.e.\ $\neg$-free) fragments of finitary Moss' logic in $\Set$, as presented in~\cite{bpv13}. All functors in this section, including the coalgebra functors, are assumed to be finitary.

\begin{asm}\label{asm:fragments-syntax}
Solely for the purpose of this subsection, we assume that the arities of the conjunction and disjunction in $\Pos$ are given by the finitary convex subset functor $\PP^c_\omega$. 
\end{asm}

Let $D: \Set \to \Pos$ be the discrete functor. 
We consider a finitary coalgebra $\Pos$ functor $T'$ to be the canonical extension --- the posetification ---  of a fixed finitary standard $\Set$ functor $T$ preserving weak pull-backs. $T'$ is defined as a completion w.r.t. $\Pos$-enriched colimits~\cite{balan+kurz}, and it preserves exact squares. $T'$ is an extension of $T$, meaning that $T' D \cong D T$. In particular, $T'$ applied to a discrete poset yields a discrete poset. Observe that posetifications are self-dual on discrete posets:
\[
T'^\partial DX = (T'(DX)^\op)^\op \cong (T'DX)^\op \cong (DTX)^\op \cong DTX \cong T'DX.
\]
Therefore we can use $T'$ on the syntax side as the arity of both nabla and delta modalities, as long as the language is discrete. 

\paragraph{Language.} We fix a set of propositional atoms $\At^\Set$. Therefore $D\At^\Set$  can be used as a discrete poset $\At^{\Pos}$ of propositional atoms. 
The definition of the finitary Moss' language in $\Set$~\cite[5.1]{kkv12}, namely its $\neg$-free fragment which we denote $\Lang^{\Set}$, can be seen as an algebra for $\PP_\omega+\PP_\omega+T+ T$,
free on $\At^\Set$ (computed in $\Set$). The four components correspond to arities of conjunction, disjunction, nabla, and the delta modality. We aim at a definition of a discrete language $\Lang^\Pos$ so that $D\Lang^{\Set} \cong  \Lang^{\Pos}$. For conjunction $\bigwedge: \PP_\omega\Lang^\Set \to \Lang^\Set$, we aim at its $\Pos$ counterpart being $D\bigwedge: \PP^c_\omega D\Lang^\Set \to D\Lang^\Set$, similarly for disjunction. For the nabla modality $\nabla_T: T\Lang^\Set \to \Lang^\Set$  we aim at its $\Pos$ counterpart being $\nabla_{T'}= D\nabla_T: T'D\Lang^\Set \to D \Lang^\Set$, similarly for delta.
We therefore define the language $\Lang^{\Pos}$ to be an algebra for $\PP^c_\omega+\PP^c_\omega+T'+ T'$, free on $\At^\Pos$ in $\Pos$. Observe that $\Lang^{\Pos}$ is indeed a discrete poset, because for two convex subsets of a discrete poset, $u\leq^{EM}v$ implies $u = v$, and $T'$ applied to a discrete poset yields a discrete poset.  It follows that indeed $D\Lang^{\Set} \cong  \Lang^{\Pos}$.

Sequents in $\Set$~\cite{bpv13} are pairs of finite subsets of $\Lang^\Set$ written as $\phi\Rightarrow \psi$. For the purpose of this subsection only, sequents in $\Pos$ are pairs of finite convex subsets of $\Lang^\Pos$. Translation of the above sequent will be written as $D\phi\Rightarrow D\psi$. This notation is justified by the following paragraph:

\paragraph{Elements and bases.} Assume that $\alpha$ is an element of $T\Lang^\Set$: $\monoInline{\alpha}{\One}{T\Lang^\Set}$. Applying $D$ to it yields an element of $T'\Lang^\Pos$:  $\monoInline{D\alpha}{D\One}{DT\Lang^\Set\cong T'\Lang^\Pos}$. Similarly for $\phi,\psi$ in $\PP_\omega\Lang^\Set$ we have $D\phi,D\psi$ in $\PP^c_\omega\Lang^\Pos$, and for $\Phi,\Psi$ in $\PP_\omega\PP_\omega\Lang^\Set$ we have $D\Phi,D\Psi$ in $\PP^c_\omega\PP^c_\omega\Lang^\Pos$. Here we again abuse the notation slightly and denote by $\alpha$ or $D\alpha$ both the inclusion map and its image.
Assume the base of $\alpha$ in $T\Lang^\Set$ is given by $\monoInline{\base^T(\alpha)}{Z}{\Lang^\Set}$. Then we obtain the base of $D\alpha$ as $\monoInline{\base^{T'}(D\alpha)=D\base^T(\alpha)}{DZ}{\Lang^\Pos}$. 

\paragraph{Relations and their lifting.}
For a relation $R$ in $\Rel(\Set)$
\[
\xymatrix@1{
R:
Y
\ar[0,1]|-{\object @{/}}
&
X
}
\]
we obtain a monotone relation $DR$ in $\Rel(\Pos)$ by applying the functor $D$ to the span corresponding to $R$ to yield:
\[
\xymatrix@1{
DR:
DY
\ar[0,1]|-{\object @{/}}
&
DX.
}
\]
The lifting of $R$ by $T$ in $\Rel(\Set)$ is given by the composite
\[
\xymatrix@1{
\ol{T}R:
TX
\ar[0,2]|-{\object @{/}}^(.6){(Tp_1)^\diamond}
&
&
TR
\ar[0,2]|-{\object @{/}}^{(Tp_0)_\diamond}
&
&
TY.
}
\]
Applying the functor $D$ to its underlying span we obtain the $T'$ lifting of $DR$
\[
\xymatrix@1{
\ol{T'}DR:
T'DX\cong DTX
\ar[0,2]|-{\object @{/}}^(.6){(T'Dp_1)^\diamond}
&
&
T'DR\cong DTR
\ar[0,2]|-{\object @{/}}^{(T'Dp_0)_\diamond}
&
&
T'DY\cong DTY.
}
\]
From the above diagrams one can conclude the following:
\begin{equation}\label{eq:fragment-lifting}
\ol{T}R (\beta,\alpha) \ \mbox{ iff }\ \ol{T'}DR(D\beta, D\alpha).
\end{equation}

\paragraph{Coalgebras and valuations.} Given a $T$-coalgebra $c: X \to TX$ in $\Set$, we can apply the functor $D$ to it to obtain the $T'$- coalgebra $Dc: DX \to T'DX$ in $\Pos$. Given a valuation of propositional variables on the coalgebra $c$~\cite[Definition 3.5.]{bpv13}, we can see it as a relation
$
\xymatrix@1{\Vdash:\At^\Set
\ar[0,1]|-{\object @{/}}
&
X
}
$, 
we obtain a corresponding valuation on $Dc$ as  
$
\xymatrix@1{D{\Vdash}:\At^\Pos
\ar[0,1]|-{\object @{/}}
&
(DX)^\op
}
$.
Observing carefully how the connectives $\bigwedge,\bigvee$ and $\nabla,\Delta$ are interpreted in~\cite[Definition 3.5.]{bpv13} and in this paper in Subsections~\ref{ssec:semant} and~\ref{ssec:delta}, we observe that the correspondence carries out to link  $
\xymatrix@1{\Vdash:\Lang^\Set
\ar[0,1]|-{\object @{/}}
&
X
}
$ and 
$
\xymatrix@1{\mathrel{D{\Vdash}}:\Lang^\Pos
\ar[0,1]|-{\object @{/}}
&
(DX)^\op
}
$
so that we can prove by induction that for each sequent
\begin{equation}\label{eq:fragments-coalgebras-equality}
c,x \Vdash \phi\Rightarrow\psi \ \mbox{ iff }\ Dc,x \mathrel{D{\Vdash}} D\phi\Rightarrow D\psi.
\end{equation}
(cf.~\cite[Definition 3.18.]{bpv13}, and a definition in Subsection~\ref{ssec:proof-system} of a valid sequent.)
We spell out the case for the nabla modality (the other cases are simpler or similar):
\begin{align*}
    c,x \Vdash \nabla_T\alpha \ &\mbox{ iff }\ c(x)(\ol{T}\Vdash)\alpha \\ &\mbox{ iff }\ Dc(x)(\ol{T'}D{\Vdash})D\alpha \ \mbox{ by (\ref{eq:fragment-lifting})}\\
    &\mbox{ iff }\ Dc,x \mathrel{D{\Vdash}} \nabla_T'(D\alpha).
\end{align*}
Before we go further, we sum up the correspondence between the $\Set$-based setting of the finitary Moss' logic of~\cite{bpv13} and the $\Pos$-based setting of the finitary Moss' logic of this paper (with the arities of conjunction and disjunction adjusted) in the following table:
\medskip
\begin{center}
\begin{tabular}{|l|l|}
\hline
   $T: \Set \to \Set$  & $T': \Pos \to \Pos$ \\
   $\Lang^\Set$ & $\Lang^\Pos$ \\
    $\alpha$ in $T\Lang^\Set$ & $D\alpha$ in  $T'\Lang^\Pos$\\
    $\phi,\psi$ in $\PP_\omega\Lang^\Set$ &  $D\phi,D\psi$ in $\PP^c_\omega\Lang^\Pos$\\
    $\Phi,\Psi$ in $T\PP_\omega\Lang^\Set$ &  $D\Phi,D\Psi$ in $T'\PP^c_\omega\Lang^\Pos$\\
    $\base^T(\alpha)$ & $\base^{T'}(D\alpha)$ \\
    $\bigwedge: \PP_\omega\Lang^\Set \to \Lang^\Set$ & $D\bigwedge: \PP^c_\omega\Lang^\Pos \to \Lang^\Pos$\\
    $\bigvee: \PP_\omega\Lang^\Set \to \Lang^\Set$ & $D\bigvee: \PP^c_\omega\Lang^\Pos \to \Lang^\Pos$\\
    $c: X\to TX$ & $Dc: DX\to T'DX$ \\
    $\xymatrix@1{\Vdash:\Lang^\Set
\ar[0,1]|-{\object @{/}}
&
X
}$ & $\xymatrix@1{D{\Vdash}:\Lang^\Pos
\ar[0,1]|-{\object @{/}}
&
(DX)^\op
}$ \\
    $
\xymatrix@1{
\in:
\PP_\omega \Lang^\Set
\ar[0,1]|-{\object @{/}}
&
\Lang^\Set
}
$ & $
\xymatrix@1{
D{\in}:
\PP^c_\omega \Lang^\Pos
\ar[0,1]|-{\object @{/}}
&
\Lang^\Pos
}
$\\
$\ol{T}R$ & $\ol{T'}DR$\\
    \hline
\end{tabular}
\end{center}
\paragraph{Redistributions.} 
First let us cover the syntactic constructions behind the $\nabla$-r and $\Delta$-l rules. By~\cite[Definition 3.14]{bpv13}, the collections $L_T(\alpha)$ and $R_T(\beta)$ are defined as follows:
\begin{align*}
    L_T(\alpha) &:=   \{ (T\bigwedge)\Phi\ |\ \Phi\in T\base^T(\alpha); \neg (\alpha \mathrel{\ol{T}{\notin}} \Phi) \}    \\
    R_T(\beta) &:=   \{ (T\bigvee)\Psi\ |\ \Psi\in T\base^T(\beta) ; \neg (\beta \mathrel{\ol{T}{\notin}} \Psi) \} 
\end{align*}
Transferring this to $\Pos$, we obtain
\begin{align*}
   D[L_T(\alpha)] &:=   \{ (T'D\bigwedge)D\Phi\ |\ D\Phi\in T'\base^{T'}(D\alpha); \neg (D\alpha \mathrel{\ol{T'}D{\notin}} D\Phi) \}    \\
  D[R_T(\beta)] &:=   \{ (T'D\bigvee)D\Psi\ |\ D\Psi\in T'\base^{T'}(D\beta) ; \neg (D\beta \mathrel{\ol{T'}D{\notin}} D\Psi) \} 
\end{align*}
Comparing this with Definition~\ref{def:collections_L_R}, using the above table, we see that this provides us with the right definition when the arities of conjunction and disjunction are altered to be $\PP^c_\omega$, putting:
\begin{align}\label{eq:fragments-RL}
 D\Phi \in L_{T'}(D\alpha) \ &\mbox{ iff } \ (T\bigwedge)\Phi \in L_T(\alpha)\\
 D\Psi\in R_{T'}(D\beta) \ &\mbox{ iff } \ (T\bigvee)\Psi \in R_T(\beta)
\end{align}
Turning to redistributions, from~\cite[Definition 3.20.]{bpv13} of slim redistribution, and its generalization for two-sided sequents explained on p.~44 therein, we can extract the following: Let $(A,B)$ be an element of $P_\omega T\Lang^\Set \times P_\omega T\Lang^\Set$. A $\Phi$ in $T(\PP_\omega \base^T(A) \times \PP_\omega \base^T(B))$ is a slim redistribution of $(A,B)$ iff 
\begin{align*}
    (\forall \alpha \in A) \  \alpha \mathrel{\ol{T}{\in}} (Tp_0)\Phi \\
    (\forall \beta \in B)\    \beta \mathrel{\ol{T}{\in}} (Tp_1)\Phi,
\end{align*}
where $p_0,p_1$ are the projections of the product $\PP_\omega \base^T(A) \times \PP_\omega \base^T(B)$. Transferring to $\Pos$ we obtain that for $(DA,DB)$ in $P^c_\omega T'\Lang^\Pos \times P^c_\omega T'\Lang^\Pos$ and $\Phi$ a slim redistribution as above, $D\Phi$ in $T'(\PP^c_\omega \base^{T'}(DA) \times \PP^c_\omega \base^{T'}(DB))$ satisfies
\begin{align*}
    (\forall D\alpha \in DA) \  D\alpha \mathrel{\ol{T'}D{\in}} (T'Dp_0)\Phi, \\
    (\forall D\beta \in DB)\    D\beta \mathrel{\ol{T'}D{\in}} (T'Dp_1)\Phi.
\end{align*}
Comparing this with Definition~\ref{def:rd}, using the table above, we see again that this fits with $D\Phi$ being a redistribution of $(DA,DB)$, modulo the arities of conjunction and disjunction being altered to be $\PP^c_\omega$:
\begin{equation}\label{eq:fragments-rd}
 \Phi \in \srd (A,B) \ \mbox{ iff }\ D\Phi \in \rd (DA,DB).
\end{equation}
\paragraph{Rules of the calculus.} Using the transfer machinery described above, we can translate proofs of positive sequents from the two-sided calculus $G2_T$ of~\cite[Definition 5.1.]{bpv13} to the calculus $G^{T'}_{\nabla\Delta}$ of this paper (to be precise, to the calculus we obtain by systematically replacing the arities of conjunctions and disjunctions in $G^{T'}_{\nabla\Delta}$ with $\PP^c_\omega$). We can do so inductively, rule by rule. We will spell out the cases for the modal rules, and leave the propositional part of the calculus to be verified by the reader (propositional part of $G2_T$ can be found in~\cite[Figure 1.]{bpv13}, and that of $G^{T'}_{\nabla\Delta}$ in Definition~\ref{def:calculus_prop}).

The following two rules of $G2_T$ are those of~\cite[Definition 5.1.]{bpv13}, adapted to our current notation:
\begin{equation*}
\Inference{\{\phi\Rightarrow\Delta^T(T\bigvee)\Psi,\psi\mid(T\bigvee)\Psi\mbox{ in }R_T(\alpha) \}}
          {\phi\Rightarrow\nabla^T\alpha,\psi}
          {\text{$\nabla^T$-r}}
\end{equation*}
\begin{equation*}
\Inference{\{\phi,\nabla^T(T\bigwedge)\Phi\Rightarrow\psi\mid(T\bigwedge)\Phi\mbox{ in }L_T(\beta)\}}
          {\phi,\Delta^T\beta\Rightarrow\psi}
          {\text{$\Delta^T$-l}}
\end{equation*}
They translate into the following instances of the two rules of the modified (as for the arities of conjunctions and disjunctions) $G^{T'}_{\nabla\Delta}$ from Definition~\ref{def:modalrules}
\begin{equation*}
\Inference{\{D\phi\Rightarrow\Delta^{T'}(T'D\bigvee)D\Psi,D\psi\mid D\Psi\mbox{ in }R_{T'}(D\alpha) \}}
          {D\phi\Rightarrow\nabla^{T'} D\alpha,D\psi}
          {\text{$\nabla^{T'}$-r}}
\end{equation*}

\begin{equation*}
\Inference{\{D\phi,\nabla^{T'}(T'D\bigwedge)D\Phi\Rightarrow D\psi\mid D\Phi\mbox{ in }L_{T'}(D\beta)\}}
          {D\phi,\Delta^{T'}D\beta\Rightarrow D\psi}
          {\text{$\Delta^{T'}$-l}}
\end{equation*}
The following rule of $G2_T$ is the one of~\cite[Definition 5.1.]{bpv13}, adapted to our current notation (in particular, $p_0(z^\Phi)$ is originally denoted by $A_L^{\Phi}$ and $p_1(z^\Phi)$ is denoted by $B_R^{\Phi}$ and the notation explained on p.~47 (18).)
\[
\Inference{
           \{
           p_0(z^\Phi)
           \Rightarrow
           p_1(z^\Phi)
           \mid
           \Phi\mbox{ in }\srd(A,B)
           \}
           }
          {
          \{
          \nabla_T\alpha\mid \alpha\in A
          \}
          \Rightarrow
          \{
          \Delta_T\beta\mid\beta\in B
          \}
          }
          {\text{$T(\nabla\Delta)$}}
          \text{$\forall \Phi.\ z^\Phi \in \base^T(\Phi)$}
        \]
It translates into the following instance of the $T'(\nabla\Delta)$ rule of Definition~\ref{def:therule}
\[
\Inference{
           \{
           Dp_0(Dz^\Phi)
           \Rightarrow
           Dp_1(Dz^\Phi)
           \mid
           D\Phi\mbox{ in }\rd(DA,DB)
           \}
           }
          {
          \{
          \nabla_{T'}D\alpha\mid D\alpha\in DA
          \}
          \Rightarrow
          \{
          \Delta_{T'}D\beta\mid D\beta\in DB
          \}
          }
          {\text{$T'(\nabla\Delta)$}}
          \text{$\forall \Phi.\ (Dz)^\Phi \in \base^{T'}(D\Phi)$}
        \]
To sum up, using the above, we can state the following theorem
\begin{thm}\label{thm:fragments}
Let $\phi,\psi$ be finite subsets of $\Lang^\Set$ (i.e., the $\neg$-free fragment of the classical Moss' coalgebraic language). 
  \[
\vdash_{G2_T}\phi\Rightarrow\psi\ \mbox{ iff } \vdash_{G^{T'}_{\nabla\Delta}}D\phi\Rightarrow D\psi.
  \]
\end{thm}
\begin{proof}
The left-right direction is proven by induction on the proof in $G2_T$, translating it step-by-step using the discrete functor as described above.

The right-left direction is proven by contraposition, using the completeness of the two calculi. Assume that $\nvdash_{G2_T}\phi\Rightarrow\psi$. Then there is a coalgebra $c: X \to TX$, a valuation of $\Lang^\Set$, and a state $x$ so that 
  \[
c,x \nVdash \phi\Rightarrow\psi.
  \]
Then, by~\ref{eq:fragments-coalgebras-equality}, the coalgebra $Dc: DX \to T'DX$, the translated valuation of $\Lang^\Pos$, and a state $x$ refute the sequent
  \[
Dc,x \nVdash D\phi\Rightarrow D\psi. \qedhere
\]
\end{proof}
\begin{exa}\label{ex:positive-Kripke}
Consider the finitary convex powerset functor $\PP^c_\omega$ as the coalgebra functor. $\PP^c_\omega$ is the posetification of the finitary powerset functor $\PP_\omega$, whose coalgebras in $\Set$ correspond to image-finite Kripke frames. Thus, adapting the arities of conjuctions and disjunctions in the calculus $G^{\PP^c_\omega}_{\nabla\Delta}$ captures the positive fragment of the finitary Moss' logic over image-finite Kripke frames (whose complete proof theory is provided by the calculus $G2_{\PP_\omega}$ of~\cite{bpv13}).
\end{exa}

\bibliographystyle{alphaurl}
\bibliography{bibliography}

\end{document}